\title[Integrable evolutions of twisted polygons]{Integrable evolutions of twisted polygons in centro-affine $\RR^m$}
\author{Gloria Mar\'i Beffa \& Annalisa Calini}
\address{Mathematics Department, University of Wisconsin, Madison, WI USA}
\address{Department of Mathematics, College of Charleston, Charleston, SC  USA}
\email{maribeff@math.wisc.edu, calinia@cofc.edu}
\newtheorem{theorem}{Theorem}
\newtheorem{conj}[theorem]{Conjecture}
\numberwithin{theorem}{section}
\newtheorem{corollary}[theorem]{Corollary}
\newtheorem{cor}[theorem]{Corollary}
\newtheorem{lemma}[theorem]{Lemma}
\newtheorem{lem}[theorem]{Lemma}
\newtheorem{proposition}[theorem]{Proposition}
\newtheorem{prop}[theorem]{Proposition}  
\theoremstyle{definition}
\newtheorem{definition}[theorem]{Definition}
\newtheorem{example}[theorem]{Example}
\newtheorem*{remark}{Remark}
\def\RR{\mathbb R}
\def\RP{\mathbb{RP}}
\def\ZZ{\mathbb Z}
\def\NN{\mathbb N}
\def\SL{\mathrm {SL}}
\def\r{\mathbf{r}}
\def\gam{\gamma}
\def\ddt{\dfrac{\mathrm{d}}{\mathrm{d}t}}
\def\T{\mathcal{T}}
\def\V{\mathcal{V}}
\def\Rc{\mathcal{R}}
\def\M{\mathcal{M}}
\def\F{\mathcal{F}}
\def\H{\mathcal{H}}
\def\K{\mathcal{K}}
\def\Pc{\mathcal{P}}
\def\sl{\mathfrak{sl}} 
\def\e{\mathbf{e}}
\def\p{\mathbf{p}}
\def\g{\mathfrak{g}}
\def\h{\mathfrak{h}}
\def\k{\mathbf{k}}
\def\a{\mathbf{a}}
\def\0{\mathbf{0}}
\def\I{\mathbf{I}}
\def\X{{\widehat{X}}}
\def\Y{{\widehat{Y}}}
\def\Z{{\widehat{Z}}}
\def\Vh{\widehat{V}}
\def\hr{\hat\r}
\def\ha{\hat a}
\def\hb{\hat b}
\def\hc{\hat c}
\def\hv{\hat v}
\def\hw{\hat w}
\def\rd{\mathrm d}
\def\v{\mathbf{v}}
\def\w{\mathbf{w}}
\begin{document}
\begin{abstract} We show that discrete $W_m$ lattices are bi-Hamiltonian, using  geometric realizations of discretizations of the Adler-Gel'fand-Dikii flows as local evolutions of arc length-parametrized polygons in centro-affine space. We prove the compatibility of two known Hamiltonian structure defined on the space of geometric invariants by lifting them to a pair of pre-symplectic forms on the space of  arc length parametrized  polygons. The simplicity of the expressions of the pre-symplectic forms makes the proof of compatibility straightforward. We also study their kernels and possible integrable systems associated to the pair.

\end{abstract}
\maketitle
\section{Introduction}

This article is motivated by the study of geometric evolution equations for 
continuous and discrete  curves, that are invariant under a given group of symmetries and that induce completely integrable flows on the space of geometric invariants, viewed as coordinates of the moduli space of curves under the action of the symmetry group.  
Many well-known completely integrable partial differential equations, including the sine-Gordon, KdV, MKdV, and NLS equations, have  geometric realizations as curve flows in different geometries (e.g.~hyperbolic for sine-Gordon, centro-affine for KdV, spherical for mKdV, and Euclidean for NLS)  \cite{CI98, P95, DS94, Has72, Lamb80}. 

In the discrete setting, the connection between surfaces and integrable systems has been fairly well explored, as discussed in the monograph by Bobenko and Suris~\cite{BoSu2008},  where discrete surfaces are regarded as two-dimensional layers of multi-dimensional lattices, and the consistency of the underlying equations and geometric properties leads to complete integrable systems and their associated B\"acklund-Darboux transformations.

A general framework for integrable (discrete and continuous) evolutions  of discrete curves, that is not a byproduct of surface theory, is not yet available, although some notable examples have been studied in depth, including the \emph{Pentagram Map}~\cite{OST2013},  
and promising approaches have been proposed, including the use of discrete moving frames to generate Hamiltonian pairs for continuous evolutions of polygons~\cite{MMW, MW}.

The work presented here focuses on a discrete analogue of the so-called $m-1/m$-Adler-Gel'fand-Dikii (AGD) flows, generalizations of the KdV hierarchy first introduced in~\cite{Lax76} and studied in~\cite{Adler7879}. These flows have realizations as local projective flows of curves in $\RP^{m-1}$; within this context, the moving frame approach  leads to a natural description of the associated Hamiltonian structures, defined on the moduli space of projective curves \cite{MB}.
Recently, Mar\'i-Beffa and Wang introduced certain discretisations of the $m-1/m$-AGD flows as well as their realizations as local evolutions of twisted projective polygons in $\RP^{m-1}$~\cite{MW}. By reducing a twisted Poisson structure and a second, \emph{not compatible}, bracket (see \cite{semenov85}) to  the moduli space of twisted  (i.e. quasi-periodic with fixed monodromy) projective polygons, coordinatised in terms of the entries of the Maurer-Cartan matrix of an appropriately chosen moving frame, they also generated two Hamiltonian structures (candidates for discrete $W_m$-algebras). However, a proof of whether these structures were compatible or not remained elusive for general dimension.

In this article, we prove that the pair of Hamiltonian structures defined in~\cite{MB} are compatible in any dimension (Theorem~\ref{compatibility}), and thus the related discrete $m-1/m$-AGD flows are bi-Hamiltonian. We achieve this by lifting the Poisson brackets to a pair $\omega_1, \omega_2$ of pre-symplectic forms defined on the moduli space of arc-length parametrized twisted polygons in centro-affine $\RR^{m}$. The expressions for the pre-symplectic forms are remarkably simple and the general proof of compatibility of the Hamiltonian structures for the $m-1/m$-AGD is straightforward.

At the heart of why shifting the point of view from flows on geometric invariants (curvatures) to geometric flows (polygonal evolutions, invariant under the action of the symmetry group) can help simplify the study of integrability, are several observations that are central to this work. First, Proposition~\ref{leaves} shows that the symplectic leaves of the twisted Poisson bracket are classified by the conjugacy classes of the monodromy of the associated polygon.  Thus, any invariant vector field on the space of polygons will induce a vector field on the geometric invariants that is automatically tangent to the symplectic leaves of the Poisson structure. (Note, however, that the converse is not true: not every flow on periodic geometric invariants will lift to an invariant---monodromy preserving---geometric flow.) Second, as both  $\omega_1$ and $\omega_2$ are trivially reducible to the moduli space, they will induce Poisson structures on the submanifolds of fixed monodromy conjugacy class. Third, Theorem~\ref{kernel} and Corollary~\ref{rigidmotion} show that the kernel of $\omega_1$ is generated by the infinitesimal symmetries and thus $\omega_1$ becomes symplectic when restricted to these submanifolds, suggesting the possibility of generating additional integrable hierarchies beginning with appropriate ``seed" vector fields. We remark, as also pointed out in \cite{MW}, that the {\em Pentagram Map} and its generalizations do not appear to be Hamiltonian with respect to the structures investigated in this work. At this time, connections to other projective polygonal dynamics are yet to be uncovered.

\subsection*{Outline.} In more detail, we summarize the body of the article. In \S2 we illustrate the geometric background, using the 3-dimensional case to introduce notation and key definitions. In \S3 we describe a pair of Poisson brackets on the moduli space of arc length-parametrized twisted polygons, originally introduced in~\cite{MW} in terms of a different, but equivalent, set of coordinates.  \S4 contains the main results of this work. In this section, we define a pair of  pre-symplectic forms that realize the lifts of the Poisson brackets to the space arc length-parametrized twisted polygons (Theorems~\ref{omega2br} and ~\ref{redth}). In Theorem~\ref{compatibility} we present a simple proof of the compatibility of the Poisson brackets as a direct consequence of their representations in terms of the pre-symplectic forms. In \S5 we describe the generators of the kernels of the pre-symplectic forms,  their associated Hamiltonians with respect to their companion form, and we propose a possible approach to the construction of additional integrable hierarchies. The final section contains a discussion of the main results and some open questions.

\section{Background: Evolutions of Polygons in  Centro-affine Space}\label{S2}

It is instructive to consider the $3$-dimensional case first. We define the {\it centro-affine} action of $\SL(3,\RR)$ on $\RR^3$ to be the standard linear action given by multiplication
\( g\cdot x = gx\), where $g\in \SL(3,\RR)$ and $x \in \RR^3$.
Let $\gam: \ZZ \to \RR^3$ be a discrete space curve,  assumed to satisfy
 the following non-degeneracy condition: any three successive vertices $\gam_n$, $\gam_{n+1}$, $\gam_{n+2}$ are linearly independent
as vectors. Although not necessary for parts of the discussion to follow, we also assume that $\gam$ is {\it twisted} (or \emph{quasi-periodic}), that is,
\begin{equation}
\label{mono}
\gam_{n+N} = T \gam_n, \qquad n\in \ZZ,
\end{equation}
for a given matrix $T$ 
in $\SL(3,\RR)$, the \emph{monodromy}, and minimal $N\in \NN$, the \emph{period}. When $T=$Id, the identity matrix,  $\gamma$ is a closed polygon of period $N$.

\subsection*{Moving Frames and Invariants}
Given $\gam = \{ \gam_n \}$ a twisted polygon in $\RR^3$, we introduce the centro-affine moving frame\footnote[2]{See \cite{MMW} for the general definition of discrete moving frame.}  $\rho(\gam) = \big\{\rho_n(\gam)\big\}$ with
\[
\rho_n(\gam) =  \begin{pmatrix}\gamma_n &\gamma_{n+1}& \dfrac{1}{d_n} \gam_{n+2}\end{pmatrix}\in SL(3, \RR),
\]
where the invariant
\[
d_n = |\gam_{n}, \gam_{n+1}, \gam_{n+2} |:=\det( \gam_{n}, \gam_{n+1}, \gam_{n+2} ) 
\]
is the \emph{discrete centro-affine arc length} at the $n$-th vertex. (For sake of convenience, we will use the notation $|A|:=\det(A)$ for matrix determinant.)
From now on we will take $\gam$ to be {\it  arc length parametrized}, i.e.~we assume that $d_n =1$ for all $n$. 

%
The group $G=\SL(3,\RR)$  acts on polygons by diagonal action $g\cdot\gam =  \{g\gam_n\}_{n=1}^N$,  and on the associated moving frame by  $ g \cdot \rho(\gam)= \rho(g\cdot\gam)$, making $\rho$ a  \emph{left} moving frame (i.e. equivariant with respect to the left action of $G$ on itself).

The components of the moving frame $\rho$ satisfy an equation of the form
\begin{equation}\label{rhoK}
\rho_{n+1} = \rho_n K_n,
\end{equation}
where $K_n$ is  the {\it Maurer-Cartan left invariant matrix}. We call~\eqref{rhoK} the Frenet-Serret equations for $\gam$.

 It follows from $|\gam_{n}, \gam_{n+1}, \gam_{n+2}|=|\gam_{n+1}, \gam_{n+2}, \gam_{n+3}|=1$ that $\gamma_{n+3}-\gamma_n$ must be a linear combination of $\gam_{n+1}$ and $\gam_{n+2}$, thus:
\begin{equation}\label{gam3}
\gam_{n+3} = \tau_n \gam_{n+2} + k_n \gam_{n+1}+ \gam_{n},
\end{equation}
where $k_n := |\gam_{n}, \gam_{n+3}, \gam_{n+2}|$ and $\tau_n := | \gam_{n}, \gam_{n+1}, \gam_{n+3}|$ are periodic functions of $n$ of period $N$ and completely determine $\gamma$ up to the action of the group $G$. In other words, $\{ k_n\}$ and $\{ \tau_n\}$ are  a complete set of geometric invariants of the polygon.

From equations~\eqref{rhoK} and~\eqref{gam3} it follows that 
\begin{equation}
\label{K2D}
K_n = \begin{pmatrix} 0 & 0&1\\ 1 & 0 & k_n\\ 0&1&\tau_n \end{pmatrix},
\end{equation}
with $K_{n+N}=K_n$. Moreover, from $\rho_{n+N}=\rho_n K_n K_{n+1} \ldots K_{n+N-1}$,  we find that the product
\[
 K_{1} K_{2} \dots K_N=\rho_1^{-1} T \rho_1
\]
must belong to the conjugacy class of the monodromy. 

\subsection{Evolutions of Polygons and Invariants}
\label{3devol}
Consider a continuous time evolution of the discrete curve, that is also invariant under the centro-affine action (i.e.~group elements take solutions to solutions). Then, the evolution equations take the form:
\begin{equation}\label{vf}
\ddt \gam_n = c_n \gam_n + b_n \gam_{n+1} + a_n\gam_{n+2} = \rho_n \r_n,
\end{equation}
where the components of $\r_n = (c_n, b_n, a_n)^T$,  are functions of the geometric invariants $\{ k_n\}$ and $\{ \tau_n\}$. It follows that the evolution equations of moving frame have the form:
\begin{equation}\label{evolframe3d}
\ddt \rho_n = \rho_n Q_n,
\end{equation}
with
\begin{equation}\label{Q3}
Q_n = \begin{pmatrix} \r_n & K_n \r_{n+1} & K_n K_{n+1} \r_{n+2}\end{pmatrix},
\end{equation}
The expression for $Q_n$ is obtained directly by computing
\[
(\gamma_{n+1})_t = \rho_{n+1} \r_{n+1} = \rho_n K_n \r_{n+1}, \quad\quad (\gamma_{n+2})_t = \rho_{n+2} \r_{n+2} = \rho_n K_nK_{n+1}\r_{n+2},
\]
using equations~\eqref{rhoK} and~\eqref{vf}.

Since $d_n=\det (\rho_n )$,  arc length  is preserved  by the evolution~\eqref{vf} if and only if the trace of  $Q_n$ vanishes for all $n$.  
Introducing the variables
\[
v_n = a_{n+1}, \qquad \quad w_n = b_{n+2} + \tau_{n+1} a_{n+2},
\]
(whose meaning will become clear later on), we compute 
\[
\mathrm{tr}(Q_n) = c_n + c_{n+1} + c_{n+2} + k_n v_n + k_{n+1} v_{n+1}+ \tau_n w_n.
\]
Denoting with $\T$ the left shift operator $\T f_n = f_{n+1}$, the arc length preservation condition $\mathrm{tr}(Q_n)=0$ can be written as:
\begin{equation}\label{c3}
c_n = - \Rc^{-1}\left[(1+\T)k_n v_n + \tau_n w_n\right],
\end{equation}
where the operator 
\[
\Rc:=  1 + \T +\T^2
\]
 is invertible whenever $N\not=3k, k\in \mathbb{N}$ (see Lemma 3.1 in \cite{MW} for a proof of this fact), so we will assume from now on that such is the case.\smallskip

The evolution equations for the geometric invariants is obtained from the compatibility condition  \begin{equation}\label{Kcompat}
\ddt K_n = K_n Q_{n+1} - Q_n K_n
\end{equation}
of the Frenet-Serret equations~\eqref{rhoK} and the frame evolution~\eqref{evolframe3d}.

 Multiplying both sides of~\eqref{Kcompat} by $\small K_n^{-1} = \begin{pmatrix} -k_n&1&0\\-\tau_n&0&1\\ 1&0&0\end{pmatrix}$, we obtain
\[
 \begin{pmatrix} 0&0&(k_n)_t\\ 0&0&(\tau_n)_t\\ 0&0&0\end{pmatrix} =Q_{n+1} -K_n^{-1}Q_nK_n.
 \]
 Setting the $(1,3)$-entries of both sides equal gives
\begin{equation}
\label{invaflow}
\begin{split}
(k_n)_t & =w_{n+1} - (-k_n\ 1 \ 0)\left(\r_n+k_nK_n\r_{n+1}+\tau_nK_nK_{n+1}\r_{n+2}\right)\\
&=w_{n+1} - w_{n-2} + \tau_{n-1} v_{n-1} - \tau_n v_{n+1} + k_n(c_n- c_{n+1}).
\end{split}
\end{equation}

Rewriting $c_n -c_{n+1}= \Rc^{-1}\left[(\T^2-1)k_nv_n + (\T-1) \tau_n w_n\right]$, we get
 \begin{equation}\label{kev3}
(k_n)_t = (\T-\T^{-2})w_n + (\T^{-1}\tau_n-\tau_n\T)v_n+k_n\Rc^{-1}\left[(\T^2-1)k_nv_n + (\T-1) \tau_n w_n\right].
\end{equation}

Similarly, from $(2,3)$-entries of~\eqref{invaflow}, we obtain the evolution of $\tau_n$:
\begin{eqnarray*}
(\tau_n)_t &=& v_{n+2} +k_{n+1}w_{n+1}  - (-\tau_n\ 0\ 1)\left(\r_n+k_nK_n\r_{n+1}+\tau_nK_nK_{n+1}\r_{n+2}\right)\\
&=&v_{n+2}-v_{n-1}+k_{n+1}w_{n+1}-k_nw_{n-1}+\tau_n(c_n-c_{n+2})+\tau_n(k_nv_{n}-k_{n+1}v_{n+1}).
\end{eqnarray*}

\smallskip
Rewriting $c_n-c_{n+2}=-\Rc^{-1}(1-\T^3)k_n v_n-\Rc^{-1}(1-\T^2)(k_nv_n +\tau_nw_n)$ and
\[
\tau_n(k_nv_n-k_{n+1}v_{n+1})=\tau_n\Rc^{-1}\Rc(1-\T) k_n v_n =\tau_n\Rc^{-1}(1-\T)\Rc k_n v_n =\tau_n\Rc^{-1}(1-\T^3)k_n v_n,
\] 
(where we used the fact that $\Rc$ and $\T$ commute), we obtain:
 \begin{equation}\label{tauev3}
 (\tau_n)_t = (\T^2-\T^{-1})v_n + (\T k_n-k_n\T^{-1})w_n+\tau_n\Rc^{-1}(\T^2-1)(k_nv_n +\tau_nw_n).
 \end{equation}
 The evolution equations for the geometric invariants can be written as:
 \begin{equation}\label{ktev3}
 \begin{pmatrix} k_n\\ \tau_n\end{pmatrix}_t = \Pc_{1n} \begin{pmatrix} v_n \\ w_n\end{pmatrix}.
 \end{equation}
where
 \begin{equation}\label{Poisson3}
 \Pc_{1n} = \begin{pmatrix} \T^{-1}\tau_n-\tau_n\T +k_n\Rc^{-1}(\T^2-1)k_n& \T-\T^{-2}+k_n\Rc^{-1}(\T-1) \tau_n \\\\  \T^2-\T^{-1}+\tau_n\Rc^{-1}(\T-1)k_n&\T k_n-k_n\T^{-1}+\tau_n\Rc^{-1}(\T^2-1)\tau_n
 \end{pmatrix}
 \end{equation}
 A  gauge equivalent version of  $\Pc_{1n}$, as well as its higher-dimensional generalizations, were shown to be Poisson operators in~\cite{MW}.  In the next section, we describe how such discrete Hamiltonian structures arise in general dimension, together with its bi-Hamiltonian companions. 
 

\section{Hamiltonian structures on the moduli space of polygons}\label{Poisson}

Relying on the construction in \cite{MW} for polygons in projective space, this section introduces two Hamiltonian structures on the moduli space of arc length-parametrized twisted polygons  in centro-affine $\RR^m$ that are adapted to our particular choice of moving frame.  We first discuss the equivalence between the centro-affine and projective cases.
\begin{definition}
A projective polygon $\{p_n\}_{n=1}^N$, $p_n\in \RP^{m-1}$, is \emph{non-degenerate} if there exists a lift  $\{\gam_n\}$, $\gam_n\in \RR^m$, of $\{p_n\}$ such that $|\gam_n,\gam_{n+1},\dots,\gam_{n+m-1}|\ne 0, \forall n$. 
\end{definition}
This property does not depend on the particular choice of lift: in fact, replacing $\gam_n$ with $\lambda_n \gam_n$,  $\lambda_n\ne0$, changes the determinant by an overall non-zero factor. The quantity
\begin{equation}\label{dn}d_n = |\gam_n, \gam_{n+1}, \dots, \gam_{n+m-1}|
\end{equation}
is the \emph{centro-affine arc length} of $\{\gamma_n\}$ at its $n$-th vertex. Let $\M_N$ be the set of non-degenerate twisted polygons in $\RR^m$, of given period $N$ and monodromy $T$, and let
\[
\mathcal{M}_N^1: = \left\{ \gamma:  \ZZ \rightarrow \RR^m \, | \ \gam\in \M_N,\ d_n=1, \forall n\right\}
\]
be the subset of arc-length parametrized polygons.
\begin{definition}
The moduli space of $\M^1_N$ is the set of all orbits of $\M^1_N$ under the action of the projective group $PSL(m,\RR)$, that is $\M^1_N/PSL(m,\RR)$.
\end{definition}
\begin{proposition}\label{diffeo}
If $N$ and $m$ are co-prime, the moduli space  of $\M_N^1$  is diffeomorphic to the moduli space of non-degenerate twisted projective polygons in $\RP^{m-1}$ with the same monodromy.
\end{proposition}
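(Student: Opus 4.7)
The plan is to construct the diffeomorphism explicitly, showing that each non-degenerate twisted projective polygon admits a unique arc length-parametrized lift (modulo the isotropy action) exactly when $\gcd(N,m)=1$. In one direction, the projection $\gamma\mapsto\{[\gamma_n]\}$ sends an element of $\M_N^1$ to a non-degenerate twisted projective polygon of monodromy $[T]\in\PSL(m,\RR)$, and is equivariant under $G_T\to[G_T]\subset\PSL(m,\RR)$, so it descends to the moduli spaces.

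For the inverse, start from a twisted projective polygon $\{p_n\}$ of monodromy $[T]$ and choose any lift $\{\tilde\gamma_n\}\subset\RR^m$. Writing $\tilde\gamma_{n+N}=c_nT\tilde\gamma_n$ for nonzero scalars $c_n$, the rescaling $\tilde\gamma_n\mapsto f_n\tilde\gamma_n$ with $f_n$ satisfying the recursion $f_{n+N}=f_n/c_n$ (which is always solvable, given any nonzero free data $f_0,\dots,f_{N-1}$) yields a lift with the normalized monodromy $\tilde\gamma_{n+N}=T\tilde\gamma_n$; this lift is unique up to a further $N$-periodic rescaling. Since $\det T=1$, the centro-affine arc length $\tilde d_n=|\tilde\gamma_n,\dots,\tilde\gamma_{n+m-1}|$ is then $N$-periodic, and the desired arc length-one lift is obtained by a further rescaling $\gamma_n=\lambda_n\tilde\gamma_n$ with $N$-periodic $\lambda_n$ satisfying
\[
\lambda_n\lambda_{n+1}\cdots\lambda_{n+m-1}=\tilde d_n^{-1}.
\]
Taking logarithms (with signs handled analogously modulo $2$) converts this into the linear equation
\[
(1+\T+\T^2+\cdots+\T^{m-1})\mu_n=-\log|\tilde d_n|
\]
for $\mu_n=\log|\lambda_n|$ on $N$-periodic sequences.

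The heart of the argument is the invertibility of $P(\T):=1+\T+\cdots+\T^{m-1}$ on $N$-periodic sequences. Diagonalizing $\T$ by $N$-th roots of unity $\zeta^k=e^{2\pi ik/N}$, the eigenvalues of $P(\T)$ are $P(1)=m$ and $P(\zeta^k)=(1-\zeta^{km})/(1-\zeta^k)$ for $k\neq 0$. The latter vanish precisely when $N\mid km$ for some $k\in\{1,\dots,N-1\}$, which is ruled out exactly when $\gcd(N,m)=1$; this is the one genuine obstacle, and precisely where the coprimality hypothesis enters. Under this hypothesis, $\mu_n$ (hence $\lambda_n$) is uniquely determined, and a direct check shows that starting from an $N$-periodically rescaled $\tilde\gamma$ produces the same final $\gamma$, so the inverse is well-defined at the level of moduli. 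Both maps are smooth (algebraic in the vertex coordinates) and are mutual inverses by construction, completing the proof.
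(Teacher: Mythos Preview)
Your argument is essentially the standard one that the paper invokes by citation: the paper's proof simply appeals to the known fact (proved in~\cite{MW}) that a non-degenerate twisted projective $N$-gon admits a unique lift with $|\gamma_n,\dots,\gamma_{n+m-1}|=1$ when $\gcd(N,m)=1$, and then observes that the centro-affine and projective invariants coincide. You supply that argument explicitly---reducing the rescaling problem to the multiplicative equation $\lambda_n\cdots\lambda_{n+m-1}=\tilde d_n^{-1}$ and then to the invertibility of $\mathcal{R}=1+\T+\cdots+\T^{m-1}$ on $N$-periodic sequences via its spectrum at $N$-th roots of unity. This is exactly the mechanism behind the cited fact, so the two proofs are aligned; yours is simply more self-contained.

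One point deserves care: your parenthetical ``with signs handled analogously modulo $2$'' is not quite right in general. Over $\mathbb{F}_2$ the operator $1+\T+\cdots+\T^{m-1}$ acting on $N$-periodic sequences is invertible precisely when $m$ is odd (for $m$ even, $x=1$ is a common root of $1+x+\cdots+x^{m-1}$ and $x^N-1$, so the sign system has a one-dimensional kernel and cokernel). Thus when $m$ is even there can be a genuine sign obstruction to lifting with the \emph{prescribed} $T\in\SL(m,\RR)$; the obstruction disappears if one allows the lift to have monodromy $-T$, which represents the same class in $\PSL(m,\RR)$. The paper's statement and proof gloss over this same subtlety, and for the downstream use (local identification of the two moduli spaces via the invariants) it is harmless; but your write-up would be cleaner if you either restrict the sign discussion to $m$ odd, or note that for $m$ even the unique arc-length lift selects one of the two $\SL$-representatives of the projective monodromy.
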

\begin{proof} We use the known fact that a non-degenerate projective polygon in $\RP^{m-1}$ admits a unique lift  $\gam = \{\gam_n\}$ to $\RR^m$ satisfying
$|\gam_{n}, \gam_{n+1}, \dots, \gam_{n+m-1}| = 1\, \forall n$,  provided $N$ and $m$ are co-prime. (For a proof, see, e.g.~\cite{MW}.)


Clearly,  $\gam$  is an arc length parametrized twisted polygon in centro-affine $\RR^m$, since the action of $\mathrm{PSL}(m,\RR)$ on projective polygons becomes the linear action of $\SL(m,\RR)$ on lifts. The  invariants  $\{k_n\}$ and $\{ \tau_n\}$ defined in terms of the Frenet-Serret equations are both the projective invariants of $\{p_n\}$ as discussed in \cite{MW}, as well as the centro-affine invariants of its lift $\{\gam_n\}$ to $\RR^m$. \end{proof} 

The inverse of the correspondence described in Proposition~\ref{diffeo} is simply the projectivization of a centro-affine arc length parametrized $N$-gon. It follows that the invariants can be used as coordinates for either moduli space, regarding them as functions of $\{p_n\}$ in the projective case, and of $\{\gam_n\}$ in the centro-affine case. \smallskip

Next, we use Proposition~\ref{diffeo} and the results  for projective $N$-gons in~\cite{MW} to define two Hamiltonian structures on the moduli space of centro-affine arc length parametrized $N$-gons. These structures arise as reductions of more general structures as described below in the context of Poisson-Lie groups.

\subsection{A twisted Poisson structure on a Poisson-Lie group}
Let $G$ be a semisimple Lie group and $\g$ be its Lie algebra. Assume $\g$ has a non-degenerate inner product $< \ , \ >_\g : \g \times \g^* \rightarrow \RR$ that identifies $\g$ with its dual $\g^\ast$ (for matrix Lie algebras, this is usually the trace of the product of two Lie algebra elements.)
\begin{definition}[Left and right gradients] \label{gradients}
Let $G^N$ denote the Cartesian product of $N$ copies of $G$. The group $G^N$ acts on itself by diagonal (i.e.~component-wise) action: $g\tilde{g}:=(g_{n}\tilde{g}_{n})$, where $g, \tilde{g} \in G^N$.

Let $\F: G^{N} \to \RR$ be a differentiable function. The \emph{left gradient} of $\F$ at $L = \{L_n\}\in G^{N}$  is the element $\nabla \F(L) = \{\nabla_n \F(L)\}$ of $\g^{N}$ defined by
\[
\left. \frac d{d\epsilon}\right|_{\epsilon = 0} \F(\exp(\epsilon \xi) L) : = \langle \nabla \F(L), \xi\rangle, \qquad \forall \xi\in \g^N,
\]
where
$\displaystyle \langle \xi , \tilde{\xi} \rangle=\sum_{n=1}^N <\xi_n, \tilde{\xi}_n>_\g$ is the induced inner product on $\g^N$. 

Similary, the {\it right gradient} of $\F$ at $L$, $\nabla'\F(L)\in g^N$, is defined by
\[
\left. \frac d{d\epsilon}\right|_{\epsilon = 0} \F(L\exp(\epsilon \xi) ): = \langle \nabla' \F(L), \xi\rangle, \qquad \forall \xi\in \g^N.
\]
(See \cite{FRS, MW}.)
 \end{definition}
Since $L_n\exp(\epsilon \xi_n)=L_n\exp(\epsilon \xi_n)L_n^{-1}L_n=\exp(\epsilon L_n \xi_n L_n^{-1})L_n$, one gets $\langle \nabla' \F(L), \xi\rangle = \langle\nabla \F(L), L\xi L^{-1}\rangle$ and the following relation between the right and left gradients:
 \begin{equation}\label{lrgrad}
 \nabla'\F(L) = L^{-1} \nabla \F(L) L,
 \end{equation}
 expressed component-wise as $ \nabla_n'\F(L) = L_n^{-1} \nabla_n \F(L) L_n$, where the symbol $\nabla_n$ (resp. $\nabla_n'$) denotes the $n$-th component of the left (resp. right) gradient.

The author of \cite{semenov85} defined what is known as a {\it twisted Poisson bracket}, a definition that was reinterpreted in \cite{FRS} using their particular notation. In this paper we will adopt the definition in \cite{FRS}.   
\\
Assume that  $\g$ has a grading $\g = \g_+\oplus\h_0\oplus\g_-$, with $\h_0$ commutative and $\g_+$ dual to $\g_-$. Let $\xi=\xi_++\xi_0+\xi_-$ be the associated decomposition of $\xi\in \g$. Let $r\in \g\otimes\g$ represent an $r$-matrix, as described in  \cite{FRS}, page 616. In particular, $r$  satisfies the classical Yang-Baxter equation
\begin{equation}\label{YM}
[r_{12}, r_{13}]+[r_{12},r_{23}]+[r_{13},r_{23}] = 0,
\end{equation}
where $r_{12} = \phi(r)$ with $\phi(a\otimes b) = a\otimes b\otimes 1$ is linear, with $r_{13}$ and $r_{23}$ similarly defined. \smallskip

Given $\F, \H$ smooth scalar-valued functions on $G^N$ and $L\in G^N,$ the \emph{twisted Poisson bracket} is given by \cite{FRS}:
\begin{equation}\label{twisted}
\begin{split}
\{\F, \H\}(L) & := \sum_{s=1}^Nr(\nabla_s \F \wedge \nabla_s \H) + \sum_{s=1}^N r(\nabla_s' \F \wedge \nabla_s' \H)  \\ &- \sum_{s=1}^N r\left( (\T\otimes 1)(\nabla_s' \F \otimes \nabla_s \H)\right) + \sum_{s=1}^N r  \left((\T\otimes 1) (\nabla_s' \H \otimes \nabla_s \F)\right), 
\end{split}
\end{equation}
where $\xi\wedge \eta=\tfrac12( \xi\otimes\eta - \eta\otimes\xi)$.
Equation~\eqref{twisted} defines a Hamiltonian structure on $G^{(N)}$, as shown by Semenov-Tian-Shansky~ \cite{semenov85}.  Moreover, the {\em twisted} gauge action of $G^{N}$ on itself---mapping $\{L_n\}$ to $\{g_{n+1}L_n g_n^{-1}\}$---is a Poisson map and its orbits coincide with the symplectic leaves \cite{FRS, semenov85}.
 These brackets play a central role in the following construction.
 
\subsection{Poisson structures on the moduli space of centro-affine polygons}\label{PSonM}

The symmetry group of the space of arc length parametrized twisted  polygons in centro-affine $\RR^m$ is  $\, G=\SL(m, \RR)$. In this case, the $r$-matrix and the ensuing twisted Poisson bracket~\eqref{twisted} are defined in terms of the standard gradation $\g = \g_+\oplus\h_0\oplus\g_-$ of its Lie algebra  $\sl(m, \RR)$. Here, $\g_-, \g_+$ and $\h_0$ are the subsets of (respectively) strictly lower triangular, strictly upper triangular, and traceless diagonal matrices. The standard $r$-matrix for $\sl(m)$ is defined by
\[
r = \sum_{i<j} E_{i,j}\otimes E_{j,i} + \sum_{k=1}^{m-1} F_k\otimes F_k,
\]
where $E_{i,j}$ is the matrix with $1$ in entry $(i,j)$ and zero elsewhere, and where the $F_k$'s are appropriately chosen generators of $\h_0$, to ensure that $r$ satisfies \eqref{YM}. The exact form of the $F_k$'s is not relevant, since they do not play a role in our construction. Note that 
\begin{equation}\label{rmatrix}
r(\xi\otimes \eta) = \langle \xi_-, \eta_+\rangle + \sum_k\langle \xi_0^k, \eta_0^k\rangle, 
\end{equation}
where $\xi_0^k$ are the diagonal components of $\xi$ dual to $F_k$. \smallskip

The Lie algebra $\sl(m, \RR)$ admits a second gradation $\g = \g_1\oplus \g_0\oplus\g_{-1}$, with
\begin{equation}\label{gradation}
\g_{1}= \begin{pmatrix} 0&\ast&\dots&\ast\\0&0&\dots&0\\ \vdots&\vdots&\vdots&\vdots\\ 0&0&\dots&0\end{pmatrix},\quad \g_0 = \begin{pmatrix}  \ast&0&\dots&0\\0&\ast&\dots&\ast\\ \vdots&\vdots&\vdots&\vdots\\ 0&\ast&\dots&\ast\end{pmatrix},\quad \g_{-1} = \begin{pmatrix} 0&0&\dots&0\\ \ast&0&\dots&0\\ \vdots&\vdots&\vdots&\vdots\\ \ast&0&\dots&0\end{pmatrix}.
\end{equation}
This second gradation realizes  the quotient $G/H$ as a homogeneous space, where $H\in \SL(m, \RR)$ is the subgroup with  Lie algebra $\h = \g_1\oplus\g_0$, acting on $G$ by left multiplication.
\smallskip

\noindent{\bf Remark.} The two gradations of $\sl(m, \RR)$ are \emph{compatible}, in the sense that $\g_1\subset \g_+$ and $\g_{-1}\subset \g_-$. 
\smallskip

We will also consider  the quotient space $G^N/H^N$, where $H^N$ acts on $G^N$ via the \emph{right discrete gauge action}: 
\begin{equation}\label{dgauge}
(h_n , g_n) \to h_{n+1} g_n h_{n}^{-1}.
\end{equation}

Following a similar argument as the one given in Section 5 of~\cite{MW}, and using Proposition~\ref{diffeo}, we identify (locally) the moduli space of $\M^1_N$  with $G^N/H^N$ and coordinatize it by means of the geometric invariants of the polygons. The argument has two main parts.\smallskip

\noindent
I. \emph{Group description of twisted polygons.} Consider the space $(G/H)^N$ and denote its elements with $\{[g_n]\}$: $N$-tuples of equivalence classes with respect to the left action on $H$. First note that, if $\e_1 = (1 \ 0 \ \dots \ 0)^T$ and $h\in H$, then  $h \e_1 = a \e_1$ for some scalar $a\not=0$. This is easily verified by using the general form of $h \in H$. Letting  $\hat g_n=g_nh_n$ be another element of $[g_n]$,  then $\hat g_n \e_1=g_n h_n \e_1=ag_n\e_1$. Therefore,  $[\hat g_n] = [g_n]$ if and only if the projective class of $\{ g_n \e_1 \}$ is the same as that of $\{\hat g_n\e_1\}$.

Given $\{[g_n]\}\in (G/H)^N$ and $T\in G$,  we extend $g_n$ quasi-periodically as $g_{n+N}=Tg_n$, and let $\gam=\{ \gamma_n \}$ be the unique lift of the projective class of $\{ g_n \e_1\}$ to $\RR^m$ satisfying $|\gam_n,\dots,\gam_{n+m-1}| = 1$  and $\gam_{n+N}=T\gam_n$. (See also Proposition~\ref{diffeo}.) In this way we construct a unique $\gam\in \M^1_N$ of monodromy $T$ from any given element of $(G/H)^N$, as illustrated in the following diagram:
\[
\begin{array}{ccccc}
(G/H)^N & \longrightarrow  & (\RP^{m-1} )^N& \longrightarrow& \M_N^1 \\
& & & \\
 {[\{g_n\}]} & \longrightarrow   & \{[g_n\e_1]^\pi\} & \longrightarrow &\{ \gam_n \},
\end{array}
\]
where $[g_n\e_1]^\pi$ denotes the projective class of the first column of $g_n$.

Note that, if  $\rho=\{\rho_n\}$, $\rho_n = (\gamma_n,\dots, \gam_{n+m-1})$, is the  {left-moving frame} of $\gam$, then  $\{[\rho_n]\}= \{[g_n]\}$, since the first columns of $\rho_n$ and $g_n$ belong to the same projective class.  In this way, an arc length parametrized twisted polygon $\gam$ is identified with the equivalence class of its left-moving frame $\rho$, that is, $\M_N^1$ is (at least) locally\footnote[2]{We will not need more than local identification.}
identified with $(G/H)^N\times G$.
We remark that a section of the quotient $(G/H)^N$ in a neighborhood of the identity can be locally identified with $G_{-1}$, as illustrated in the following diagram:
\[
\begin{array}{cccc}
(G/H)^N & \longrightarrow (\RP^{m-1})^N & \longrightarrow& G_{-1} \\
& & & \\
{ [\{g_n\}]} & \longrightarrow  {\{[g_n\e_1]^\pi\}} & \longrightarrow & {\begin{pmatrix} 1 & 0^T\\ \v & \I_{m-1}\end{pmatrix}}
\end{array}
 \qquad \quad \begin{pmatrix}1\\\v \end{pmatrix}\in [g_n\e_1]^\pi.
\]

\noindent
\emph{Group description of the moduli space}. Next, consider the quotient space $G^N/H^N$, where $H^N$ acts on $G^N$ via the right discrete gauge action~\eqref{dgauge}, and introduce the map
\begin{equation}\label{Kmap}
\begin{array}{ccc}
\K: (G/H)^N \times G & \longrightarrow & G^N/H^N\\  
& & \\
\big(\{[g_n]\}, T\big) &  \longrightarrow &\big[\{g_{n+1}^{-1}g_{n}\}\big],
\end{array}
\end{equation}
where we define $g_{n+N} := T g_n$.\footnote[3]{Right invariance is used in the description of  $G^N/H^N$, since the twisted bracket on this quotient space is defined for right gauges; while left invariance is used for $(G/H)^N$, as it is standard for homogeneous spaces.} If $m_n= g^{-1}_{n}g_{n+1}$, then $\{m_n\}$  is periodic of period $N$ and $m_1m_2 \ldots m_N=T$.    The map $\K$ is well-defined since, if $[\hat g_n]=[g_n]$, i.e. $\hat g_n = g_n h_n$, and  both are extended with the same monodromy $T$, then $ \mathcal{K}(\{[\hat{g}_n]\}, T)=[\{h_{n+1}^{-1}g_{n+1}^{-1}g_{n}h_{n}\}]=\K([{g}_n], T)$.  Also,  if $[\{m_n^{-1}\}] \in G^N/H^N$ has a representative $m$ satisfying $m_1m_2 \ldots m_N=T$ and $\hat{m}^{-1}\in [\{m_n^{-1}\}]$ is a different representative of the same equivalence class, then $\hat{m}_1\hat{m}_2\ldots \hat{m}_N=h^{-1}_1T h_{N+1}=h^{-1}_1T h_{1}$ belongs to the $H$-conjugacy class $\mathcal{C}_T$ of $T$.   It follows that $\K$ is a surjective map; in fact, given $[\{m_n^{-1}\}]\in G^N/H^N$, and choosing an arbitrary representative $\{m_n^{-1}\}$ with $m_1m_2 \ldots m_N=T$, we can solve recursively $g_{n+1} = g_n m_n$, and recover an $N$-gon of monodromy $T$, up to an overall left action by the symmetry group, amounting to the selection of $g_1$. 
\begin{remark}
It is a general fact that the cardinality  of the conjugacy class of $T$ is the index of the isotropy subgroup of $T$. It follows that, there is a natural foliation of $G^N/H^N$, where the dimension of the stabilizer of the monodromy of an element in $G^N/H^N$ equals the co-dimension of the leaf of that element. This foliation coincides locally with the symplectic foliation of the reduced bracket defined below, described in Proposition~\ref{leaves}. 
\end{remark}


Finally, let  $\gam \in \M_N^1$ define a local section of $(G/H)^N$ by means of the equivalence class representative given by its left-moving frame $\rho$. Let $\{K_n\}:=\{\rho_n^{-1}\rho_{n+1}\}$ be its left Maurer-Cartan matrix. Since $\mathcal{K}\big(\{[\rho_n]\}, T\big) = \big[\{\rho_{n+1}^{-1}\rho_n\}\big] =  [\{K_n^{-1}\}]:=[\{L_n\}]$,  $\{L_n\}$ defines a smooth local section of $G^N/H^N$. On the other hand, the Maurer-Cartan matrix of a twisted $N$-gon in $\RR^m$ is uniquely determined by the geometric invariants of the polygon (see (\ref{K})). Thus, $G^N/H^N$ can be locally identified  with the moduli space of $\M_N^1$ (or, by projectivizing, with the moduli space of twisted polygons in $\RP^{m-1}$) and the geometric invariants can be used as invariant coordinates on the moduli space. We will return to the map $\K$ later on.

When gradations are compatible,  the Poisson bracket (\ref{twisted}) - defined via the first gradation - can be reduced to the quotient $G^{N}/H^{N}$,  to obtain
 a Poisson structure on the moduli space. The following paraphrases Theorem 5.5 in~\cite{MW}.
\begin{theorem}\label{bracketth} Given $G=\SL(m,\RR)$, let $\g = \g_+\oplus\h_0\oplus\g_-=\g_1\oplus \g_0\oplus\g_{-1}$ be the two compatible gradations of its Lie algebra described above. Let $H$ be the subgroup with Lie algebra $\h = \g_0\oplus\g_1$, and assume  $H^N$ acts on $G^N$ via the gauge action~\eqref{dgauge}. We also require $g_{n+N} = g_n, \, \forall n$. Then, 
the twisted Poisson structure \eqref{twisted} defined on $G^{N}$ with $r$ as in \eqref{rmatrix} is locally reducible to the quotient $M = G^N/H^N$.
\end{theorem}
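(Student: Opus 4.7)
The idea is to identify smooth functions on $M=G^N/H^N$ with $H^N$-gauge invariant functions on $G^N$ (for the right gauge action~\eqref{dgauge}), and then to show that the twisted bracket~\eqref{twisted} of two such invariant functions is again $H^N$-gauge invariant; this suffices for local reducibility, since the Jacobi identity and antisymmetry pass automatically to the subalgebra of invariants.

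First, I would translate gauge invariance into a constraint on gradients. If $\F: G^N\to \RR$ is invariant under $L_n\mapsto h_{n+1}L_n h_n^{-1}$, then differentiating $\F\bigl(\{\exp(\epsilon\eta_{n+1})L_n\exp(-\epsilon\eta_n)\}\bigr)$ at $\epsilon=0$ and using Definition~\ref{gradients} yields
\[
\sum_n \langle \nabla_{n-1}\F - \nabla_n'\F,\ \eta_n\rangle = 0 \qquad \forall \eta=(\eta_n)\in \h^N.
\]
Since the trace-form orthogonal of $\h=\g_0\oplus\g_1$ is $\g_{-1}$, invariance is equivalent to
\[
\nabla_{n-1}\F - \nabla_n'\F \ \in\ \g_{-1},\qquad \forall n,
\]
together with the analogous relation for $\H$. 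Using~\eqref{lrgrad}, this can equivalently be phrased as a constraint relating $L_n$ and the components of $\nabla_n\F$ in the first gradation.

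Next, I would compute the variation of $\{\F,\H\}$ under an infinitesimal gauge transformation generated by $(\eta_n)\in \h^N$ by applying the above derivative procedure term-by-term to~\eqref{twisted}. Here the compatibility of the two gradations ($\g_1\subset \g_+$ and $\g_{-1}\subset \g_-$) is essential: it guarantees that the $R$-matrix $R(\xi)=\tfrac12(\xi_+-\xi_-)$ acts by $+\tfrac12$ on $\g_1$, by $-\tfrac12$ on $\g_{-1}$, and by $0$ on $\h_0$, so the splitting $\g=\h\oplus \h^\perp$ interacts cleanly with $R$. Substituting the gradient constraints into the four summands of~\eqref{twisted} and re-indexing the summands that carry the shift $\T\otimes 1$, the diagonal terms $r(\nabla_s\F\wedge \nabla_s\H)+r(\nabla_s'\F\wedge\nabla_s'\H)$ and the twisted terms will reorganize so that the $\g_{-1}$-valued differences paired against $\h$-valued test elements cancel.

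The main obstacle is this last bookkeeping step: carefully tracking how the shift operator $\T$ in~\eqref{twisted} interacts with the re-indexing $\nabla_{n-1}\F - \nabla_n'\F \in \g_{-1}$ so that the telescoping produces cancellation rather than boundary-type obstructions. The periodicity condition $g_{n+N}=g_n$ assumed in the theorem is exactly what eliminates these obstructions, since it allows the circular sums to close without residue. Once this verification is complete, $H^N$-invariance of $\{\F,\H\}$ follows, and the induced bracket on $M$ is Poisson. For this step I would follow the same sequence of reductions used in the proof of Theorem~5.5 of~\cite{MW}, adapting it to our moving-frame conventions.
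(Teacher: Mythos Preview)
The paper does not supply its own proof of this theorem; it simply records the statement as a paraphrase of Theorem~5.5 in~\cite{MW}. Your overall strategy---identify functions on $M$ with $H^N$-gauge invariant functions on $G^N$, translate invariance into a constraint linking $\nabla\F$ and $\nabla'\F$, and then verify that the bracket~\eqref{twisted} of two invariant functions is again invariant using the compatibility $\g_1\subset\g_+$, $\g_{-1}\subset\g_-$---is exactly the approach of~\cite{MW}, and indeed you explicitly say you would follow that argument. So in substance you are doing the same thing the paper does.

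There is, however, a concrete error in your gradient computation. You claim that the trace-form orthogonal of $\h=\g_0\oplus\g_1$ is $\g_{-1}$, and hence that invariance forces $\nabla_{n-1}\F-\nabla_n'\F\in\g_{-1}$. This is backwards: with the trace pairing, $\g_1$ pairs nondegenerately with $\g_{-1}$ and $\g_0$ with itself, so the annihilator of $\g_0\oplus\g_1$ is $\g_1$, not $\g_{-1}$. The correct invariance condition is the one the paper records later as~\eqref{maincond}, namely $\T^{-1}\nabla_n\F-\nabla_n'\F\in\g_1$. This matters for the subsequent cancellation argument: the fact that the difference lives in $\g_1\subset\g_+$ (rather than $\g_{-1}\subset\g_-$) is precisely what makes the $r$-matrix terms in~\eqref{twisted} telescope correctly, as exploited in the proof of Proposition~\ref{expl}. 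With the wrong subspace the bookkeeping you describe would not close. Once you correct this, your plan goes through and coincides with the argument in~\cite{MW}.
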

We will denote the reduce bracket with $\{ \ , \ \}_R$.
\begin{remark} 
We note that Semenov's twisted Poisson bracket\eqref{twisted} is defined in terms of the right action instead of the left action. For computations explicitly involving \eqref{twisted} (such as in Section~\ref{polygev}), we will then need to work with smooth sections of $M$ given by right Maurer-Cartan matrices (the inverses of the left Maurer-Cartan associated to our choice of frame), in order to preserve each symplectic leaf and be able to reduce the bracket to the quotient space. This is the price we pay for using a more ``canonical'' choice of moving frame. 
\end{remark}

A second Poisson bracket arises from a companion bracket to \eqref{twisted}  found in \cite{MW}, as a reduction of the tensor:
\begin{equation}
\label{theta}
\theta(\F, \H)(L) = \frac 12\sum_{n=1}^N \langle(\nabla'_n \F)_{-1}, (\nabla'_n\H)_{1}\rangle -\langle(\nabla'_n \H)_{-1}, (\nabla'_n\F)_{1}\rangle ,
\end{equation}
to $M=G^N/H^N$. Here, $\F, \H : G^N\rightarrow \RR$ are differentiable functions. While tensor $\theta$ is not a Poisson bracket on $G^N$, its reduction to $M$ is a Poisson bracket in any dimension $m$, as shown in \cite{MW}. The origin of this second bracket, which will be denoted by $\{\ , \ \}_0$, has remained elusive and proof of its compatibility with the reduced twisted bracket could only be achieved in dimension $m=3$. 

As consequences of our main results from Section 4, we will show that $\theta$ can be interpreted as the invariant version of a Poisson bracket at the curve level, explaining why its reduction to the moduli space is a Poisson bracket. We will also provide a straightforward proof of the compatibility of the two Poisson structures in any dimension (Theorem~\ref{compatibility}).

{\begin{proposition}\label{leaves}
The symplectic leaves of the reduced bracket  from Theorem \ref{bracketth} are locally classified by the conjugacy class of the associated monodromy.
\end{proposition}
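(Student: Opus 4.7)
The strategy is to first characterize the symplectic leaves of the twisted bracket \eqref{twisted} on the unreduced space $G^N$, and then push the classification down through the $H^N$-reduction. By the Semenov-Tian-Shansky result recalled in Section~\ref{Poisson}, the symplectic leaves of \eqref{twisted} on $G^N$ are precisely the orbits of the twisted gauge action $\{L_n\} \mapsto \{g_{n+1} L_n g_n^{-1}\}$ with $\{g_n\}\in G^N$ periodic ($g_{N+1}=g_1$). A direct telescoping computation then shows that the ordered product $M(L) := L_N L_{N-1} \cdots L_1$, which by the identifications of Section~\ref{PSonM} is conjugate to the monodromy of the polygon associated to $\{L_n\}$, transforms under the gauge action by
\[
M(g\cdot L) \;=\; g_{N+1}\, M(L)\, g_1^{-1} \;=\; g_1\, M(L)\, g_1^{-1},
\]
so the $G$-conjugacy class of $M(L)$ is constant along each leaf.

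For the converse, suppose $L, L' \in G^N$ satisfy $M(L') = g\, M(L)\, g^{-1}$ for some $g \in G$. The plan is to set $g_1 := g$ and define $g_{n+1} := L'_n g_n L_n^{-1}$ recursively. A straightforward induction gives $g_{n+1} = (L'_n \cdots L'_1)\, g_1\, (L_n \cdots L_1)^{-1}$, so $g_{N+1} = M(L')\, g_1\, M(L)^{-1} = g_1$; hence $\{g_n\}$ is periodic and by construction realizes $L'_n = g_{n+1} L_n g_n^{-1}$. This shows that $L$ and $L'$ lie on a common leaf of $G^N$, so the symplectic foliation of the twisted bracket on $G^N$ is classified exactly by the $G$-conjugacy class of the monodromy.

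For the descent to $M = G^N/H^N$, observe that the right gauge action \eqref{dgauge} of $H^N$ is simply the restriction of the twisted gauge action to $H^N \subset G^N$, and therefore preserves each symplectic leaf of $G^N$. By Theorem~\ref{bracketth}, the reduced bracket $\{\,,\,\}_R$ on $M$ is well defined, and its symplectic leaves are obtained by quotienting the $G^N$-orbits by $H^N$. Since the $H^N$-action also only conjugates $M(L)$ (by the single element $h_1 \in H \subset G$), the assignment $[L] \mapsto [M(L)]_G$ descends to a well-defined function on $M$ whose level sets, by the previous paragraph, coincide with the reduced symplectic leaves.

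The main subtlety I anticipate is ensuring smoothness and the correct dimension count of the leaves inside $M$: the stabilizer of a monodromy under conjugation can jump on the locus where the conjugacy class degenerates (e.g.\ at central or otherwise non-regular elements), so the bijective correspondence between leaves and conjugacy classes is clean only on the open stratum of regular monodromies. This is precisely what the qualifier \emph{locally} in the statement encodes, and it is the one point where care would be needed in writing a fully detailed argument.
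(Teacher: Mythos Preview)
Your argument is correct and follows essentially the same route as the paper: invoke the Semenov-Tian-Shansky/FRS result that the symplectic leaves of \eqref{twisted} on $G^N$ are the twisted gauge orbits, show by a telescoping product that the monodromy conjugacy class is invariant on each orbit, construct the gauge transformation recursively for the converse, and then pass to the quotient $G^N/H^N$. Your converse on $G^N$ is in fact written out more explicitly than the paper's, which phrases that step via the map $\mathcal{K}$ and is somewhat terser.

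One small point of divergence: your reading of the qualifier \emph{locally} as reflecting the jump in stabilizer dimension at non-regular monodromies is reasonable, but it is not the reason the paper gives. The paper attributes the local nature of the statement to the fact that the reduction in Theorem~\ref{bracketth} is itself only local, and that the projected leaves may acquire additional global discrete invariants not seen by the monodromy class. Both caveats are legitimate, but you should be aware that the paper's ``locally'' is about the reduction process rather than the stratification by conjugacy type.
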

\begin{proof}
The symplectic leaves of \eqref{twisted} are gauge orbits of the discrete action of $G^N$ on itself~\cite{FRS}; therefore,  the symplectic leaves of the reduced bracket $\{ \ , \ \}_R$ are the projection of the gauge orbits. This is only true locally, since the reduction itself is local, and  additional {\it global} discrete invariants may exist for the projected leaves.\\ 
Let $\{m_n\}$ and $\{\hat{m}_n\}$ belong to the same gauge orbit, i.e. $\hat{m}_n=g_n^{-1} m_n g_{n+1}$, $g_n\in G$ with $g_{n+N}=g_n$, and let $T=m_1m_2 \ldots m_N$ be the monodromy of $\{m_n\}$. Then $\hat{m}_1\hat{m}_2\ldots \hat{m}_N=g^{-1}_1T g_{N+1}=g^{-1}_1T g_{1}$ belongs to the conjugacy class of $T$.  


Conversely, given $\{m_n\}$ with $m_1m_2 \ldots m_N=T$, for any $T^g=gTg^{-1}$ in the conjugacy class of $T$ we can find $\{[g_n]\}\in (G/H)^N$ such that $\mathcal{K}(\{[g_n]\}, T^g) = [\{m_n\}]$. In fact, $g_n$ can be constructed recursively from $g_{n+1}=g_n m_n$ with $g_1=g$. Then, $g_{N+1}=g_1 T=(gTg^{-1})g_1$, i.e.,  locally, elements of each leaf---{\it before projection onto $G^N/H^N$}---have the same monodromy class; the same will remain locally true on the quotient $G^N/H^N$ after projection.
~
\end{proof}}
The relationship between Hamiltonian flows induced by the reduced bracket $\{ \ , \ \}_R$ and polygon evolutions in centro-affine $\RR^m$ is described next.
 
 \subsection{Polygon evolutions and Hamiltonian flows in the moduli space}\label{polygev}
 
We first describe the set-up for arbitrary dimension $m$; much of this is adapted from~\cite{MW} for a different choice of moving frame.

Let $G$  be a Lie group acting on a manifold $M$ by left (or right) action.  A vector field $V$ on $M$ is \emph{left} (or \emph{right}) \emph{invariant} if the associated flow $\Phi_t$ is  equivariant with respect to the left (or right) group action, i.e. $g\cdot \Phi_t(x) = \Phi_t(g\cdot x), \, g\in G, x\in M.$ Let $\V_N\subset TM_N$  denote the \emph{left invariant vector fields} on $\M_N$, and let $\V_N^1\subset T\M_N^1$ be the subspace of arc length-preserving left invariant vector fields. (Recall that $\M_N^1\subset \M_N$  is the subset of non-degenerate $N$-gons parametrized by centro-affine arc length.)

\subsubsection{Evolution of moving frames.} Given $\gam= \{ \gam_n \}\in \M_N^1$, we introduce the left moving frame $\rho:\M_N^1 \to \SL(m,\RR)^N$, $\rho(\gam) = \{\rho_n(\gam)\}_{n=1}^N$ with components
\begin{equation}\label{leftmovframe}
\rho_n(\gam) = (\gam_{n}, \gam_{n+1}, \dots, \gam_{n+m-1}).
\end{equation}
As discussed in Section~\ref{S2}, this moving frame is gauge equivalent to the one used in~\cite{MW}, thus many of the results  in that work will apply to our context. In particular, the generating set of geometric invariants will be used to coordinatize the moduli space. The geometric invariants are defined via the following relation:
\begin{equation}\label{as}
\gam_{n+m} = a_n^{m-1} \gam_{n+m-1} + \dots+ a_n^1\gam_{n+1}+(-1)^{m-1} \gam_n,
\end{equation}
where the coefficient of $\gamma_n$ comes from specializing $(-1)^{m-1}d_{n+1}/d_n$, with $d_n$ as in (\ref{dn}), to  $\gamma \in \M_N^1$. This gives
\begin{equation}
\label{invariantsa}
a^r_n=|\gam_n,\dots,\gamma_{n+r-1}, \stackrel{r+1}{\overbrace{\gamma_{n+m}}}, \gamma_{n+r+1},\dots,\gamma_{n+m-1}|.
\end{equation}
The discrete equivalent of the Serret-Frenet equations is:
\begin{equation}\label{FreSer}
\rho_{n+1} = \rho_n K_n, 
\end{equation}
\begin{equation}
\label{K}
K_n = \begin{pmatrix} \0^T& (-1)^{m-1}\\ \I_{m-1}&\a_n\end{pmatrix};
\end{equation}
where $\I_{m-1}$ is the $(m-1)\times (m-1)$ identity matrix, $\0$ the zero vector in $\RR^{m-1}$, and $\a_n=\begin{pmatrix}a_n^1 \ a_n^2 & \ldots & a_n^{m-1}\end{pmatrix}^T$  the vector of invariants at vertex $\gamma_n$. We will also need the inverse of the Maurer-Cartan matrix $K$: 
\begin{equation}\label{L}
K^{-1}:=L=\{L_n\}, \qquad \qquad L_n = \begin{pmatrix} (-1)^m \a_n & \I_{m-1}\\ (-1)^{m-1}& \0^T\end{pmatrix}.
\end{equation}

The components of an invariant vector field $X = \{ X_n\}_{n=1}^N \in \V_N^1$ can be expressed as linear combinations of the components of $\gamma$ whose coefficients only depend on the geometric invariants:
\begin{equation}\label{X}
X_n = \sum_{\ell=0}^{m-1} r_n^\ell \gam_{n+\ell} = \rho_n \r_n, \qquad\quad \ \r^T_n = (r_n^0, \dots , r_n^{m-1}).
 \end{equation}
The induced evolution on $\gamma$ is (component-wise):
 \begin{equation}\label{pev}
 \frac{\rd}{\rd t}\gam_n = X_n.
 \end{equation}
From~\eqref{X} and~\eqref{FreSer}, one gets $\displaystyle \ddt\gam_{n+1} = \rho_{n+1}\r_{n+1} = \rho_n K_n\r_{n+1}$ and, in general, $\displaystyle \ddt\gam_{n+\ell}= \rho_nK_nK_{n+1}\dots K_{n+\ell-1}\r_{n+\ell}$. Thus,  the evolution of the left-moving frame is given by
 \begin{equation}\label{frameev}
\frac{\rd}{\rd t}\rho_n=  \rho_nQ_n, 
 \end{equation}
where 
\begin{equation}\label{Q}
 Q_n = \begin{pmatrix}\r_n & K_n\r_{n+1}& K_n K_{n+1} \r_{n+2}& \dots& K_nK_{n+1}\dots K_{n+m-2}\r_{n+m-1}\end{pmatrix}.
 \end{equation}
The Frenet-Serret equations~\eqref{FreSer} and the frame evolution~\eqref{frameev} are compatible provided
 \begin{equation}\label{compat}
 \ddt K_n = K_n Q_{n+1} - Q_n K_n.
 \end{equation}
 Arc length preservation during the evolution amounts to $|\gam_n \ \gam_1 \ \ldots \gam_{n+m-1}|=1$ for all times. This condition is more conveniently expressed as 
 \begin{equation}
 \label{alpreserve}
\mbox{tr}(Q_n)=0,
 \end{equation} 
 and uniquely determines  $r_n^0$ in terms the other entries of $\r$ as shown in the following:

 \begin{proposition}\label{LAP}
For $N$, $m$ co-prime integers, and  $X$ as in~\eqref{X}, $X\in \V_N^1$ if and only if $r_n^0  = R_n(\r)$, where  $R_n$ is an algebraic function of the $r_s^i$'s.
\end{proposition}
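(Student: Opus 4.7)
The plan is to translate the arc-length preservation condition $\mathrm{tr}(Q_n)=0$ into a single linear difference equation for the scalar sequence $\{r_n^0\}_n$, and then to show that the associated operator is invertible precisely under the coprimality hypothesis.

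First I would extract the diagonal of $Q_n$ explicitly. From the form of $K_n$ in~\eqref{K} one checks immediately that $K_n\e_i=\e_{i+1}$ for $i=1,\dots,m-1$, where $\{\e_i\}$ denotes the standard basis of $\RR^m$. Iterating yields $K_nK_{n+1}\cdots K_{n+\ell-1}\e_1=\e_{\ell+1}$ for $0\le\ell\le m-1$. Since the $(\ell+1,\ell+1)$-entry of $Q_n$ is the $(\ell+1)$-st component of $K_nK_{n+1}\cdots K_{n+\ell-1}\r_{n+\ell}$, decomposing $\r_{n+\ell}=r_{n+\ell}^0\e_1+\sum_{i\ge 1}r_{n+\ell}^i\e_{i+1}$ produces a contribution of $r_{n+\ell}^0$ with coefficient $1$ from the $\e_1$ piece, while the remaining components $r_{n+\ell}^i$ with $i\ge 1$ appear multiplied by polynomials in the invariants $\{a_s^j\}$ only (because the vectors $K_nK_{n+1}\cdots K_{n+\ell-1}\e_{i+1}$ are built from the $K$'s alone). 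Summing the diagonal over $\ell$ gives
\[
\mathrm{tr}(Q_n)=\Rc_m(r_n^0)+S_n(\r), \qquad \Rc_m:=1+\T+\T^2+\cdots+\T^{m-1},
\]
where $S_n$ is an algebraic function of the invariants and of the $r_s^i$ with $i\ge 1$ only.

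Next I would establish the invertibility of $\Rc_m$ on $N$-periodic sequences. The discrete Fourier transform diagonalizes $\T$ with eigenvalues $\omega_k=e^{2\pi i k/N}$, $k=0,\dots,N-1$, so the eigenvalues of $\Rc_m$ are $\sigma_k=1+\omega_k+\cdots+\omega_k^{m-1}$, equal to $m$ when $\omega_k=1$ and to $(\omega_k^m-1)/(\omega_k-1)$ otherwise. Vanishing of some $\sigma_k$ would force $\omega_k\ne 1$ and $\omega_k^m=1$, so that $\omega_k$ would be a common root of $z^N-1$ and $z^m-1$; when $\gcd(N,m)=1$ the only such common root is $1$, a contradiction. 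Hence all $\sigma_k$ are nonzero and $\Rc_m$ is invertible on $N$-periodic sequences. (This generalizes the $m=3$ case stated after equation~\eqref{c3}.)

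Combining the two steps, the condition $\mathrm{tr}(Q_n)=0$ becomes $\Rc_m(r_n^0)=-S_n$, equivalent to
\[
r_n^0=-\Rc_m^{-1}(S_n)=:R_n(\r),
\]
which is an algebraic expression in the $r_s^i$ (with coefficients depending on the invariants), and the equivalence follows in both directions. The only nontrivial step is the invertibility of $\Rc_m$, which is exactly where the coprimality hypothesis is used; the trace computation itself is bookkeeping with the explicit form of $K_n$.
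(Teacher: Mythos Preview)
Your proof is correct and follows essentially the same approach as the paper: both compute the trace of $Q_n$, show that each $r_{n+\ell}^0$ appears with coefficient $1$ (via $K_n\cdots K_{n+\ell-1}\e_1=\e_{\ell+1}$), and then invert $\Rc_m=1+\T+\cdots+\T^{m-1}$ under the coprimality assumption. The only difference is that you supply the Fourier-transform argument for the invertibility of $\Rc_m$ explicitly, whereas the paper cites Lemma~3.1 of \cite{MW}.
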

\begin{proof} 
From~\eqref{Q} we compute
\begin{equation}\label{traceQ}
\text{tr}(Q_n)=\sum_{s=1}^{m-1} \e_{s+1}^TK_n\dots K_{n+s-1}\r_{n+s} + r_n^0,
\end{equation}
where the coefficient of $r_{n+s}^0$ is obtained as follows:
\[
\e_{s+1}^TK_nK_{n+1}\dots K_{n+s-1}\e_1 = \e_{s+1}^TK_n\dots K_{n+s-2}\e_2= \ldots=\e^T_{s+1} K_n \e_s = \e_{s+1}^T \e_{s+1} = 1.
\]
Then $X\in \V_N^1$ if and only if 
\[
\text{tr}(Q_n)=(1+\T+\dots+\T^{m-1}) r_n^0 + \widehat{R}_n=0,
\]
where $\widehat{R}$ is a function of  $\r^i_{n+s}$, $i, s=1, \ldots m-1$. The operator $\Rc=1+\T+\dots+\T^{m-1}$ is invertible if and only if $N$ and $m$ are co-prime (see Lemma 3.1 in~\cite{MW}). The claim follows, since its inverse, though non-local, can be expressed as an algebraic function of $\T$.
\end{proof}

\subsubsection{Poisson brackets} In the rest of this section,  $G=\SL(m, \RR)$ and we will work with two gradations of its Lie algebra: the standard gradation $\g = \g_+\oplus\h_0\oplus\g_-$ and the gradation $\g=\g_1\oplus \g_0 \oplus \g_{-1}$, 
both described in Section~\ref{PSonM}. 

For $H$ the subgroup whose Lie algebra is $\h=\g_1\oplus \g_0$, let $f:G^N/H^N\to \RR$ be a smooth functional on the moduli space and $\F: G^N \to \RR$ an extension of $f$ to $G^N$, invariant under the discrete gauge action~\eqref{dgauge} of $H^N$.

 As discussed in Section~\ref{PSonM}, we will work with the smooth sections of  the moduli space  $M=G^N/H^N$ defined by the right Maurer-Cartan matrices $L=\{L_n\}$ as in \eqref{L}, using the invariants $a_n^j$, $j=1, \ldots, m-1, n=1, \ldots , N,$ as coordinates of  $M$.
Here $\langle \ , \ \rangle$ denotes the standard inner product on $\g^N$ defined by
\begin{equation}
\label{inprodN}
\langle \xi,\tilde{\xi} \rangle:=\sum_{n=1}^N \mbox{tr}(\xi_n\tilde{\xi}_n), \qquad
\xi, \tilde{\xi} \in \g^N.
\end{equation}

\begin{lemma} For fixed $n$, the vector $\delta_nf$ of variational derivatives of $f$ with respect to the geometric coordinates $a_n^j$'s, $j=1, \ldots, m-1$, and the left gradient of its extension $\F$ satisfy the following relation:
\begin{equation}\label{onecond}
\nabla_n\F = \begin{pmatrix} \ast&\ast  \\-\delta_nf^T  & \ast\end{pmatrix}.
\end{equation}
Moreover, the $\g_1$-component of the right gradient satisfies:
\begin{equation}\label{onecomponent}
(\nabla'_n\F)_{1} = \begin{pmatrix} 0& (-1)^m\delta^T_n f\\ \0 & \0_{m-1}\end{pmatrix},
\end{equation}
\end{lemma}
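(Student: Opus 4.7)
The plan is to unpack the definitions of the left and right gradients of $\F$ for perturbations that change only the invariants $\a_n$ along the section of $G^N/H^N$ parametrized by the right Maurer-Cartan matrices \eqref{L}. Since $\F$ extends $f$ to $G^N$, its restriction to this section equals $f(\a)$, and varying the $\a_n$ yields
\[
\delta L_n = (-1)^m \sum_{j=1}^{m-1} \delta a_n^j\, E_{j,1},
\]
where $E_{j,1}$ is the matrix unit with a $1$ in position $(j,1)$ and zero elsewhere. The corresponding variation of $\F$ is $\delta \F = \sum_n (\delta_n f)^T \delta \a_n$. Identities \eqref{onecond} and \eqref{onecomponent} emerge by re-expressing this same variation through the gradient formulas in Definition \ref{gradients}.

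For the left gradient, $\delta \F = \sum_n \mathrm{tr}(\xi_n \nabla_n \F)$ with $\xi_n = \delta L_n\, L_n^{-1} = \delta L_n\, K_n$. The key algebraic observation is that the first row of $K_n$ is $(-1)^{m-1}\e_m^T$, so $E_{j,1} K_n = (-1)^{m-1} E_{j,m}$ and
\[
\xi_n = -\sum_{j=1}^{m-1} \delta a_n^j\, E_{j,m}.
\]
The support of $\xi_n$ is the last column in rows $1$ through $m-1$, so the trace against $\nabla_n \F$ isolates precisely its entries $(m,j)$ for $j=1,\ldots,m-1$. Equating to $(\delta_n f)^T \delta \a_n$ for arbitrary $\delta \a_n$ gives $(\nabla_n \F)_{m,j} = -(\delta_n f)_j$, i.e.~\eqref{onecond}.

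The right gradient is handled the same way, using $\delta \F = \sum_n \mathrm{tr}(\xi_n' \nabla_n' \F)$ with $\xi_n' = L_n^{-1}\delta L_n = K_n \delta L_n$. Since the first $m-1$ columns of $K_n$ are $\e_2, \e_3, \ldots, \e_m$, one gets $K_n E_{j,1} = E_{j+1,1}$ and
\[
\xi_n' = (-1)^m \sum_{j=1}^{m-1} \delta a_n^j\, E_{j+1, 1},
\]
supported in the first column, rows $2$ through $m$. The trace against $\nabla_n' \F$ extracts the entries $(1, j+1)$ for $j=1,\ldots,m-1$, yielding $(\nabla_n' \F)_{1, j+1} = (-1)^m (\delta_n f)_j$. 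Projecting $\nabla_n' \F$ onto $\g_1$---the subspace of matrices whose only nonzero entries lie in row $1$ at columns $2$ through $m$---then produces \eqref{onecomponent}. The argument is purely algebraic; the one bit of care required is to track the different block decompositions (the $(m-1)+1$ partition used in the statement versus the matrix-unit extraction used here) and to handle the sign $(-1)^m$ consistently.
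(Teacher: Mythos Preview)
Your proof is correct and follows essentially the same approach as the paper's: both identify the infinitesimal left perturbation $\xi_n=\delta L_n\,L_n^{-1}$ corresponding to a variation of $\a_n$ (your $-\sum_j\delta a_n^j E_{j,m}$ is exactly the paper's $V^{(k)}_n$ with $\v_k=\delta\a_k$) and read off the last row of $\nabla_n\F$ from the trace pairing. The only cosmetic difference is that for \eqref{onecomponent} the paper invokes the conjugation relation $\nabla'_n\F=L_n^{-1}\nabla_n\F\,L_n$ together with \eqref{onecond}, whereas you compute the right perturbation $\xi'_n=K_n\,\delta L_n$ directly; the two computations are equivalent.
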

\begin{proof}
Recall that the \emph{left gradient} of $\F$ at $L = \{L_n\}\in G^{N}$ is defined by: 
\[
\left. \frac d{d\epsilon}\right|_{\epsilon = 0} \F\big(\exp(\epsilon \xi) L\big) : = \langle \nabla \F(L), \xi\rangle, \qquad \forall \xi \in \g^N.
\]
In particular, choosing $\xi$ of the form $V^{(k)}:=\{V^{(k)}_n\}$ with $V^{(k)}_n=\begin{pmatrix}\ \0_{m-1} & -\v_n \\ \0^T&  0 \end{pmatrix}\delta_{nk}$,  we compute
\[
\exp(\epsilon V^{(k)}_n) L_n=\begin{pmatrix}\I_{m-1}&-\epsilon\v_n\delta_{nk}\\ \0^T&1\end{pmatrix}L_n = \begin{pmatrix} (-1)^m(\a_n+\epsilon\v_n\delta_{nk})&\I_{m-1}\\ (-1)^{m-1}&\0^T\end{pmatrix}.
\]
Since $\F$ is an extension of a function defined on the moduli space, we must have that $\F\big(\exp(\epsilon V^{(k)}) L\big) = f\big(\{\a_n+\epsilon\v_n\delta_{nk}\}\big)$. Differentiating both sides with respect to $\epsilon$ and evaluating at $\epsilon=0$, we obtain
\[
\delta_k f\cdot\v_k=\langle \nabla \F, V^{(k)}\rangle=\sum_{n=1}^N \mbox{tr}(\nabla_n \F V^{(k)}_n)=\mbox{tr}(\nabla_k \F V^{(k)}_k),
\]
where $\delta_k f\cdot\v_k$ is the dot product of $\v_k$ with the vector of variational derivatives of $f$ with respect to the components of $\a_k$. Writing $\nabla_k \F$ in block form $\begin{pmatrix} A_{m-1} &\mathbf{b} \\ \mathbf{c}^T & d \end{pmatrix}$,
we compute the right-hand side as $\langle \nabla \F, V^{(k)}\rangle=-\mathbf{c}^T\cdot \v_k$ and (after replacing $k$ with $n$) relation~\eqref{onecond} follows. Finally, \eqref{onecomponent} can be obtained by direct verification using the relation \eqref{lrgrad} between the left and right gradient.
\end{proof}
%
%
%

\begin{lemma}
Let $\F$ be an invariant extension of a smooth functional $f:G^N/H^N\to \RR$. Let $\T$ denote the left shift operator. Then
\begin{equation}\label{maincond}
\T^{-1}\nabla_n \F - \nabla_n'\F \in  \g_1,
\end{equation}
\end{lemma}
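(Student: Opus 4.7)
The plan is to extract the statement directly from the invariance of $\F$ under the discrete gauge action~\eqref{dgauge}, by differentiating at the identity in an arbitrary direction in $\h^N$.

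First, I would pick a smooth path $\{h_n(\epsilon)\}\subset H^N$ with $h_n(0) = \I$ and $\frac{\rd}{\rd\epsilon}\big|_{0} h_n(\epsilon) = \eta_n \in \h$. By invariance of $\F$,
\[
\F\big(\{h_{n+1}(\epsilon) L_n h_n(\epsilon)^{-1}\}\big) = \F(\{L_n\})\quad \forall\epsilon.
\]
Differentiating at $\epsilon = 0$, the tangent vector at the $n$-th slot is
$\frac{\rd}{\rd\epsilon}\big|_0 [h_{n+1}(\epsilon)L_n h_n(\epsilon)^{-1}]\cdot L_n^{-1} = \eta_{n+1} - L_n\eta_n L_n^{-1}$,
so by the definition of the left gradient and~\eqref{lrgrad},
\[
0 = \sum_{n=1}^N \langle \nabla_n\F,\ \eta_{n+1} - L_n\eta_n L_n^{-1}\rangle = \sum_{n=1}^N \langle \nabla_n\F, \eta_{n+1}\rangle - \sum_{n=1}^N \langle \nabla'_n\F, \eta_n\rangle.
\]
Using the periodicity of the index to shift $n\mapsto n-1$ in the first sum, this becomes
\[
\sum_{n=1}^N \langle \T^{-1}\nabla_n\F - \nabla'_n\F, \eta_n\rangle = 0\qquad \forall\, \eta\in\h^N.
\]
Consequently, for each $n$, the element $\T^{-1}\nabla_n\F - \nabla'_n\F$ lies in the orthogonal complement $\h^\perp$ with respect to the trace pairing.

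It remains to identify $\h^\perp$ inside $\sl(m,\RR)$. Decomposing an arbitrary $\xi = \xi_{-1}+\xi_0+\xi_1$ along the grading $\g=\g_{-1}\oplus\g_0\oplus\g_1$, the trace form pairs $\g_{-1}$ nondegenerately with $\g_1$, is nondegenerate on $\g_0$, and vanishes on $\g_1\times\g_1$, on $\g_{-1}\times\g_{-1}$, and across $\g_0\times\g_{\pm 1}$ (by block-triangular inspection of~\eqref{gradation}). Therefore, requiring $\langle\xi,\eta_0\rangle=0$ for all $\eta_0\in\g_0$ forces $\xi_0 = 0$, and requiring $\langle\xi,\eta_1\rangle=0$ for all $\eta_1\in\g_1$ forces $\xi_{-1}=0$. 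Hence $\h^\perp = \g_1$, and the assertion~\eqref{maincond} follows.

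The only step that requires any care is matching signs and shifts when passing from the derivative of $h_{n+1}L_n h_n^{-1}$ to the left/right gradient pairing; everything else is routine linear algebra for the chosen gradation. In particular, no further regularity or smallness assumption on $\eta$ is needed, since $\eta_n\in\h$ is arbitrary and independent at each $n$.
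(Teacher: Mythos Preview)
Your proof is correct and follows essentially the same approach as the paper: differentiate the gauge invariance of $\F$ along a one-parameter family in $H^N$, use the definitions of the left and right gradients together with~\eqref{lrgrad}, shift the index, and conclude by identifying the annihilator of $\h$ under the trace pairing with $\g_1$. The only cosmetic difference is that you apply the action~\eqref{dgauge} directly to $L$, whereas the paper passes through the action on $K$ and inverts (yielding $L_n\to h_{n+1}^{-1}L_n h_n$ instead of $L_n\to h_{n+1}L_n h_n^{-1}$); this amounts to replacing $h$ by $h^{-1}$ and changes nothing.
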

\begin{proof}

The gauge action of $H^N$ on $K$:  $K_n \to h_n^{-1} K_n h_{n+1}$  component-wise, induces the action $L_n \to h_{n+1}^{-1} L_n  h_n$ on the inverse $L=K^{-1}$.
Since $\F$ is an invariant extension of  $f$, it must be constant on the gauge orbits,
 i.e. $\F\big(h\cdot L\big) =\F\big(\{h_{n+1}^{-1} L_n h_{n}\}\big) = \F(L), \forall h \in H^N$.  
Choosing $h=\exp(\epsilon \xi)$ with $\xi_{n+N}=\xi_n$,
 \[
 \begin{split}
0 & =\left. \frac d{d\epsilon}\right|_{\epsilon = 0} \F\big(\{\exp(-\epsilon \xi_{n+1}) L_n\exp(\epsilon \xi_n)\}\big) \\ & =\left. \frac d{d\epsilon}\right|_{\epsilon = 0} \F\big(\{\exp(-\epsilon \xi_{n+1}) L_n\}\big)+\left. \frac d{d\epsilon}\right|_{\epsilon = 0} \F\big( \{L_n\exp(\epsilon \xi_n\}) \big)\\
& =-\sum_{n=1}^N\mbox{tr}\big(\nabla_n\F\xi_{n+1}\big) + \sum_{n=1}^N\mbox{tr}\big(\nabla'_n\F\xi_{n}\big)   = \sum_{n=1}^N \mbox{tr}\big( (-\T^{-1}\nabla_n\F + \nabla'_n\F) \xi_n\big)
\end{split}
\]
must hold for any $\xi \in H^N$. It follows that $\T^{-1}\nabla_n \F - \nabla_n'\F$ must belong to the annihilator of the subalgebra $\h$, which is the dual of $\g_{-1}\oplus\g_0$, namely $\g_1$.
\end{proof}
\begin{remark}
Conditions \eqref{onecond}, \eqref{maincond} determine $\nabla\F$ uniquely  in terms of the geometric invariants and the variational derivatives $\delta_n f$. The proof of this fact can be found in \cite{MW}.
\end{remark}

Moreover, these conditions lead to a concrete representation of the reduction of the twisted Poisson bracket~\eqref{twisted} to the moduli space $G^N/H^N$:
\begin{equation}\label{reducedb}
 \{f, h\}_R(\a) : = \{\F, \H\}(L), 
\end{equation}
with $f, h$ smooth functionals on $G^N/H^N$, $\F, \H$ their invariant extensions to $G^N$, $\{, \}$ as in \eqref{twisted}, $\a=\{\a_n\}\in G^N/H^N$, and $L = \{L_n\}\in G^N$ as in \eqref{L}.

\begin{prop}\label{expl} Given $f,h,\F, \H$ as above, the reduced bracket \eqref{reducedb} has the following explicit representation:
\begin{equation}
\label{explicit}
\{f, h\}_R=\langle ( \nabla \F-\T  \nabla' \F)_1, ( \nabla \H )_{-1}\rangle=\sum_{n=1}^N  {\mathrm{tr}}\big( ( \nabla_n \F -\T \nabla'_n \F)_1( \nabla_n \H )_{-1}\big),
\end{equation}
where the subscripts indicate the components w.r.t. the gradation $\g=\g_{1}\oplus\g_0\oplus\g_{-1}$.
\end{prop}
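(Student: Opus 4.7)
The plan is to substitute the characterizations \eqref{onecond} and \eqref{maincond} of the gradient of an invariant extension directly into the twisted bracket \eqref{twisted} and watch the cancellations reduce it to the announced formula. First I would expand each of the four summands using $r(\xi\otimes\eta)=\langle\xi_-,\eta_+\rangle$ and $r(\xi\wedge\eta)=\tfrac12[\langle\xi_-,\eta_+\rangle-\langle\eta_-,\xi_+\rangle]$ with respect to the first gradation $\g=\g_+\oplus\h_0\oplus\g_-$. Setting $\phi^\F_s:=\nabla_s\F-\T\nabla_s'\F$, condition \eqref{maincond} says $\phi^\F_s\in\g_1\subset\g_+$, so $(\T\nabla_s'\F)_-=(\nabla_s\F)_-$, and, after a $\T$-shift of summation index, $(\nabla_s'\F)_-=(\T^{-1}\nabla_s\F)_-$ while $(\nabla_s'\F)_+=(\T^{-1}\nabla_s\F)_+-\T^{-1}\phi^\F_s$, with symmetric identities for $\H$.

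Plugging these into \eqref{twisted} and using the $\T$-invariance of the sum over $s$, the generic pairings $\langle(\nabla_s\F)_-,(\nabla_s\H)_+\rangle$ and $\langle(\nabla_s\H)_-,(\nabla_s\F)_+\rangle$ appear with coefficients that sum to zero across the four summands, leaving only the ``correction'' terms carrying $\phi^\F_s$ and $\phi^\H_s$. The surviving residue has the form $\tfrac12\sum_s[\langle(\nabla_s\H)_-,\phi^\F_s\rangle-\langle(\nabla_s\F)_-,\phi^\H_s\rangle]$. Because $\phi^\F_s\in\g_1$ pairs non-trivially via the trace form only with $\g_{-1}$, the projection $(\cdot)_-$ may be replaced by $(\cdot)_{-1}$ in each bracket, giving $\tfrac12\sum_s[\langle(\nabla_s\H)_{-1},\phi^\F_s\rangle-\langle(\nabla_s\F)_{-1},\phi^\H_s\rangle]$.

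The main obstacle is then to eliminate the antisymmetrization and the factor of $\tfrac12$, i.e.\ to establish the identity $\sum_s\langle(\nabla_s\F)_{-1},\phi^\H_s\rangle=-\sum_s\langle(\nabla_s\H)_{-1},\phi^\F_s\rangle$. I would prove this by expanding $\phi^\H_s=\nabla_s\H-\T\nabla_s'\H$, using cyclicity of the trace together with the left--right gradient relation $\nabla_s'\F=L_s^{-1}\nabla_s\F L_s$, and shifting the summation index to move $\T$ across factors. The two sides then become the pairing of $\nabla_s\F$ (respectively $\nabla_s\H$) with the matrix $(\nabla_s\H)_{-1}-L_s(\nabla_{s-1}\H)_{-1}L_s^{-1}$ (respectively its $\F$-analog), and the uniqueness of the gradient of an invariant extension given by \eqref{onecond}--\eqref{maincond} forces them to be negatives of one another. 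This last step is the delicate one: it encodes the interplay between the compatibility of the two gradations of $\mathfrak{sl}(m,\RR)$, the periodicity of the polygon, and the gauge-invariance of $\F$ and $\H$, and collapses the antisymmetrized residue onto the compact expression $\sum_s\mathrm{tr}((\nabla_s\F-\T\nabla_s'\F)_1(\nabla_s\H)_{-1})$ asserted by the proposition.
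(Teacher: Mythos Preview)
Your plan is correct and follows essentially the same route as the paper's proof: expand \eqref{twisted}, use \eqref{maincond} to conclude $(\T\nabla'_s\F)_-=(\nabla_s\F)_-$, collapse to a residue of the form $\tfrac12\sum_s[\langle(\nabla_s\H)_{-1},\phi^\F_s\rangle-\langle(\nabla_s\F)_{-1},\phi^\H_s\rangle]$, and then prove the antisymmetry identity to remove the $\tfrac12$. The only wobble is your description of that last step: the paper does not invoke any ``uniqueness of the gradient'' argument, nor does the expression $(\nabla_s\H)_{-1}-L_s(\nabla_{s-1}\H)_{-1}L_s^{-1}$ appear. Instead it computes directly: since $\phi^\H_s\in\g_1$, one has $\langle(\nabla_s\F)_{-1},\phi^\H_s\rangle=\mathrm{tr}(\nabla_s\F\,\phi^\H_s)$; then $\mathrm{tr}(\nabla_s\F\,\nabla_s\H)=\mathrm{tr}(\nabla'_s\F\,\nabla'_s\H)$ by conjugation invariance, and after a shift this combines with the $\T\nabla'_s\H$ term to give $-\sum_s\mathrm{tr}(\phi^\F_s\,\T\nabla'_s\H)$, which equals $-\sum_s\mathrm{tr}(\phi^\F_s\,(\nabla_s\H)_{-1})$ because $\phi^\F_s\phi^\H_s=0$ in $\g_1$. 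Your listed ingredients (cyclicity, the relation $\nabla'_s\F=L_s^{-1}\nabla_s\F L_s$, index shift) are exactly what is used, so the plan goes through once you replace the vague uniqueness appeal by this short computation.
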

\begin{proof}
Using \eqref{twisted}, with  $r(\xi\wedge \eta) =\frac12 \mbox{tr}(\xi (\eta_+-\eta_-))$  and $r(\xi\otimes\eta)=\mbox{tr}(\xi_-\eta_+)+\sum_k\mbox{tr}(\xi_0^k\eta_0^k)$, we compute
\[
\begin{split}
\{f, h\}_R = \frac12 &\sum_{n=1}^N\mbox{tr}\big( (\nabla_n \F)_{-} (\nabla_n \H)_{+}\big) -  \mbox{tr}\big( (\nabla_n \H)_{-} (\nabla_n \F)_{+}\big)\\  
+  \frac12 &\sum_{n=1}^N \mbox{tr}\big( (\nabla_n' \F)_{-}(\nabla_n' \H)_{+}\big) -  \mbox{tr}\big((\nabla'_n \H)_{-} (\nabla_n' \F)_{+}\big) \\
\  -& \sum_{n=1}^N \mbox{tr}\big( \T(\nabla_n' \F)_{-} (\nabla_n \H)_{+}\big)  + \mbox{tr}\big( \T(\nabla_n' \H)_{-} (\nabla_n \F)_{+}\big)\\
+&\sum_{n=1}^N-\mbox{tr}\big( \T(\nabla_n' \F)_{0}^{k} (\nabla_n \H)_{0}^k\big)+\mbox{tr}\big( \T(\nabla_n' \H)_{0}^{k} (\nabla_n \F)_{0}^k\big).
\end{split}
\]
Since $\g_1 \subset \g_+$, condition \eqref{maincond} implies that $\T^{-1}\nabla_n \F - \nabla_n'\F \in \g_+$ and therefore that $(\T\nabla'_n \F)_{-} = (\nabla_n\F)_-$ and $(\T\nabla'_n \F)_{0}^k = (\nabla_n\F)_0^k$ for all $k$. Similarly, $(\T\nabla'_n \H)_{-} = (\nabla_n\H)_-$ and $(\T\nabla'_n \H)_{0}^k = (\nabla_n\H)_0^k$. Using these relationships we get
\[
\begin{split}
\{f, h\}_R& =\frac12\sum_{n=1}^N\mbox{tr}\big( (\nabla_n \F)_{+} (\nabla_n \H)_{-}\big) -  \mbox{tr}\big( (\nabla_n \H)_{+} (\nabla_n \F)_{-}\big) \\
& -\mbox{tr}\big( (\nabla_n' \F)_{+}(\nabla_n' \H)_{-}\big)+  \mbox{tr}\big((\nabla'_n \H)_{+}(\nabla_n' \F)_{-}\big).  
\end{split}
\]
Using condition~\eqref{maincond} for $\H$, we have
\[
\begin{split}
\sum_{n=1}^N\mbox{tr}\big((\nabla_n \F)_{+} (\nabla_n \H)_{-}\big) & - \mbox{tr}\big((\nabla_n' \F)_{+} (\nabla_n' \H)_{-}\big)\\ 
& = \sum_{n=1}^N\mbox{tr}\big((\nabla_n \F)_{+} (\nabla_n \H)_{-}\big) - \mbox{tr}\big( (\nabla_n' \F)_{+} \T^{-1}(\nabla_n \H)_{-}\big)\\
& = \sum_{n=1}^N\mbox{tr}\big( (\nabla_n \F -\T\nabla'_n\F)_+, (\nabla_n \H)_{-}\big).
\end{split}
\]
Similarly,
\[
\begin{split}
\sum_{n=1}^N\mbox{tr}\big((\nabla_n \H)_{+} (\nabla_n \F)_{-}\big) & - \mbox{tr}\big((\nabla_n' \H)_{+} (\nabla_n' \F)_{-}\big)\\ 
& = \sum_{n=1}^N\mbox{tr}\big( (\nabla_n \H -\T\nabla'_n\H)_+(\nabla_n \F)_{-}\big).
\end{split}
\]
On the other hand, from the identity $\mbox{tr}(\xi_1 \xi_-)=\mbox{tr}(\xi_1\xi_{-1})$ and the fact that $\F$ and $\H$ satisfy~\eqref{maincond}, we rewrite the expression of the reduced bracket as
\[
\{f, h\}_R = \frac12 \sum_{n=1}^N\mbox{tr}\big( (\nabla_n \F -\T\nabla'_n\F)_1(\nabla_n \H)_{-1}\big)- \mbox{tr}\big( (\nabla_n \H -\T\nabla'_n\H)_1 (\nabla_n \F)_{-1}\big).
\]
Since $\nabla_n \H-\T\nabla'_n\H\in \g_1$, the second term becomes
\[
\begin{split}
&  \mbox{tr}\big( (\nabla_n \H -\T\nabla'_n\H)_1(\nabla_n \F)_{-1}\big) =\mbox{tr}\big( (\nabla_n \H -\T\nabla'_n\H) \nabla_n \F\big) \\  = & -\mbox{tr}\big( \nabla_n \F \, \T\nabla'_n\H\big) +\mbox{tr}\big(\nabla_n \H \, \nabla_n \F\big) = -\mbox{tr}\big( \nabla_n \F\, \T\nabla'_n\H\big) +\mbox{tr}\big(\nabla'_n \H \, \nabla'_n \F\big),
\end{split}
\]
where in the last step we used relation~\eqref{lrgrad} and the invariance of the trace under conjugation. It follows that the operator  $\nabla - \T \nabla'$  is skew-symmetric on the set of invariant extensions. In fact, shifting indices appropriately and using~\eqref{maincond},
\[
\begin{split}
&   \sum_{n=1}^N\mbox{tr}\big( (\nabla_n \H -\T\nabla'_n\H)_1 (\nabla_n \F)_{-1}\big) 
=\sum_{n=1}^N -\mbox{tr}\big( \nabla_n \F\, \T\nabla'_n\H\big) +\mbox{tr}\big( \nabla'_n \H  \, \nabla'_n \F\big)\\
 = & \sum_{n=1}^N -\mbox{tr}\big( (\nabla_n \F-\T \nabla'_n \F)\, \T\nabla'_n\H\big)=-\sum_{n=1}^N \mbox{tr}\big( (\nabla_n \F -\T \nabla'_n \F)_1\, (\T\nabla'_n\H)_{-1}\big) \\  =& -\sum_{n=1}^N \mbox{tr}\big( (\nabla_n \F -\T \nabla'_n \F)_1\, (\nabla_n\H)_{-1}\big).
\end{split}
 \]
This finally gives 
 \[
\{f, h\}_R =  \sum_{n=1}^N\mbox{tr}\big( (\nabla_n \F -\T\nabla'_n\F)_1\, (\nabla_n \H)_{-1}\big).
\]
\end{proof}

The companion Poisson bracket, 
obtained in \cite{MW} as the reduction of the right tensor $\theta$ to the moduli space, also has the following explicit representation:
\begin{equation}
\label{reduced0}
\{f, h\}_0=\frac12\sum_{n=1}^N \mbox{tr}\big((\nabla_n' \F)_{-1} \,(\nabla_n' \H)_{1})\big)-\mbox{tr}\big((\nabla_n' \H)_{-1}\, (\nabla_n' \F)_{1})\big) .
\end{equation}

\subsubsection{Invariant Hamiltonian flows}
 
Next we study the relation between Hamiltonian flows induced by the reduced bracket $\{ \, , \}_R$ on the geometric invariants ${a_n^j}$'s, viewed as coordinates of the moduli space of $\M^1_N$, and the associated polygonal evolutions.

Assume that $\a=\{\a_n\}$ evolves according to the Hamiltonian evolution
\begin{equation}\label{ahamflow}
\frac{\rd}{\rd t}{\a}_n=\{ f, \a_n\}_R,
\end{equation}
for given Hamiltonian $f:G^N/H^N \rightarrow \RR$.  The following key result relates the matrix $Q=\{Q_n\}$ defining both polygon and frame evolution to the right and left gradients of an invariant extension $\F$ of $f$.

\begin{lem}
Let $\gam=\{ \gam_n \}\in \M^1_N$ and  $Q=\{Q_n\}$ as in~\eqref{Q}. Then,  the polygon evolution~\eqref{pev} (equivalently, the evolution~\eqref{frameev} of its left moving frame) induces the Hamiltonian equation~\eqref{ahamflow}  if and only if
\begin{equation}\label{gradLeftQcomp}
(Q_{n+1})_{-1} = (\nabla_n \F)_{-1}.
\end{equation}
\end{lem}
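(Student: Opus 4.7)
The plan is to compute both sides of the claimed equivalence directly and match them. Starting from the frame-evolution side, applying the identity $\ddt L_n = -L_n (\ddt K_n) L_n$ to the compatibility equation~\eqref{compat}, $\ddt K_n = K_n Q_{n+1} - Q_n K_n$, yields the clean matrix evolution
\[
\ddt L_n = L_n Q_n - Q_{n+1} L_n.
\]
Since the only non-constant entries of $L_n$ are $(L_n)_{j,1} = (-1)^m a_n^j$ for $j = 1, \ldots, m-1$, extracting the $(j,1)$ entry of this identity expresses $\ddt a_n^j$ as a linear combination of entries of $Q_n$ and $Q_{n+1}$, in which the $\g_{-1}$ component of $Q_n$ enters directly through $(-1)^m (Q_n)_{j+1,1}$, while the remaining terms involve the $j$th row of $Q_{n+1}$ and the invariants $a_n^k$.

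On the Hamiltonian side, I would apply Proposition~\ref{expl} to the coordinate functional $h = a_n^j$. By~\eqref{onecond}, its invariant extension $\H$ satisfies $(\nabla_k \H)_{-1} = -E_{j+1,1}\,\delta_{k,n}$, where $E_{j+1,1}$ denotes the elementary matrix with a $1$ at position $(j+1,1)$. The sum in~\eqref{explicit} then collapses to a single term:
\[
\{f, a_n^j\}_R = -(\nabla_n \F - \T \nabla'_n \F)_{1,j+1} = -(\nabla_n \F - \nabla'_{n+1}\F)_{1,j+1}.
\]
The invariance condition~\eqref{maincond}, shifted by one index, gives $\nabla_n\F - \nabla'_{n+1}\F \in \g_1$; in particular the $\g_0$ and $\g_{-1}$ components of $\nabla_n\F$ and $\nabla'_{n+1}\F$ coincide. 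Together with the relation~\eqref{lrgrad}, $\nabla'_{n+1}\F = L_{n+1}^{-1}\nabla_{n+1}\F L_{n+1}$, this brings the bracket side into a form directly comparable with the polygon side.

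Equating the two expressions of $\ddt a_n^j$ for each $n$ and each $j=1,\dots,m-1$ yields a linear system in the entries of $Q_n$ and $Q_{n+1}$. The structural constraints on $Q_n$ coming from its definition~\eqref{Q} (its first column is $\r_n$) and from the arc-length condition~\eqref{alpreserve} (fixing $r_n^0$ via Proposition~\ref{LAP}), together with the vanishing $\g_{\geq 0}$-component of $\nabla_n \F - \nabla'_{n+1} \F$, cause most redundancies in this system to collapse. What remains is precisely the clean matrix identity $(Q_{n+1})_{-1} = (\nabla_n\F)_{-1}$. The converse implication is obtained by reversing the computation: if this identity holds, direct substitution recovers $\ddt a_n^j = \{f, a_n^j\}_R$ for all $n,j$.

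The main technical hurdle will be this combinatorial matching step: tracking how each entry of $Q_n$ and $Q_{n+1}$ contributes to $\ddt a_n^j$ through the identity for $\ddt L_n$, and showing that, after invoking both invariance conditions~\eqref{onecond} and~\eqref{maincond}, the trace-zero condition on $Q_n$, and the specific block form of $L_n$, the whole system of $N(m-1)$ scalar equations is equivalent to the single matrix identity on the $\g_{-1}$ components of $Q_{n+1}$ and $\nabla_n\F$.
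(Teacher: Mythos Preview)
Your approach has a concrete error in how you read condition~\eqref{onecond}. That condition determines the \emph{last row} of $\nabla_k\H$ (the block structure there is $(m-1)+1$, as the proof of that lemma makes explicit by writing $\nabla_k\F = \begin{pmatrix} A_{m-1} & \mathbf{b}\\ \mathbf{c}^T & d\end{pmatrix}$), not its first column. Since the $\g_{-1}$ component in the gradation~\eqref{gradation} consists of the first column below the $(1,1)$ entry, the identity $(\nabla_k\H)_{-1} = -E_{j+1,1}\,\delta_{k,n}$ for the coordinate functional $h=a_n^j$ does \emph{not} follow from~\eqref{onecond}; only the single entry $(m,1)$ overlaps. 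Determining the full $(\nabla_k\H)_{-1}$ requires solving conditions~\eqref{onecond} and~\eqref{maincond} together, which for a coordinate function is already a nontrivial recursion. Without this identification your ``sum collapses to a single term'' step fails, and the subsequent entry-by-entry matching never gets started.

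The paper avoids this entirely by pairing against an \emph{arbitrary} test functional $h$ rather than coordinate functions. From the compatibility relation one writes $\ddt h(\a) = \langle LQL^{-1} - \T Q,\, \nabla\H\rangle$; from the explicit bracket~\eqref{explicit} and the skew-symmetry of $\nabla - \T\nabla'$ established in Proposition~\ref{expl} one writes $\{f,h\}_R = \langle \nabla\F,\, \T\nabla'\H - \nabla\H\rangle$. A short manipulation of inner products (using $\nabla'\H = L^{-1}\nabla\H L$) reduces the equality of these two expressions to $\langle \T Q - \nabla\F,\, \T\nabla'\H - \nabla\H\rangle = 0$ for all $h$. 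Since $\T\nabla'\H - \nabla\H$ lies in $\g_1$ by~\eqref{maincond}, this forces $(\T Q - \nabla\F)_{-1} = 0$ directly. The combinatorial matching you flag as the main technical hurdle simply does not arise.
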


\begin{proof}
Given an arbitrary smooth functional $h:G^N/H^N \rightarrow \RR$, we use the following form the compatibility condition~\eqref{compat} 
\begin{equation}\label{determineQ}
\begin{pmatrix}\0_{m-1}&\ddt \a_n\\ \0^T&0\end{pmatrix} = Q_{n+1}-K_n^{-1}Q_nK_n = Q_{n+1}-L_nQ_nL_n^{-1},
\end{equation}
and expression~\eqref{onecond} of the left gradient of $h$'s invariant extension $\H$ 
to write
\[
\ddt h(\a)=\sum_{n=1}^N\delta_n h^T \cdot \ddt\a_n =\langle LQL^{-1}-\T Q , \nabla\H\rangle,
\]
If the $\a_n$'s evolve according to~\eqref{ahamflow}, using the explicit representation~\eqref{explicit} of the reduced bracket, we also have
\[
\ddt h(\a)=\{f, h\}_R=\langle ( \nabla \F-\T  \nabla' \F)_1, ( \nabla \H )_{-1}\rangle.
\]
Rewriting
\[
\begin{split}
&\langle LQL^{-1}-\T Q , \nabla\H \rangle 
=\langle LQL^{-1}, \nabla\H\rangle - \langle \T Q , \nabla\H\rangle 
=\langle Q, L^{-1} \nabla \H L\rangle  -\langle \T Q, \nabla\H\rangle \\ &=\langle Q,\nabla' \H\rangle  -\langle \T Q , \nabla\H\rangle =\langle \T Q,\T\nabla' \H\rangle -\langle \T Q , \nabla\H\rangle
 = \langle \T Q , \T\nabla' \H-\nabla\H\rangle,
\end{split}
\]
and 
\[
\langle ( \nabla \F-\T  \nabla' \F)_1, ( \nabla \H )_{-1}\rangle=-\langle ( \nabla \H-\T  \nabla' \H)_1, ( \nabla \F )_{-1}\rangle=
\langle \nabla \F ,  \T  \nabla' \H -\nabla \H\rangle,
\]
and comparing the two expressions for $\ddt h(\a)$, we obtain
\[
\langle \T Q -\nabla \F, \T\nabla' \H-\nabla\H\rangle = 0.
\]
Since $\T\nabla_n' \H-\nabla_n\H \in \g_1$ for any invariant $\H$, then $(Q_{n+1}-\nabla_n \F)_{-1}$ must vanish and~\eqref{gradLeftQcomp} follows.
 \end{proof}

 \begin{remark}
The right gradient of $\F$ satisfies the analogous relation 
 \begin{equation}\label{gradRightQcomp}
 (Q_n)_{-1} = (\nabla_n' \F)_{-1},
 \end{equation}
derived directly from using condition \eqref{maincond} in~\eqref{gradLeftQcomp}.
\end{remark}

Since 
\begin{equation}\label{QfromX}
\begin{split}
Q_n& =\rho_n^{-1}\begin{pmatrix} X_n & X_{n+1} \ldots & X_{n+m-1} \end{pmatrix}=\rho_n^{-1}\begin{pmatrix} \rho_n \r_n &\rho_{n+1} \r_{n+1}& \ldots &  \rho_{n+m-1} \r_{n+m-1} \end{pmatrix} \\
&=\begin{pmatrix} \r_n & \rho_n^{-1} \rho_{n+1}\r_{n+1} & \ldots &\rho_n^{-1} \rho_{n+m-1} \r_{n+m-1} \end{pmatrix},
\end{split}
\end{equation}
where the vector field $X_n =  \rho_n \r, \, \r^T = (r_n^0, \dots , r_n^{m-1})$ defines the polygon evolution ~\eqref{pev}, the entries $r^i_n,\, i=1, \ldots n+m-1$  of $\r$ are uniquely determined by  $(Q_n)_{-1}$,  while the remaining entry $r^0_n$ is determined by the arc length preservation condition as in Proposition~\ref{LAP}. From this observation and relation~\eqref{gradLeftQcomp} it follows that:
 \begin{theorem}\label{Xf} Given a smooth function $f:G^N/H^N \rightarrow \RR$, there exists an invariant vector field $X^f$ on $\M^1_N$,  constructed as above,  that induces the evolution~\eqref{ahamflow} on the geometric invariants $\{\a_n\}$ with Hamiltonian $f$. \end{theorem}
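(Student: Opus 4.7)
The plan is to construct $X^f$ explicitly and then invoke the preceding Lemma, together with Proposition~\ref{LAP}, to verify both existence and uniqueness. First, fix an invariant extension $\F$ of $f$ to $G^N$ and examine $(\nabla_n\F)_{-1}\in\g_{-1}$: from the description of the gradation in \eqref{gradation}, this is a matrix whose only non-zero entries sit in the first column below the diagonal, and so it encodes a vector $\v_n\in\RR^{m-1}$ depending algebraically on the invariants $\a$ and the variational derivatives $\delta f$.

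Next, I would define the components $\r_n=(r_n^0, r_n^1,\dots,r_n^{m-1})^T$ of the candidate vector field as follows. Set
\[
(r_{n+1}^1,\dots, r_{n+1}^{m-1})^T := \v_n,
\]
which, after shifting the index by one, determines $r_n^i$ for $i=1,\dots,m-1$ as an algebraic function of $\a$ and $\delta f$. Then define $r_n^0 := R_n(\r)$ via the arc length preservation condition of Proposition~\ref{LAP}; the invertibility of $\Rc$ needed here is guaranteed by the standing assumption that $N$ and $m$ are coprime. With these choices the vector field $X^f_n := \rho_n\r_n$ lies in $\V_N^1$.

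To verify existence, I observe from \eqref{Q} that the first column of $Q_n$ is exactly $\r_n$, so $(Q_n)_{-1}$ is the matrix encoding $(r_n^1,\dots,r_n^{m-1})$. By construction this gives $(Q_{n+1})_{-1}=(\nabla_n\F)_{-1}$, which is precisely condition \eqref{gradLeftQcomp} of the preceding Lemma; therefore the polygon evolution $\ddt\gam_n = X^f_n$ induces the Hamiltonian flow \eqref{ahamflow} on $\a$. For uniqueness, suppose $\widetilde X$ is any other invariant arc length-preserving vector field on $\M^1_N$ inducing the same evolution on the invariants. The same Lemma forces $(\widetilde Q_{n+1})_{-1}=(\nabla_n\F)_{-1}=(Q_{n+1})_{-1}$, so $\widetilde r_n^i = r_n^i$ for $i=1,\dots,m-1$, and then Proposition~\ref{LAP} forces $\widetilde r_n^0 = r_n^0$, giving $\widetilde X = X^f$.

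The only real subtlety is bookkeeping: one must keep track of the index shift between $(Q_{n+1})_{-1}$ and $(\nabla_n\F)_{-1}$ (which amounts to a harmless relabeling of the components of $\r$) and confirm that the reconstruction of $r_n^0$ via $\Rc^{-1}$ is well defined. Both points are handled by the coprimality hypothesis, so no deeper obstacle remains.
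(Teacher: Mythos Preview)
Your proposal is correct and follows essentially the same route as the paper: the paper's argument (given in the paragraph immediately preceding the theorem) observes that the first column of $Q_n$ is $\r_n$, so $(Q_n)_{-1}$ determines $r_n^1,\dots,r_n^{m-1}$, and then invokes \eqref{gradLeftQcomp} together with Proposition~\ref{LAP} to fix $r_n^0$. You have simply written this out more explicitly, including the index shift and the uniqueness half, both of which the paper leaves implicit.
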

\begin{remark} Adding to $X^f$ any vector field for which $(\a_n)_t = 0$ (i.e.~an  infinitesimal rigid symmetry) will induce the same evolution~\eqref{ahamflow} on the geometric invariants. (The infinitesimal symmetries will be later shown to form the kernel of a pre-symplectic form $\omega_1$ introduced in Section~\ref{sympSec}.) However, the construction leading to the vector field described in Theorem \ref{Xf}  determines $X^f$ uniquely. In fact, for given $f$,  the gradient of an extension $\F$ is uniquely defined by the condition $\T\nabla_n' \F-\nabla_n\F \in \g_1$. Then,  one constructs $X^f$ as the linear combination of the columns of the moving frame $\rho_n$ with coefficients given by the entries of $(Q_n)_{-1} = (\nabla_n' \F)_{-1}$ (in other words, by the entries of the first column of $Q_n$). This construction is discussed in more detail in \cite{MW}. \end{remark}

The $\g_1$-component of the matrix $Q:=Q^f$, associated with this choice of vector field $X^f$, can be identified with the vector of variational derivatives of the Hamiltonian $f$. This result will be used in Section~\ref{sympSec} to relate the reduced brackets to a pair of pre-symplectic forms on $\M^1_N$.

\begin{proposition}\label{Qtovariations}
Let $X^f$ be the vector field described in Theorem~\ref{Xf} and $Q^f$  be as in~\eqref{QfromX}.  Then
\begin{equation}\label{deltaform}
(Q_n^f)_1 = \begin{pmatrix} 0& (-1)^m \delta_n f^T\\ \0 & \0_{m-1}\end{pmatrix}.
\end{equation}
\end{proposition}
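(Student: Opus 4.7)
By equation~\eqref{onecomponent}, the right-hand side of~\eqref{deltaform} coincides with $(\nabla'_n \F)_1$, so the proposition is equivalent to the matrix identity $(Q_n^f)_1 = (\nabla'_n \F)_1$. The analogous identity for the $\g_{-1}$-component, namely $(Q_n^f)_{-1} = (\nabla'_n \F)_{-1}$, is already furnished by~\eqref{gradRightQcomp}. The task is thus to upgrade the agreement between $Q_n^f$ and $\nabla'_n \F$ from the $\g_{-1}$-component to the $\g_1$-component.

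My plan is to exploit the moving-frame compatibility~\eqref{compat}, $\ddt K_n = K_n Q_{n+1} - Q_n K_n$, together with the fact that the first row of $K_n$ is the time-independent vector $(\0^T, (-1)^{m-1})$. Extracting the first row of~\eqref{compat} yields the structural identity
\[
(\text{first row of } Q_n)\cdot K_n \;=\; (-1)^{m-1}\,(\text{last row of } Q_{n+1}).
\]
Multiplying on the right by $L_n = K_n^{-1}$ and using the explicit form of $L_n$ from~\eqref{L}, this becomes the entrywise relation $(Q_n)_{1,j+1} = (-1)^{m-1}\,(Q_{n+1})_{m,j}$ for $j=1,\ldots,m-1$. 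Comparing with the target value $(-1)^m\delta_n f_j$, the claim is reduced to verifying
\[
(Q_{n+1})_{m,j} \;=\; -\delta_n f_j, \qquad j=1,\ldots,m-1. \qquad (\ast)
\]

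For $j=1$, identity~$(\ast)$ is immediate: $(Q_{n+1})_{m,1} = r_{n+1}^{m-1}$ is the last entry of the first column of $Q_{n+1}$, and by~\eqref{gradLeftQcomp} it equals $(\nabla_n\F)_{m,1} = -\delta_n f_1$ (using~\eqref{onecond}). For $j\ge 2$, I would invoke the invariance condition~\eqref{maincond}: combined with~\eqref{onecond}, it shows that the full last row of $\nabla'_{n+1}\F$ in positions $1$ through $m-1$ equals $-\delta_n f^T$. The remaining step is to transfer this to $Q_{n+1}$, which I would handle inductively on $j$. Each inductive step combines the explicit expression $(Q_{n+1})_{m,j+1} = (K_{n+1}\cdots K_{n+j-1}\,\r_{n+j})_m$ from~\eqref{Q} with the recursive characterization of the first-column entries of $\nabla_{n+k}\F$ obtained by iterating~\eqref{maincond}.

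The main obstacle will be the combinatorial bookkeeping in this inductive step: one must verify that the contributions from the invariants $a_{n+k}^\ell$ and the shifted variational derivatives $\delta_{n+k-\ell}f_i$ telescope so that only $-\delta_n f_j$ survives in~$(\ast)$. A cleaner alternative, which I would also explore, is to directly write $Q_n - \nabla'_n \F = K_n(Q_{n+1}-\nabla_n\F)L_n - \dot K_n L_n$; since $\dot K_n L_n \in \g_{-1}$ and $(Q_{n+1}-\nabla_n\F)_{-1} = 0$, a careful block computation of the $\g_1$-part of $K_n(Q_{n+1}-\nabla_n\F)L_n$ should reduce the identity to the same~$(\ast)$, whose verification is again the crux.
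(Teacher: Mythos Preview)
Your proposal is correct and follows essentially the same route as the paper. Your reduction to $(\ast)$ via the structural identity $(Q_n)_{1,j+1}=(-1)^{m-1}(Q_{n+1})_{m,j}$ is exactly the paper's identity $\e_m^T L_n=(-1)^{m-1}\e_1^T$ read in the other direction, and the inductive mechanism you outline---replace $\r_{n+j}$ by $\nabla_{n+j-1}\F\,\e_1$ via~\eqref{gradLeftQcomp}, conjugate using $K_{n+k}\nabla_{n+k}\F=\nabla'_{n+k}\F\,K_{n+k}$ from~\eqref{lrgrad}, then step the index down with~\eqref{maincond}---is precisely what the paper does with its columns $D_n^i$. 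One remark: the ``combinatorial bookkeeping'' you anticipate is not needed; the conjugation identity absorbs all the $a_{n+k}^\ell$-contributions cleanly, so no telescoping of shifted variational derivatives has to be tracked by hand.
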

%
\begin{proof} 
Writing $Q$ in the following form
\begin{equation}\label{D}
Q_n^f = \begin{pmatrix}D_n^{0} & D_n^{1} & \dots  & D_n^{m-1}\end{pmatrix},
\end{equation}
with $D_n^0=\r_n$, $D_n^i = K_n\dots K_{n+i-1}\r_{n+i}$, $i=1,\dots m-1$, one has 
\begin{equation}
\label{relations}
K_n D_{n+1}^i = D_n^{i+1}, \qquad i=0, \ldots m-1.
\end{equation}
We will also make use of the following easily verifiable identities:
\[
L_n\e_k = \e_{k-1}, \ k=2, \ldots, m,   \qquad \quad  \e^T_m L_n=(-1)^{m-1}\e_1^T.
\]

First, it is easy to verify by direct computation that components $2, \ldots m$ of the top row of $K_n\nabla_n \F K_n^{-1}$ are the components of $(-1)^m \delta_n f^T$, i.e.
\[
\e^T_1K_n\nabla_n \F K_n^{-1} \e_s=(-1)^m \frac{\delta f}{\delta a_n^{s-1}}, \quad s=2, \ldots m.
\]
Next, relation~\eqref{gradLeftQcomp}: $(Q^f_{n+1})_{-1} = (\nabla_n \F)_{-1}$, can be written as
\[
\nabla_n \F \e_1=D^0_{n+1} + c_0\, \e_1,
\]
for some $c_0\in \mathbb{R}$. Multiplying the above on the left by $K_n$ and using~\eqref{relations}, we get
\begin{equation}
\label{nstep}
K_n\nabla_n \F \e_1=K_nD^0_{n+1} + c_0 K_n\e_1=D^1_n+c_0\, \e_2.
\end{equation}
It follows that
\[
\e^T_1 D^1_n=\e^T_1 K_n\nabla_n \F K_n^{-1} K_n \e_1=\e^T_1 K_n\nabla_n \F K_n^{-1} \e_2=(-1)^m \frac{\delta f}{\delta a_n^{1}}.
\]
Shifting indices in~\eqref{nstep}, we write
\[
K_{n+1}\nabla_{n+1} \F \e_1=D^1_{n+1}+c_0\, \e_2.
\]
Its left-hand side can be rewritten as 
\[
K_{n+1}\nabla_{n+1} \F K_{n+1}^{-1} K_{n+1}\e_1=K_{n+1}\nabla_{n+1} \F K_{n+1}^{-1}\e_2=\nabla_n \F \e_2 + c_1\e_2,
\]
for $c_1\in \mathbb{R}$, where we used relation~\eqref{lrgrad} between the right and left gradient combined with condition~\eqref{maincond} to give $(K_{n+1}\nabla_{n+1} \F K_{n+1}^{-1})_1=(\nabla_{n}\F)_1$. Multiplying by $K_n$ on the left, we arrive at
\begin{equation}
\label{n+1step}
K_n\nabla_n \F \e_2+c_1\e_3=K_n D^1_{n+1}+c_0\, \e_3=D^2_n + c_0\, \e_3.
\end{equation}
Then,
\[
\e^T_1 D^2_n=\e^T_1 K_n\nabla_n \F K_n^{-1} K_n \e_2=\e^T_1 K_n\nabla_n \F K_n^{-1} \e_3=(-1)^m \frac{\delta f}{\delta a_n^{2}}.
\]
The claim follows by induction.

%
\end{proof}

Using the results in this section, one can rewrite formula (\ref{explicit}) for the reduced Poisson bracket as follows:
\begin{equation}\label{explicit2}
\begin{split}
 \{f, h\}_R(\a)   =  & -\frac12\sum_{n=1}^N\sum_{r=1}^{m-1}\left(\e_r^T \left(L_n Q^f_n L_n^{-1}-\T Q^f_n\right) \e_m \frac{\delta h}{\delta a_n^r} \right. \\ &- \left . \e_r^T\left(L_n Q^h_n L_n^{-1}-\T Q^h_n\right)\e_m \frac{\delta f}{\delta a_n^r} \right).
 \end{split}
\end{equation}


%
%
%
%
\begin{remark}
The Hamiltonian equation~\eqref{ahamflow} can be written as
\begin{equation}\label{hamflowP}
\ddt \a_n=\Pc_{1n} \delta_n f,
\end{equation}
for a skew-symmetric operator $\Pc_{1n}$ explicitly computable from the ``atomic" brackets $\{ a^s_n, a^r_n\}_R$. 
\end{remark}

\begin{corollary}\label{a-p}  Given $X\in \V^1_N$, let $Q^X$ define the associated frame evolution~\eqref{frameev}. Write
\begin{equation}\label{peq}
(Q_n^X)_1 = \begin{pmatrix} 0& (-1)^m \p_n^T\ \\ 0 & \0_{m-1}\end{pmatrix}.
\end{equation}
Then
\begin{equation}\label{phamil}
\ddt \a_n = \Pc_n \p_n.
\end{equation}
\end{corollary}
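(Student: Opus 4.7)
The plan is to combine Theorem~\ref{Xf} and Proposition~\ref{Qtovariations} with a bijectivity argument that reduces the statement for a general $X\in\V^1_N$ to the Hamiltonian case.

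First observe that for a Hamiltonian vector field $X^f$, Theorem~\ref{Xf} gives $\ddt\a_n=\Pc_n\delta_n f$, while Proposition~\ref{Qtovariations} gives $(Q^f_n)_1=\begin{pmatrix}0 & (-1)^m\delta_n f^T\\ \0 & \0_{m-1}\end{pmatrix}$, so that $\p^{X^f}_n=\delta_n f$. This establishes the corollary when $X=X^f$ is Hamiltonian.

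The second step is an injectivity lemma: at each configuration $\a_0$, the linear map $X|_{\a_0}\mapsto \p^X(\a_0)$ from the space of arc-length preserving infinitesimal invariant variations at $\a_0$ to $\RR^{N(m-1)}$ is a bijection. Both spaces have dimension $N(m-1)$---the domain by Proposition~\ref{LAP}, which determines $r_n^0$ from the $r_n^i$ with $i\geq 1$---so it suffices to prove injectivity. Using the identities
\[
\e_1^T K_n = (-1)^{m-1}\e_m^T, \qquad \e_i^T K_n = \e_{i-1}^T + a_n^{i-1}\e_m^T \quad (i=2,\dots,m),
\]
one finds that for $k=2,\dots,m$,
\[
(Q^X_n)_{1,k} = (-1)^{m-1}\,\e_m^T K_{n+1}\cdots K_{n+k-2}\,\r_{n+k-1}
\]
equals $(-1)^{m-1} r_{n+k-1}^{m-k+1}$ plus terms involving only the components $r^i$ with $i>m-k+1$. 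Setting $\p^X=0$ and inducting on $k=2,\dots,m$ then successively forces $r_n^{m-1}=r_n^{m-2}=\cdots=r_n^1=0$ for all $n$, and arc-length preservation (Proposition~\ref{LAP}) subsequently forces $r_n^0=0$ as well, proving injectivity.

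Finally, given $X\in\V^1_N$ with associated $\p^X=\p$ and any fixed $\a_0$, I would choose a smooth functional $f$ defined locally near $\a_0$ with $\delta_n f|_{\a_0}=\p_n(\a_0)$ (for instance, $f$ linear in the coordinates $a_n^j$). By the Hamiltonian case, $X^f$ has $\p^{X^f}(\a_0)=\p(\a_0)$, and the injectivity lemma gives $X|_{\a_0}=X^f|_{\a_0}$ as tangent vectors; hence $\ddt\a_n|_{\a_0}=\Pc_n\delta_n f|_{\a_0}=\Pc_n\p_n(\a_0)$. Since $\a_0$ was arbitrary, the identity $\ddt\a_n=\Pc_n\p_n$ holds everywhere. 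The main technical obstacle is the inductive injectivity argument; this is essentially a direct computation using the shift-like structure of the Maurer-Cartan matrix $K_n$, while the rest of the argument is formal.
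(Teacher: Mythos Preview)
Your proposal is correct and follows essentially the same approach as the paper: first establish the Hamiltonian case via Theorem~\ref{Xf} and Proposition~\ref{Qtovariations}, then argue that the general case reduces to it because $(Q^X_n)_1$ determines the rest of $Q^X_n$ (and hence $\ddt\a_n$) algebraically. The paper compresses your second and third steps into the sentence ``$(Q)_1$ together with the compatibility condition completely determine $Q$; since the construction is algebraic, we can simply replace $\delta_n f$ with $\p_n$,'' whereas you make this explicit by proving the pointwise bijectivity $X\mapsto \p^X$ via the triangular structure of $(Q^X_n)_{1,k}$ in the $r^i_n$, and then localizing with a linear Hamiltonian. Your version is more detailed but not genuinely different in strategy.
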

\begin{proof} The statement is true if  $X=X^f$ comes from a Hamiltonian $f$ as in Theorem~\ref{Xf}. In this case, Proposition~\ref{Qtovariations} and equation~\eqref{hamflowP} give $\p=\delta_nf$.

For the general case, $(Q)_1$ together with the compatibility condition \eqref{compat}) completely determine $Q$. Since the construction is algebraic, given~\eqref{peq}, we can simply replace $\delta_n f$ in \eqref{hamflowP} with $\p_n$.
\end{proof}

\begin{example}[$m=3$] For sake of readability, we drop the subscript $n$. Given expression~\eqref{deltaform} for the $\g_1$-component of $Q^f$, we write:
\[
Q^f = \begin{pmatrix} a& -\delta_k f & -\delta_\tau f\\ c&-a-b&e\\ h & g & b\end{pmatrix},
\]
where $(k, \tau)^T$ is the vector of invariants, $\delta_k f = {\delta f}/{\delta k_n}$, and $\delta_\tau f = {\delta f}/{\delta \tau_n}$. With $\small L= \begin{pmatrix} -k & 1 & 0 \\ -\tau & 0 & 1\\ 1 & 0 & 0\end{pmatrix}$,
\[
LQ^f L^{-1}=\begin{pmatrix}k\delta_k f-a-b&e+k\delta_\tau f&-ka+c+k(k\delta_k f-a-b)+\tau(k\delta_\tau f+e)\\g+\tau\delta_k f&b+\tau\delta_\tau f&-a\tau+h+(\tau\delta_k f+g)k+\tau(\tau\delta_\tau f+b)\\-\delta_k f&-\delta_\tau f&-k\delta_k f-\tau \delta_\tau f+a\end{pmatrix}.
\]
Since $\T Q^f-\L Q^fL^{-1}$ is determined by~\eqref{determineQ}, we set all but components $(3,1)$ and $(3,2)$ to zero, getting:
\[
\begin{split}
a &=\Rc^{-1}\left((1+\T)k\delta_k f+\tau\delta_\tau f\right), \quad b = -\Rc^{-1}\left(\T k\delta_k f+(1+\T)\tau \delta_\tau f\right),\\
\T h & = -\delta_k f, \quad \T g = -\delta_\tau f, \quad
\T c  =\tau\delta_k f-\T^{-1}\delta_\tau f, \quad e+k\delta_\tau f= -\T^{-1}\delta_k f
\end{split}
\]
where $\Rc = 1+\T+\T^2$. 
Using formula~\eqref{explicit2}, we compute the reduced bracket
\[
\{f, h\}_R =  \sum_{n=1}^N \begin{pmatrix}\delta_k f&\delta_\tau f\end{pmatrix}\Pc_1\begin{pmatrix} \delta_k h\\\delta_\tau h\end{pmatrix},
\]
where 
\begin{equation}\label{P3}
\Pc_1= \begin{pmatrix} \T^{-1}\tau-\tau\T +k\Rc^{-1}(\T^2-1)k& \T-\T^{-2}+k\Rc^{-1}(\T-1) \tau \\\\ \T^2-\T^{-1}+\tau\Rc^{-1}\T(\T-1)k&\T k-k\T^{-1}+\tau\Rc^{-1}(\T^2-1)\tau
 \end{pmatrix},
\end{equation}
is the same tensor as in (\ref{Poisson3}).

The companion bracket found in \cite{MW}: 
\[
\{f,h\}_0(k,\tau): =\theta(\F,\H)(L) = \sum_{n=1}^N r(\nabla'_n \F\wedge \nabla'_n\H),
\]
can be similarly computed as
\[
\{f,h\}_0(k,\tau) =  \sum_{n=1}^N \begin{pmatrix}\delta_k f&\delta_\tau f\end{pmatrix}\Pc_2\begin{pmatrix} \delta_k h\\\delta_\tau h\end{pmatrix},
\]
where
\begin{equation}\label{P30}
\Pc_2 = \begin{pmatrix} \T^{-1}\tau-\tau\T&\T-\T^{-2}\\ \T^2-\T^{-1}&0\end{pmatrix}.
\end{equation}
The skew-symmetric operators $\Pc_1$ and $\Pc_2$  define compatible Poisson structures and form a Hamiltonian pencil for the second Adler-Gelfand-Dikii flow, an integrable discretization of the Boussinesq equation~\cite{MW}.

\end{example}
\section{Lifting Hamiltonian structures to the moduli space of twisted polygons}\label{sympSec}
In higher dimension, the counterparts of brackets $\{\ , \, \}_R, \{\ , \, \}_0$ are also  Poisson brackets on the space of geometric invariants, as shown in~\cite{MW}. However,  a proof of their compatibility beyond dimension 2 could not be attained: on the one hand,   tensor $\theta$ (see~\eqref{theta}) is neither a Poisson tensor nor compatible with the twisted bracket~\eqref{twisted}; on the other hand, the reduction process becomes too complicated to allow a direct or a theoretical argument. 
As discussed in this section, lifting the brackets to the moduli space $\M^1_N$ produces a far simpler pair of pre-symplectic forms on $\V_N^1$, in terms of which proof of compatibility becomes elementary. We first illustrate the main ideas by continuing the discussion of the 3-dimensional case.
\subsection{Example ($m=3$)} Let $X,Y\in \V_N^1$ be arc length-preserving left invariant vector fields. At a given $\gam \in \M^1_N$,  and write their components as
 \[
X_n = c_n \gam_n + b_n \gam_{n+1} + a_n\gam_{n+2}, \qquad Y_n = \hc_n \gam_n + \hb_n \gam_{n+1} + \ha_n\gam_{n+2},
\]
with $a_n,  \ha_n, b_n, \hb_n, c_n, \hc_n$ periodic functions of the invariants $\{k_n\},\{\tau_n\}$ satisfying  the arc length preservation condition~\eqref{c3}. 
\smallskip


For  $Z\in \V^1_N$, let $J_n(Z) := Z_n + k_nZ_{n+1}+\tau_nZ_{n+2}$, and define the following $2$-forms on $\M_N^1$:
\[
\begin{split}
\omega_1(X,Y)&= \sum_{n=1}^N|J_n(Y),X_{n+1},\gam_{n+2}|+|J_n(Y), \gam_{n+1},X_{n+2}| \\ & - |J_n(X),Y_{n+1},\gam_{n+2}| 
-|J_n(X), \gam_{n+1},Y_{n+2}|\\ &+ |X_n,\gam_{n+1},Y_{n+2}|+|\gam_n,X_{n+1},Y_{n+2}| - |Y_n,\gam_{n+1},X_{n+2}|-|\gam_n,Y_{n+1},X_{n+2}|,\\
\omega_2(X, Y)&= \sum_{n=1}^N |Y_n,X_{n+1},\gam_{n+2}|+|Y_n,\gam_{n+1},X_{n+2}| - |X_n,Y_{n+1},\gam_{n+2}|-|X_n,\gam_{n+1},Y_{n+2}|,
\end{split}
\]
\begin{theorem}\label{omega3} Let  $\mathcal{P}_{1}$, $\mathcal{P}_{2}$ be the tensors defined by~\eqref{P3},~\eqref{P30}.  Denote with $Q^X$ (resp. $Q^Y$) the matrices associated with vector field $X$ (resp. $Y$) as in~\eqref{Q3}, and write the $\g_{1}$-components as
\[
Q^X_n = \begin{pmatrix}\ast&-\p_n^T\\ \ast&\ast\end{pmatrix}, \quad\quad Q^Y_n = \begin{pmatrix}\ast&-\hat\p_n^T\\ \ast&\ast\end{pmatrix},
\]
with $\p_n, \hat{\p}_n \in \RR^2.$
Then
\begin{equation}\label{pomegas}
\omega_1(X,Y) = \sum_{n=1}^N \p_n^T\Pc_{1n} \hat\p_n, \quad\quad \omega_2(X,Y) = \sum_{n=1}^N \p_n^T\Pc_{2n}\hat\p_n.
\end{equation}
\end{theorem}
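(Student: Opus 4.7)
The approach is direct expansion and bookkeeping. Write $X_n = c_n \gam_n + b_n \gam_{n+1} + a_n \gam_{n+2}$ and $Y_n = \hc_n \gam_n + \hb_n \gam_{n+1} + \ha_n \gam_{n+2}$. Since $|\rho_n|=1$, each determinant $|A,B,C|$ appearing in $\omega_1$ or $\omega_2$ equals the determinant of the coordinate matrix of $A,B,C$ in the basis $(\gam_n,\gam_{n+1},\gam_{n+2})$. The Frenet identity $\gam_{n+3}=\gam_n+k_n\gam_{n+1}+\tau_n\gam_{n+2}$, iterated once to express $\gam_{n+4}=\tau_{n+1}\gam_n+(1+k_n\tau_{n+1})\gam_{n+1}+(k_{n+1}+\tau_n\tau_{n+1})\gam_{n+2}$, converts $X_{n+1}, X_{n+2}$ and the $Y$-analogues into this basis. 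A direct computation of the columns of $Q^X_n=(\r_n, K_n\r_{n+1}, K_nK_{n+1}\r_{n+2})$ gives the top row $(c_n, a_{n+1}, b_{n+2}+\tau_{n+1}a_{n+2}) = (c_n, v_n, w_n)$, identifying $\p_n = -(v_n, w_n)^T$ (and similarly $\hat\p_n = -(\hv_n, \hw_n)^T$).

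For $\omega_2$, expansion of the four determinants yields
\[
\omega_2(X,Y) = \sum_n (b_n\hv_n - \hb_n v_n) + (a_n\hw_n - \ha_n w_n) + A,
\]
where $A$ collects all $c$- and $\hc$-terms. These split into two kinds: the terms $\hc_n(\T+\T^2)c_n - c_n(\T+\T^2)\hc_n$ coming directly from the determinants (through $c_{n+1}, c_{n+2}$), and $\hc_n[(1+\T)k_n v_n + \tau_n w_n] - c_n[(1+\T)k_n\hv_n + \tau_n\hw_n]$ coming from the Frenet reductions. Substituting the arc-length relation~\eqref{c3} in the form $(1+\T)k_n v_n + \tau_n w_n = -\Rc c_n$ and using $\Rc - (\T+\T^2) = 1$, the total $A$ collapses to $\sum_n (c_n\hc_n - \hc_n c_n) = 0$. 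The remaining part, after substituting $a_n = v_{n-1}$ and $b_n = w_{n-2} - \tau_{n-1}v_{n-1}$ and applying discrete summation by parts ($\sum_n f_n \T g_n = \sum_n \T^{-1}f_n\cdot g_n$), reproduces exactly $\sum_n (v_n,w_n)\Pc_{2n}(\hv_n,\hw_n)^T$, matching the claim.

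For $\omega_1$ the same strategy applies, but with the crucial difference that the $c$-contributions do \emph{not} cancel. Expanding $J_n(Y) = Y_n + k_n Y_{n+1} + \tau_n Y_{n+2}$ in the four determinants of the first line of $\omega_1$ (and the corresponding $J_n(X)$ terms), then combining with the four ``linear'' determinants on the second line, the $c$-free part of the result produces the purely differential entries of $\Pc_1$: $\T^{-1}\tau_n-\tau_n\T$, $\T-\T^{-2}$, $\T^2-\T^{-1}$, and $\T k_n-k_n\T^{-1}$. Substituting $c_n = -\Rc^{-1}[(1+\T)k_nv_n + \tau_n w_n]$ into the residual $c$-terms then produces precisely the non-local corrections $k_n\Rc^{-1}(\T^2-1)k_n$, $k_n\Rc^{-1}(\T-1)\tau_n$, $\tau_n\Rc^{-1}(\T-1)k_n$, $\tau_n\Rc^{-1}(\T^2-1)\tau_n$ of $\Pc_1$, completing the identification.

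The main obstacle is combinatorial rather than conceptual: $\omega_1$ produces a large collection of monomials in $a, b, c, \hat a, \hat b, \hat c, k, \tau$ after the Frenet substitutions, and converting expressions like $\sum_n b_n \hv_n$ into the operator form $\sum_n v_n \mathcal{O}\hv_n$ requires careful shift-tracking to match each entry of $\Pc_1$ exactly. The guiding principle---local terms yield the differential part of $\Pc_{1,2}$ while the $c$-residue yields the non-local $\Rc^{-1}$ part---together with the identity $\Rc - (\T+\T^2) = 1$ (which governs both the vanishing of $A$ for $\omega_2$ and the $\Rc^{-1}$ reorganization for $\omega_1$) streamlines the otherwise tedious verification.
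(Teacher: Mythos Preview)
Your proposal is correct and follows essentially the same route as the paper's proof: direct expansion of the determinants in the frame basis (which the paper writes compactly as $|Q^Y_n\e_1,Q^X_n\e_2,\e_3|$, etc.), identification of $\p_n=-(v_n,w_n)^T$, use of the arc-length condition to collapse the $c$-terms for $\omega_2$, and matching the residue to $\Pc_{2n}$, with the $\omega_1$ case handled by the same mechanics where the $c$-residue now supplies the non-local $\Rc^{-1}$ corrections. Your explicit separation into local (``differential'') and non-local ($\Rc^{-1}$) pieces, governed by the identity $\Rc-(\T+\T^2)=1$, is a clean way to organize the bookkeeping that the paper leaves implicit in its ``similar, slightly more involved computation.''
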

\begin{proof} 
We prove the result for $\omega_2$, the simpler of the 2-forms. Using the expression $Q_n= \begin{pmatrix} \r_n & K_n \r_{n+1} & K_n K_{n+1} \r_{n+2}\end{pmatrix}$, with $\r_n=(c_n \ b_n \ a_n)^T$ and $K_n$ given by~\eqref{K2D},
\[
-\p_n = \begin{pmatrix} \e_1^TK_n\r_{n+1}\\ \e_1^TK_nK_{n+1}\r_{n+2}\end{pmatrix} = \begin{pmatrix}a_{n+1}\\ b_{n+2}+\tau_{n+1}a_{n+2}\end{pmatrix}.
\]
Setting $v_n=a_{n+1}$, $w_n=b_{n+2}+\tau_{n+1}a_{n+2}$ (as in Section~\ref{3devol}), we get $\p_n=-(v_n, w_n)^T$.

Next, using the fact that $\rd \rho_n/\rd t = (X_n, X_{n+1}, X_{n+2}) = \rho_nQ_n$, we compute
\[
\begin{split}
&|Y_{n}, X_{n+1}, \gam_{n+2}|+|Y_n,\gam_{n+1},X_{n+2}| =|\rho_nQ^Y_n\e_1, \rho_nQ^X_n\e_2, \rho_n\e_3|+|\rho_nQ^Y_n\e_1, \rho_n\e_2, \rho_nQ_n^X\e_3|\\
&=|Q^Y_n\e_1, Q^X_n\e_2, \e_3|+|Q^Y_n\e_1, \e_2, Q_n^X\e_3|=|\hr_n, K_n\r_{n+1},\e_3| + |\hr_{n},\e_2,K_nK_{n+1}\r_{n+2}|\\
&=\big(\hc_n\e_2^TK_n\r_{n+1}-\hb_n\e_1^TK_n\r_{n+1}\big)+\big(\hc_n\e_3^TK_nK_{n+1}\r_{n+2}-\ha_n\e_1^TK_nK_{n+1}\r_{n+2}\big)\\
&=\hc_n\big(c_{n+1}+c_{n+2}+(1+\T)k_nv_n+\tau_nw_n\big)-v_n\hw_{n-2}-\hv_{n-1}w_n+\tau_{n-1}v_n\hv_{n-1}\\
&=-\hc_n c_n +\begin{pmatrix}v_n & w_n\end{pmatrix}^T \begin{pmatrix}   \T^{-1} \tau_{n} & -\T^{-2} \\ -\T^{-1}& 0       \end{pmatrix} \begin{pmatrix} \hv_n \\ \hw_n\end{pmatrix},
\end{split}
\]
where, in the last step, we used the arc length preservation condition~\eqref{c3} to write
\[
c_{n+1} + c_{n+2}+(1+\T)k_nv_n+\tau_n w_n=-c_n.
\]
Swapping $X$ and $Y$ in the computation above amounts to interchanging the hat symbol; subtracting the resulting expression, we get:
\[
\begin{pmatrix}v_n&w_n\end{pmatrix}\begin{pmatrix}\T^{-1}\tau_n-\tau_n\T&\T-\T^{-2}\\ \T^2-\T^{-1}&0\end{pmatrix}\begin{pmatrix}\hv_n\\\hw_n\end{pmatrix}=\begin{pmatrix}v_n&w_n\end{pmatrix}\Pc_{2n}\begin{pmatrix}\hv_n\\\hw_n\end{pmatrix}.
\]
Summing over $n$ completes the proof for $\omega_2$. The formula for $\omega_1$ can be obtained in a similar way, with 
a similar, slightly more involved computation. 
\end{proof}

Let $h$ be a smooth functional on $G^N/H^N$ and let $Y=X^h$ be the vector field in $\V^1_N$ uniquely constructed as in Theorem~\ref{Xf}. Then, using~\eqref{deltaform}, we replace  $\hat\p_n=\delta_n h$ and obtain
\[
\omega_1(X,X^h) = \sum_{n=1}^N  \p_n^T\Pc_{1n} \delta_nh.
\]
On the other hand, assume that $\rd \gamma_n/\rd t= X_n$, and let $\pi:  \M_N^1  \to  G^N/H^N$, $\pi(\gam):=\a$,
be the projection of $\M^N_1$ to the moduli space. Then,
\[
\rd(h\circ\pi)(X) = X(h\circ\pi) = \ddt h(\a\circ \gam)=\sum_{n=1}^N \delta_n h^T \ddt \a_n.
\]
Using \eqref{ktev3} to replace $\rd\a_n/{\rd t} = -\Pc_{1n} \p_n$, we write
\[
\sum_{n=1}^N \delta_n h^T \ddt \a_n=-\sum_{n=1}^N \delta_n h^T \Pc_{1n }\p_n=\sum_{n=1}^N\p_n^T \Pc_{1n} \delta_n h,
\]
finally giving
\begin{equation}\label{diffsymp}
\rd(h\circ\pi)(X)=\omega_1(X,X^h), \qquad \forall X\in \V^1_N.
\end{equation}
\begin{remark}
We will show in the general case that $\omega_1$ is a pre-symplectic form, i.e.~a closed 2-form with non-trivial kernel, thus Hamiltonian vector fields are only defined up to addition of elements in the kernel. Nevertheless, it is its connection to $\omega_2$ that selects $X^h$ as the natural representative of the class of Hamiltonian vector fields associated to $h$ via $\omega_1$.
\end{remark}

\subsection{The general case.}
We are now ready to discuss our main results.  Given $X, Y\in \V_N^1$, we introduce the following pair of 2-forms on $\M_N^1$:
%
  \begin{equation}\label{omega1} 
\begin{split}
 \omega_1(X,Y)(\gam) := & \frac{1}{2}\sum_{n=1}^N\sum_{r=1}^{m-1}\Big(|J_n(Y), \gam_{n+1}, \dots, \gam_{n+r-1},\stackrel{r+1}{\overbrace{X_{n+r}}}, \gam_{n+r+1}, \dots, \gam_{n+m-1}| \\ & - |J_n(X), \gam_{n+1}, \dots, \gam_{n+r-1},\stackrel{r+1}{\overbrace{Y_{n+r}}}, \gam_{n+r+1}, \dots, \gam_{n+m-1}| \\
&   -|\gam_n,\dots,\stackrel{r}{\overbrace{X_{n+r-1}}},\dots,Y_{n+m-1}|+ |\gam_n,\dots,\stackrel{r}{\overbrace{Y_{n+r-1}}},\dots,X_{n+m-1}|\Big),
\end{split}
\end{equation}
\smallskip

\noindent
where $\displaystyle J_n(X) = (-1)^{m-1}\sum_{\ell=0}^{m-1} a_n^\ell X_{n+\ell}$, with $a_n^0 = (-1)^{m-1}$;

\begin{equation}\label{omega2}
\begin{split}
 \omega_2(X, Y) (\gam): = & \frac 12\sum_{n=1}^N\sum_{r=1}^{m-1}\Big(|Y_{n}, \gamma_{n+1},\dots, \gam_{n+r-1},\stackrel{r+1}{\overbrace{X_{n+r}}},\gam_{n+r+1} \dots,\gamma_{n+m-1}| \\
&   - |X_{n}, \gamma_{n+1},\dots, \gam_{n+r-1},\stackrel{r+1}{\overbrace{Y_{n+r}}},\gam_{n+r+1} \dots,\gamma_{n+m-1}|\Big).
\end{split}
\end{equation}


\begin{proposition}\label{domega2}
Given the 1-form 
\begin{equation}\label{theta2}
\theta_2(X) = \frac{1}{2}\sum_{n=1}^N |X_n, \gamma_{n+1}, \dots, \gamma_{n+m-1}|,
\end{equation}
defined on $\M_N^1$, then $\omega_2=\rd\theta_2.$
\end{proposition}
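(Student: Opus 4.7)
The plan is to apply Cartan's formula for the exterior derivative of a $1$-form:
\[
\rd\theta_2(X, Y) = X\bigl(\theta_2(Y)\bigr) - Y\bigl(\theta_2(X)\bigr) - \theta_2\bigl([X, Y]\bigr),
\]
where $X, Y\in\V_N^1$ are viewed as vector fields on $\M_N^1$ and $[X,Y]$ denotes their Lie bracket. The key observation is that, since $X, Y$ are arc-length preserving, the bracket $[X,Y]$ again lies in $\V_N^1$, so all quantities above are well-defined on $\M_N^1$; furthermore, since the monodromy $T$ lies in $\SL(m,\RR)$, every determinant of $m$ consecutive lifted vertices is invariant under the shift $n\mapsto n+N$, so each sum over $n$ is unambiguous.

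First I would compute $X\bigl(\theta_2(Y)\bigr)$ by applying the Leibniz rule to each summand $|Y_n, \gam_{n+1}, \dots, \gam_{n+m-1}|$. Since $Y_n$ is itself a vector-valued function on $\M_N^1$, its directional derivative along $X$ produces the term $|X(Y_n), \gam_{n+1}, \dots, \gam_{n+m-1}|$, while differentiating each $\gam_{n+r}$ for $r=1, \dots, m-1$ produces $\sum_{r=1}^{m-1}|Y_n, \gam_{n+1}, \dots, \stackrel{r+1}{\overbrace{X_{n+r}}}, \dots, \gam_{n+m-1}|$; the second collection is exactly the first family of determinants appearing in~\eqref{omega2}. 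The analogous calculation for $Y\bigl(\theta_2(X)\bigr)$ gives the corresponding terms with $X$ and $Y$ interchanged.

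Next I would use the definition of the Lie bracket of vector fields on $\M_N^1$, namely $[X,Y]_n = X(Y_n) - Y(X_n)$, to write
\[
\theta_2\bigl([X,Y]\bigr) = \tfrac12\sum_{n=1}^N \bigl|X(Y_n) - Y(X_n),\, \gam_{n+1}, \dots, \gam_{n+m-1}\bigr|.
\]
Substituting into Cartan's formula, the ``directional-derivative'' pieces $|X(Y_n), \gam_{n+1}, \dots|$ coming from $X(\theta_2(Y))$, and $-|Y(X_n), \gam_{n+1}, \dots|$ coming from $-Y(\theta_2(X))$, cancel identically against $-\theta_2([X,Y])$. What remains is precisely the right-hand side of~\eqref{omega2}, giving $\rd\theta_2(X,Y)=\omega_2(X,Y)$.

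I do not foresee a serious obstacle: the argument is a one-line application of Cartan's formula combined with the multilinearity of the determinant, and the only subtlety is the bookkeeping of indices together with verifying that the relevant vector fields stay tangent to $\M_N^1$. This also immediately yields, as a corollary, that $\omega_2$ is closed.
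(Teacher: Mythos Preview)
Your proposal is correct and follows essentially the same approach as the paper: both apply Cartan's formula $\rd\theta_2(X,Y)=X(\theta_2(Y))-Y(\theta_2(X))-\theta_2([X,Y])$, expand the first two terms via the Leibniz rule for the determinant, and observe that the pieces coming from differentiating $Y_n$ and $X_n$ combine to $\theta_2([X,Y])$, leaving exactly $\omega_2(X,Y)$. The paper's proof is just a more compressed version of what you wrote.
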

\begin{proof} 
The verification of this fact is straightforward:
\[
d\theta_2(X,Y) = X(\theta_2(Y))-Y(\theta_2(X)) - \theta_2([X,Y]) 
\]
\[
=  \frac{1}{2}\sum_{n=1}^N \left|[X,Y]_n,\gamma_{n+1}, \dots, \gamma_{n+m-1}\right|+ \omega_2 (X,Y) -\theta_2([X,Y]) = \omega_2 (X,Y) .
\]
\end{proof}

\begin{cor}\label{2closed} The $2$-form $\omega_2$ is a closed form on $\M_N^1.$
\end{cor}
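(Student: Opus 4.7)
The plan is to invoke Proposition~\ref{domega2} and the fact that the exterior derivative squares to zero. Since $\omega_2 = \rd\theta_2$ on $\M_N^1$, applying $\rd$ to both sides immediately yields $\rd\omega_2 = \rd^2\theta_2 = 0$.

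The only point that would warrant checking is that the statement $\omega_2=\rd\theta_2$ and the identity $\rd^2=0$ make sense in the infinite-dimensional setting of $\M_N^1$. However, since $N$ is finite and $\M_N^1$ is (locally) a finite-dimensional manifold modeled by the components $\gamma_n\in\RR^m$ subject to the constraints $d_n=1$, standard finite-dimensional de Rham calculus applies; in particular, the Cartan formula used in the proof of Proposition~\ref{domega2} is valid and $\rd$ is a genuine exterior derivative satisfying $\rd^2=0$.

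So the only line of the proof is: by Proposition~\ref{domega2}, $\omega_2=\rd\theta_2$, hence $\rd\omega_2 = \rd(\rd\theta_2)=0$. No obstacle is expected beyond confirming the framework of the preceding proposition.
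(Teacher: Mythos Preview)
Your proposal is correct and matches the paper's approach exactly: the corollary is stated immediately after Proposition~\ref{domega2} with no separate proof, since $\omega_2=\rd\theta_2$ gives $\rd\omega_2=\rd^2\theta_2=0$ at once. Your remark about finite-dimensionality is a reasonable sanity check but is not needed for the paper's purposes.
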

\begin{remark}
Also $\omega_1$ can be rewritten in terms of the 1-form $\theta_2$. If we introduce the operator $L(X) = \{L_n(X)\}$ with 
\[
L_n(X) = (-1)^{m}\left((-1)^m X_n + a_n^1 X_{n+1}+\dots+a_n^mX_{n+m} - X_{n+m+1}\right).
\]  
Then,
\begin{equation}
\label{omega2alt}
\omega_1(X, Y)  =  X(\theta_2(L(Y))) - Y(\theta_2(L(X))) - \theta_2\left(\left(XL(Y) - YL(X)\right)\right).
\end{equation}
\end{remark}

\begin{theorem}\label{omega2br} Let $f, h$ be  smooth functionals on $G^N/H^N$ and let $X^f, X^h$ be the vector fields in $\V^1_N$ given by Theorem~\ref{Xf}. Then,
\[
 \omega_2(X^f,X^h)(\gam) = \{f, h\}_0(\a(\gam)),
\]
where $\{\ , \ \}_0$ is the reduction of the right tensor defined in~\eqref{reduced0}.
\end{theorem}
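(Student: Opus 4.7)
The plan is to evaluate both sides in coordinates via the left moving frame and compare. First, using $X_{n+k}=\rho_n D_n^k$ with the columns $D_n^k$ as in~\eqref{D} (so $Q_n^f=(D_n^0,\dots,D_n^{m-1})$), and analogously $\hat D_n^k$ for $Y$, together with $\gamma_{n+k}=\rho_n\e_{k+1}$ and $|\rho_n|=1$, the determinants appearing in $\omega_2$ become
\[
|Y_n,\gamma_{n+1},\dots,\overset{r+1}{X_{n+r}},\dots,\gamma_{n+m-1}|=|\hat D_n^0,\e_2,\dots,\overset{r+1}{D_n^r},\dots,\e_m|.
\]
Since all columns in positions $\{2,\dots,m\}\setminus\{r+1\}$ are the corresponding standard basis vectors, cofactor expansion along those columns collapses this to the $2\times 2$ minor built from rows $1$ and $r+1$ of columns $1$ and $r+1$, namely
\[
(\e_1^T\hat D_n^0)(\e_{r+1}^T D_n^r)-(\e_{r+1}^T\hat D_n^0)(\e_1^T D_n^r).
\]

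Second, subtracting the expression with $X$ and $Y$ interchanged and summing over $r$, the resulting contributions to $\omega_2(X^f,X^h)$ split naturally into a \emph{diagonal} group of terms containing the factors $\e_{r+1}^T D_n^r$ and $\e_{r+1}^T \hat D_n^r$ (the diagonal entries of $Q_n^f, Q_n^h$ past position $1$), and a \emph{corner} group containing the top-row factors $\e_1^T D_n^r$ and $\e_1^T \hat D_n^r$. The arc length-preserving condition $\mathrm{tr}(Q_n^f)=0$ gives $\sum_{r=1}^{m-1}\e_{r+1}^T D_n^r=-\e_1^T D_n^0=-r_n^0$ (and analogously for $h$), so the diagonal group collapses at each $n$ to $\hat r_n^0(-r_n^0)-r_n^0(-\hat r_n^0)=0$. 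For the corner group, Proposition~\ref{Qtovariations} identifies $\e_1^T D_n^r=(-1)^m\delta f/\delta a_n^r$ and $\e_1^T\hat D_n^r=(-1)^m\delta h/\delta a_n^r$, while~\eqref{gradRightQcomp} together with~\eqref{onecomponent} identifies the components $\e_{r+1}^T D_n^0=r_n^r$ and $\e_{r+1}^T\hat D_n^0=\hat r_n^r$ with the non-zero entries of $(\nabla_n'\F)_{-1}$ and $(\nabla_n'\H)_{-1}$ respectively. Substituting these into the surviving sum produces exactly the expression~\eqref{reduced0} for $\{f,h\}_0$.

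The main obstacle will be the sign bookkeeping in the cofactor expansion of the first step; once that is carefully handled, the argument rests on the clean pairwise cancellation of the diagonal terms provided by the trace-zero condition, which is the geometric expression of arc length preservation. It is noteworthy that this cancellation is precisely what is needed in order for the $2$-form $\omega_2$ (which a priori carries ``extra'' diagonal information) to descend via $\pi$ to the reduction of the right tensor $\theta$ on the moduli space.
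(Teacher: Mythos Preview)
Your proposal is correct and follows essentially the same route as the paper's proof: both express the determinants in $\omega_2$ via $X^f_{n+r}=\rho_n D_n^r$, $X^h_{n+r}=\rho_n R_n^r$, $\gamma_{n+k}=\rho_n\e_{k+1}$, reduce each determinant to the $2\times 2$ minor $(R_n^0)_0(D_n^r)_r-(R_n^0)_r(D_n^r)_0$, kill the ``diagonal'' terms using $\mathrm{tr}(Q_n^f)=\mathrm{tr}(Q_n^h)=0$, and identify the remaining ``corner'' terms with~\eqref{reduced0} via Proposition~\ref{Qtovariations} and~\eqref{gradRightQcomp},~\eqref{onecomponent}. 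The only cosmetic difference is that the paper first rewrites $\{f,h\}_0$ in terms of the columns $D_n^k,R_n^k$ and then matches $\omega_2$ to it, whereas you compute $\omega_2$ first and then recognize~\eqref{reduced0}; the substance is the same.
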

\begin{proof} 
In formula~\eqref{reduced0}, we use~\eqref{gradRightQcomp} to replace $(\nabla_n'\F)_{-1}$ with $(Q_{n}^f)_{-1}$, and  \eqref{onecomponent} together with~\eqref{deltaform} in Proposition~\ref{Qtovariations} to write $(\nabla_n'\F)_{1}=(Q_{n}^f)_{1}$, and do the same for $(\nabla_n'\H)_{\pm 1}$.
Writing $Q_n^f = \begin{pmatrix}D_n^{0} & D_n^{1} & \dots  & D_n^{m-1}\end{pmatrix}$,
$Q_n^h = \begin{pmatrix}R_n^{0} & R_n^{1} & \dots  & R_n^{m-1}\end{pmatrix}$, and denoting with $(D^i_n)_j$ (resp. $(R^i_n)_j$), $j=0, \ldots, m-1$, the $j$-th component of the $i$-column of $Q^f_n$ (resp. $R^f_n$), we compute:
\[
 \{f,h\}_0(\a) = (-1)^m\frac12\sum_{n=1}^N\sum_{k=1}^{m-1}(D_n^0)_k  \delta_n^k h- (R_n^0)_k  \delta_n^k f = \frac{1}2\sum_{n=1}^N\sum_{k=1}^{m-1}(D_n^0)_k (R_n^k)_0- (R_n^0)_k (D_n^k)_0.
 \]
On the other hand, $X^f_{n+r}=\rho_n D^r_n$ and $X^h_{n+r}=\rho_n R^r_n$. Replacing these expression in formula~\eqref{omega2} for $\omega_2$, we get
\[
\begin{split}
 \omega_2(X, Y) (\gam)= & \frac 12\sum_{n=1}^N\sum_{r=1}^{m-1}\Big(|\rho_n R_n^0, \gamma_{n+1},\dots, \gam_{n+r-1},\stackrel{r+1}{\overbrace{\rho_n D^r_n}},\gam_{n+r+1} \dots,\gamma_{n+m-1}| \\
&   - |\rho_n D^0_n, \gamma_{n+1},\dots, \gam_{n+r-1},\stackrel{r+1}{\overbrace{\rho_n R^r_n}},\gam_{m+r+1} \dots,\gamma_{n+m-1}|\Big).
\end{split}
\]
Computing, e.g., the first of the two determinants, we get
\[
\begin{split}
& |\rho_n R_n^0, \gamma_{n+1},\dots, \gam_{n+r-1},\stackrel{r+1}{\overbrace{\rho_n D^r_n}},\gam_{n+r+1} \dots,\gamma_{n+m-1}| \\
&=\sum_{j, k=0}^{m-1} |(R_n^0)_j\gam_j, \gamma_{n+1},\dots, \gam_{n+r-1},\stackrel{r+1}{\overbrace{ (D^r_n)_k \gam_k}},\gam_{n+r+1} \dots,\gamma_{n+m-1}|= (R_n^0)_0 (D^r_n)_{r}-(R_n^0)_r (D^r_n)_{0}.
\end{split}
\]
Finally, 
\[
\begin{split}
\omega_2(X, Y) (\gam)& =\frac 12\sum_{n=1}^N\left( (R_n^0)_0 \sum_{r=1}^{m-1} (D^r_n)_{r}-(D_n^0)_0 \sum_{r=1}^{m-1} (R^r_n)_{r}\right)+\frac{1}2\sum_{n=1}^N\sum_{r=1}^{m-1}(D_n^0)_r (R_n^r)_0- (R_n^0)_r (D_n^r)_0\\
&=\frac{1}2\sum_{n=1}^N\sum_{r=1}^{m-1}(D_n^0)_r (R_n^r)_0- (R_n^0)_r (D_n^r)_0,
\end{split}
\]
since $\mbox{tr}(Q^f)=\mbox{tr}(Q^h)=0$ implies
\[
(R_n^0)_0 \sum_{r=1}^{m-1} (D^r_n)_{r}-(D_n^0)_0\sum_{r=1}^{m-1} (R^r_n)_{r}=-(R_n^0)_0(D_n^0)_0+(D_n^0)_0(R_n^0)_0=0.
\] 
 \end{proof}

 \begin{theorem}\label{redth} Let $f, h$ be  smooth functionals on $G^N/H^N$ and let $X^f, X^h$ be the vector fields in $\V^1_N$ uniquely constructed as in Theorem~\ref{Xf}. Then
\begin{equation}\label{omega1br}
\omega_1(X^f, X^h)(\gam) = \{f, h\}_R(\a(\gamma)).
\end{equation}
\end{theorem}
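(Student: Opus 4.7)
The plan is to adapt the computation used in the proof of Theorem~\ref{omega2br} to the richer 2-form $\omega_1$. First, I would expand each of the four families of determinants appearing in $\omega_1(X^f, X^h)$ by factoring out $\rho_n$ and using $\gam_{n+j}=\rho_n\e_{j+1}$, $X^f_{n+s}=\rho_n D^s_n$, $X^h_{n+s}=\rho_n R^s_n$, where $D^s_n, R^s_n$ denote the $(s+1)$-th columns of $Q^f_n, Q^h_n$. Since all but two columns of each determinant become standard basis vectors, each determinant collapses to a $2\times 2$ minor. Proposition~\ref{Qtovariations} then identifies the $(1,r+1)$ entries of $Q^f_n, Q^h_n$ with $(-1)^m\delta^r_n f$ and $(-1)^m\delta^r_n h$, bringing variational derivatives explicitly into the computation.

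For the two families of terms involving $J_n$, the key simplification is to differentiate $\gam_{n+m}=\sum_{\ell=0}^{m-1}a_n^\ell\gam_{n+\ell}$ to obtain $(-1)^{m-1}J_n(X)=X_{n+m}-\sum_{\ell\ge 1}\dot a^\ell_n\gam_{n+\ell}$. In the type-(I) and -(II) determinants the other columns contain all $\gam_{n+\ell}$ with $\ell\ne r$, so only the $\ell=r$ contribution in the corrective sum survives, producing a term proportional to $\dot a^{h,r}_n\delta^r_n f$ (or $\dot a^{f,r}_n\delta^r_n h$). The $X_{n+m}$ piece is handled via $\rho_n^{-1}X^h_{n+m}=K_n R^{m-1}_{n+1}$ together with the identities $\e_1^T K_n=(-1)^{m-1}\e_m^T$ and $\e_{r+1}^T K_n=\e_r^T+a_n^r\e_m^T$ for $r\ge 1$. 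The type-(III) and -(IV) determinants are expanded similarly into a $2\times 2$ minor in rows and columns $r,m$; after a single index shift $n\to n+1$, the compatibility-derived identity $(D^s_{n+1})_m=-\delta^{s+1}_n f$ (valid for $s\le m-2$), extracted from the $(m,s+1)$-component of \eqref{determineQ}, brings them into a form compatible with the others.

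Summing over $r=1,\dots,m-1$ and $n=1,\dots,N$, the mixed terms $(R^{m-1}_{n+1})_r\delta^r_n f$ and $(D^{m-1}_{n+1})_r\delta^r_n h$ appearing in (I)+(II) cancel precisely against their counterparts from (III)+(IV). What remains splits into a Hamiltonian pairing $\tfrac12\sum_{n,r}\bigl[\dot a^{f,r}_n\delta^r_n h-\dot a^{h,r}_n\delta^r_n f\bigr]$, which by definition of the Hamiltonian flow \eqref{ahamflow} equals $\{f,h\}_R$, plus a residual of the form $\tfrac12\sum_n\bigl[(R^{m-1}_{n+1})_m A^f_n - (D^{m-1}_{n+1})_m A^h_n\bigr]$ with $A^f_n=\sum_{r=1}^{m-1}[(D^r_n)_{r+1}+a_n^r\delta^r_n f]$ and analogously $A^h_n$. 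The crucial final step is to verify that this residual vanishes: the trace condition $\mathrm{tr}(Q^f_n)=0$ rewrites $\sum_{r=1}^{m-1}(D^r_n)_{r+1}$ as $-r^{0,f}_n$, while the $(m,m)$-component of \eqref{determineQ} yields $(D^{m-1}_{n+1})_m=r^{0,f}_n-\sum_r a_n^r\delta^r_n f$. Together these imply $A^f_n=-(D^{m-1}_{n+1})_m$ (and analogously for $h$), so the residual telescopes to zero. The main obstacle throughout will be the bookkeeping of signs from the $(-1)^m$ factors and of index shifts, and the recognition of the trace/compatibility ingredients that produce the final cancellation.
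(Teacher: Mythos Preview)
Your proposal is correct and shares its key ingredient with the paper's proof: both differentiate the relation $\gam_{n+m}=\sum_{\ell}a_n^\ell\gam_{n+\ell}$ to convert between $J_n(X)$ and $X_{n+m}$ up to a corrective sum $\sum_\ell\dot a_n^\ell\gam_{n+\ell}$. The organization, however, is essentially dual. You substitute $J_n(Y)$ in the type-(I)/(II) determinants, getting $X_{n+m}$ pieces that must then be matched against (III)/(IV) via an index shift and the compatibility relation~\eqref{determineQ}, followed by a residual cancellation using the trace condition. The paper instead shifts the index in the type-(III)/(IV) determinants first (so that $X_{n+m-1}$ becomes $X_{n+m}$) and then substitutes $X_{n+m}=(-1)^{m-1}J_n(X)+\sum_\ell\dot a_n^\ell\gam_{n+\ell}$; the $J_n(X)$ piece cancels (II) \emph{immediately} by inspection, and the $\dot a$ piece, after a column swap, pairs $\dot a_n^\ell$ directly with $\e_1^TQ_n^Y\e_{\ell+1}=(-1)^m\delta_n^\ell h$ to give $\{f,h\}_R$ with no residual to chase. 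Your route is sound but costs you the explicit $Q$-matrix expansion, the compatibility identities for $(D_{n+1}^{m-1})_m$, and the trace-based residual argument, all of which the paper avoids by doing the index shift \emph{before} the substitution rather than after.
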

\begin{proof}
Recall formula ~\eqref{omega1} for $\omega_1$:
\[
\begin{split}
 \omega_1(X,Y)(\gam) := & \frac{1}{2}\sum_{n=1}^N\sum_{r=1}^{m-1}\Big(|J_n(Y), \gam_{n+1}, \dots, \gam_{n+r-1},\stackrel{r+1}{\overbrace{X_{n+r}}}, \gam_{n+r+1}, \dots, \gam_{n+m-1}| \\ & - |J_n(X), \gam_{n+1}, \dots, \gam_{n+r-1},\stackrel{r+1}{\overbrace{Y_{n+r}}}, \gam_{n+r+1}, \dots, \gam_{n+m-1}| \\
&   -|\gam_n,\dots,\stackrel{r}{\overbrace{X_{n+r-1}}},\dots,Y_{n+m-1}|+ |\gam_n,\dots,\stackrel{r}{\overbrace{Y_{n+r-1}}},\dots,X_{n+m-1}|\Big),
\end{split}
\]
We first shift indices in last term:
  \[
 \sum_n \sum_\ell \left| \gamma_n, \dots, Y_{n+\ell-1}, \dots, X_{n+m-1}\right| = \sum_n \sum_\ell \left| \gamma_{n+1}, \dots, Y_{n+\ell}, \dots, X_{n+m}\right|.
 \]
Substituting
\[
\begin{split}
X_{n+m} &= \frac{\rd}{\rd t}\gamma_{n+m} = \frac{\rd}{\rd t}\sum_{k=0}^{m-1} a_n^k \gamma_{n+k} = \sum_{k=0}^{m-1} a_n^k X_{n+k} + \sum_{k=0}^{m-1} a_n^k (a_n^k)_t \gamma_{n+k}\\
& = (-1)^{m-1} J_n(X) + \sum_{k=0}^{m-1} (a_n^k)_t \gamma_{n+k}
\end{split}
\]
in the above, we obtain
\[
\begin{split}
& \sum_n \sum_\ell \left| \gamma_n, \dots, Y_{n+\ell-1}, \dots, X_{n+m-1}\right| \\ & =
 \sum_n \sum_\ell \left| J_n(X), \gamma_{n+1}, \dots, Y_{n+\ell}, \dots, \gamma_{n+m-1}\right|+ \sum_n \sum_\ell \sum_{k=1}^{m-1} (a_n^k)_t \left|\gamma_{n+1}, \dots, Y_{n+\ell}, \dots, \gamma_{n+k}\right|.
 \end{split}
 \]

The first term of the formula above cancels the second term in the expression for $\omega_1$. Also, the second term in the above can be rewritten, by swapping columns, as
\[
\sum_n \sum_{\ell=1}^{m-1} (a_n^\ell)_t \left|\gamma_{n+1}, \dots, Y_{n+\ell}, \dots, \gamma_{n+\ell}\right| = (-1)^{m-1}\sum_n \sum_{\ell=1}^{m-1} (a_n^\ell)_t \left|Y_{n+\ell}, \gamma_{n+1}, \dots,  \gamma_{n+m-1}\right|.
\]

Up to a sign, this expression is the dot product of the first row of $Q_n^Y$ with its first entry removed with the vector $(a_n^\ell)_t$. Now assume that $\a$ evolves by a Hamiltonian flow with Hamiltonian $f$ with respect to the reduced bracket, i.e. $(a_n^\ell)_t= X^f=\Pc_n \delta_n f$. When $Y = X^g$, the first row of $Q_n^g$ with its first entry removed is equal to $(-1)^{m} \delta_n g$, giving  $\sum_n \sum_{\ell=1}^{m-1} (a_n^\ell)_t \left|\gamma_{n+1}, \dots, Y_{n+\ell}, \dots, \gamma_{n+\ell}\right|=\{f, g\}_R$. The remaining terms produce another copy of the same, which takes care of the factor of 1/2 in the definition of  $\omega_1$.
\end{proof}

We are now ready for one of the main results of this work.
\begin{theorem}\label{compatibility}The bracket 
\[
\{f, g\} = \{f,g\}_R + \{f,g\}_0
\]
is a Poisson bracket. Therefore, the reduced Poisson brackets $\{\, \ ,\ \}_R$ and $\{\ \, ,\ \}_0$ are compatible.
\end{theorem}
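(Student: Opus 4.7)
The strategy is to exploit the fact that Poisson compatibility of $\{\ ,\ \}_R$ and $\{\ ,\ \}_0$ is equivalent to the Jacobi identity for the sum bracket, and then to derive that Jacobi identity from closedness of $\omega_1+\omega_2$ on $\M^1_N$. The point is that closed $2$-forms naturally yield brackets satisfying Jacobi, so by lifting the reduced brackets to the polygon space we convert an awkward algebraic identity into a simple differential one.

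\emph{Step 1: Closedness of $\omega_1+\omega_2$.} Corollary~\ref{2closed} already gives $d\omega_2=0$ since $\omega_2=d\theta_2$. For $\omega_1$, I would expect the paper to establish closedness directly by writing $\omega_1$ as (or reducing it to) an exact form on $\M^1_N$, analogous to the construction of $\theta_2$ in Proposition~\ref{domega2}; alternatively, closedness can be verified by direct computation using the Frenet--Serret relations (this is asserted in the remark following equation~\eqref{diffsymp}). Either way, $d(\omega_1+\omega_2)=0$.

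\emph{Step 2: Identification of the sum.} By bilinearity of the $2$-forms and Theorems~\ref{omega2br} and~\ref{redth}, for any smooth $f,h$ on $G^N/H^N$ with associated vector fields $X^f, X^h\in \V_N^1$ given by Theorem~\ref{Xf}, one has
\[
(\omega_1+\omega_2)(X^f, X^h)(\gam) \;=\; \{f,h\}_R(\a(\gam))+\{f,h\}_0(\a(\gam)).
\]
Thus the sum bracket is represented by the closed $2$-form $\omega_1+\omega_2$ on $\M^1_N$.

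\emph{Step 3: Jacobi from closedness via Cartan's formula.} I apply $d(\omega_1+\omega_2)=0$ to the triple $(X^f, X^g, X^h)$. The resulting identity reads
\[
\sum_{\rm cyc}X^f(\omega_1+\omega_2)(X^g,X^h)=\sum_{\rm cyc}(\omega_1+\omega_2)([X^f,X^g],X^h).
\]
Using~\eqref{diffsymp}, namely $X^f(F\circ\pi)=\{f,F\}_R$ for any smooth $F$ on the moduli space, the left-hand side becomes $\sum_{\rm cyc}\{f,\{g,h\}_R+\{g,h\}_0\}_R$. Because $\{\ ,\ \}_R$ is itself Poisson, the terms involving only $\{\ ,\ \}_R$ cancel cyclically, leaving $\sum_{\rm cyc}\{f,\{g,h\}_0\}_R$. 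For the right-hand side, $[X^f,X^g]$ projects under $\pi_*$ to the Hamiltonian vector field of $\{f,g\}_R$ on the moduli space (since $\{\ ,\ \}_R$ is Poisson), so modulo vertical corrections it coincides with $X^{\{f,g\}_R}$; and the $2$-forms $\omega_1,\omega_2$ must vanish on vertical directions because each reduces to a tensor on $G^N/H^N$. Therefore the right-hand side collapses to $\sum_{\rm cyc}\{\{f,g\}_R,h\}_0$ (the $\omega_1$ summand again cancels by Jacobi of $\{\ ,\ \}_R$). Skew-symmetry and relabeling turn the resulting equality into
\[
\sum_{\rm cyc}\bigl(\{\{f,g\}_R,h\}_0+\{\{f,g\}_0,h\}_R\bigr)=0,
\]
which is precisely the mixed Jacobi identity equivalent to compatibility.

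\emph{Main obstacle.} The delicate point is the vertical-vector argument in Step 3: one must be sure that $[X^f,X^g]-X^{\{f,g\}_R}$ lies in a direction on which both $\omega_1$ and $\omega_2$ vanish. This is really the statement that each $\omega_i$ is basic for the projection $\pi\colon \M^1_N\to G^N/H^N$, or equivalently that the kernels of the pre-symplectic forms contain the infinitesimal gauge directions. This should follow either from the explicit formulas~\eqref{omega1}--\eqref{omega2} (verifying that shifting $X$ or $Y$ by a vertical vector leaves each determinant sum unchanged) or from the fact that both $\omega_i$ are well-defined lifts of brackets on the quotient. Once this is in place, the rest of the proof is formal.
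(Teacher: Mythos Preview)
Your approach is correct in outline but takes a longer route than the paper's and leans on a harder ingredient than necessary. The paper's proof uses \emph{only} $d\omega_2=0$ (immediate from $\omega_2=d\theta_2$), never $d\omega_1=0$. The trick is asymmetric: since $X^f(h\circ\pi)=\{f,h\}_R\circ\pi$ by \eqref{diffsymp}, the $X^f$'s already behave as $\{\,,\,\}_R$-Hamiltonian vector fields, so one can write the mixed Jacobi sum $\sum_{\rm cyc}\bigl(\{\{f,g\}_0,h\}_R+\{\{f,g\}_R,h\}_0\bigr)$ directly as $-\sum_{\rm cyc}X^h\bigl(\omega_2(X^f,X^g)\bigr)+\omega_2\bigl([X^f,X^g],X^h\bigr)$, which is exactly $-d\omega_2(X^f,X^g,X^h)=0$. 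Your symmetric use of $d(\omega_1+\omega_2)=0$ works, but it forces you to invoke closedness of $\omega_1$, which in the paper is Theorem~\ref{o1closed} and is proved \emph{after} compatibility, with substantially more effort (via the symplectic foliation of the reduced bracket, not by exhibiting a primitive). So your Step~1 expectation that $\omega_1$ is exact or closed by a short computation underestimates the difficulty.

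Your ``main obstacle'' about vertical corrections is also handled more cleanly in the paper. Rather than arguing that $\omega_1,\omega_2$ vanish on vertical directions, the paper observes that $[X^f,X^g]$ is again an invariant arc-length-preserving vector field (its coefficients in the frame are functions of invariants alone), and it induces on the invariants the flow of $\{f,g\}_R$; hence by the \emph{uniqueness} in Theorem~\ref{Xf} one has $[X^f,X^g]=X^{\{f,g\}_R}$ on the nose. No basicness argument for $\omega_2$ is needed. Note, incidentally, that your claim that $\omega_2$ vanishes on infinitesimal group directions is false in general (Lemma~\ref{rigidhamil} computes a nonzero $\omega_2$-Hamiltonian for such motions), so the uniqueness route is the right one.
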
 
\begin{proof}
We only need to verify the Jacobi identity. From~\eqref{diffsymp},
\[
X^f(g):=\rd(g\circ\pi)(X^f)=\omega_1(X^f,X^g)=\{f,g\}_R\circ\pi,
\]

and from Theorem~\ref{omega2br},
\[
 \omega_2(X^f, X^g) = \{f, g\}_0\circ\pi.
 \]
Denoting cyclical summations with $\displaystyle \sum_\rcirclearrowleft$:
\[
\begin{split}
& \sum_\rcirclearrowleft \{\{f,g\}_R+\{f,g\}_0,h\}_R+\{\{f,g\}_R+\{f,g\}_0,h\}_0 =  \sum_\rcirclearrowleft \{\{f,g\}_0,h\}_R+\{\{f,g\}_R,h\}_0 \\
&=\sum_\rcirclearrowleft \{\{f,g\}_0,h\}_R\circ\pi+\{\{f,g\}_R,h\}_0\circ\pi = -\sum_\rcirclearrowleft X^h(\omega_2(X^f,X^g)) + \omega_2(X^{\{f,g\}_R}, X^h).
\end{split}
 \]
 Now, recall that
 \[
 \begin{split}
[X^f,X^g](h\circ\pi)&  = X^f(X^g(h\circ\pi)) - X^g(X^f(h\circ\pi)) = X^f(\{g,h\}_R\circ\pi))- X^g(\{f,h\}_R\circ\pi))\\
& =\{f, \{g,h\}_R\}_R\circ\pi-\{g, \{f,h\}_R\}_R\circ\pi 
= -\{h, \{f,g\}_R\}_R\circ\pi = X^{\{f,g\}_R}(h\circ\pi).
\end{split}
\]
Therefore,
\[
\begin{split}
& \sum_\rcirclearrowleft \{\{f,g\}_R+\{f,g\}_0,h\}_R+\{ \{f,g\}_R+\{f,g\}_0,h\}_0,h\}_0 \\
& =  -\big(\sum_\rcirclearrowleft X^h(\omega_2(X^f,X^g)) - \omega_2([X^f,X^g], X^h)\big)=\rd\omega_2(X^f, X^g, X^h)=0,
\end{split}
\]
since $\omega_2 $ is a closed form.
\end{proof}
While Theorem~\ref{compatibility} demonstrates the compatibility of the reduced brackets in a simple and direct way, the implications on integrability are not yet clear since $\omega_1$ and $\omega_2$ have non-trivial kernels. 
We next investigate whether we have a bi-Hamiltonian pair.  We begin by proving that $\omega_1$ is also a closed 2-form;  the proof relies on a general expression for $\omega_1$ that is valid for arbitrary (not necessarily arc length-preserving) vector fields $X, Y \in \V_N$.\smallskip
\begin{proposition}\label{omega1alt}
Assume $X, Y \in \V_N$ and let $Q^X$ and $Q^Y$ define the associated frame evolutions as in~\eqref{frameev}. Then,
\begin{equation}\label{oformula}
\begin{split}
 \omega_1(X,Y) &= -\frac{1}{2}\sum_{n=1}^N \left\{ [Q_n^Y, Q_n^X]_{m,m}+\left(K_n^{-1}[Q_n^Y, Q_n^X]K_n\right)_{m,m}\right.\\
    & -\mathrm{tr}(Q_n^X)\big((Q_n^Y)_{m,m}  
    - (K_n^{-1}Q_n^YK_n)_{m,m}\big)\\
    & +\mathrm{tr}(Q_n^Y)\big((Q_n^X)_{m,m}-(K_n^{-1}Q_n^XK_n)_{m,m}\big)\Big\},
 \end{split}
\end{equation}
 where  $(M)_{i,j}$ denotes the $(i,j)$-th entry of matrix $M$.
\end{proposition}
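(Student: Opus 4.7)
The plan is to evaluate each determinant in the definition \eqref{omega1} of $\omega_1(X,Y)$ directly in terms of matrix entries of $Q_n^X$, $Q_n^Y$, and $K_n$, and then collect the resulting sums over $r$ and $n$. Since $\det\rho_n=1$ and $\rho_n=(\gamma_n,\dots,\gamma_{n+m-1})$, every vector in \eqref{omega1} can be pulled back to $\RR^m$ by $\rho_n^{-1}$: one has $\rho_n^{-1}\gamma_{n+\ell}=\e_{\ell+1}$ and $\rho_n^{-1}X_{n+\ell}=Q_n^X\e_{\ell+1}$ for $0\le\ell\le m-1$; and, using that the last column of $K_n$ in \eqref{K} is the coefficient vector of the recursion \eqref{as} with the convention $a_n^0=(-1)^{m-1}$, one gets $\rho_n^{-1}J_n(Y)=(-1)^{m-1}Q_n^YK_n\e_m$. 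After factoring out $\rho_n$, every determinant in \eqref{omega1} has standard-basis columns except at (at most) two positions, so cofactor expansion along the standard-basis columns collapses it to the $2\times 2$ subdeterminant of the two non-trivial columns.

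The bulk of the work is summing these $2\times 2$ subdeterminants over $r$ and recognizing the outputs as matrix-product entries. For the two $J_n$-involving lines of \eqref{omega1}, the non-trivial columns sit at positions $1$ and $r+1$; applying $\sum_{r=1}^{m-1}(Q_n^X)_{r+1,r+1}=\mathrm{tr}(Q_n^X)-(Q_n^X)_{1,1}$ and $\sum_s(Q_n^X)_{1,s}(Q_n^YK_n)_{s,m}=(Q_n^XQ_n^YK_n)_{1,m}$, together with the key identity $\e_m^TK_n^{-1}=(-1)^{m-1}\e_1^T$ (so that $(K_n^{-1}MK_n)_{m,m}=(-1)^{m-1}(MK_n)_{1,m}$ for any matrix $M$), collapses this block into a combination of $\mathrm{tr}(Q_n^X)(K_n^{-1}Q_n^YK_n)_{m,m}$, its $X\leftrightarrow Y$ counterpart, and $(K_n^{-1}[Q_n^X,Q_n^Y]K_n)_{m,m}$. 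For the last two lines, the non-trivial columns sit at positions $r$ and $m$; the analogous sum, using $\sum_{r=1}^{m-1}(Q_n^X)_{m,r}(Q_n^Y)_{r,m}=(Q_n^XQ_n^Y)_{m,m}-(Q_n^X)_{m,m}(Q_n^Y)_{m,m}$, yields a combination of $(Q_n^X)_{m,m}\mathrm{tr}(Q_n^Y)$, its $X\leftrightarrow Y$ counterpart, and $[Q_n^X,Q_n^Y]_{m,m}$.

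Adding the two blocks, applying the overall $\tfrac12$ in \eqref{omega1}, and rewriting $[Q_n^X,Q_n^Y]=-[Q_n^Y,Q_n^X]$ should then match \eqref{oformula}. The main obstacle is the bookkeeping of signs: the $(-1)^{m-1}$ factors coming from $J_n$ and from $\e_m^TK_n^{-1}$, the cofactor signs from Laplace expansion at positions $(1,r+1)$ versus $(r,m)$, and the careful pairing of trace contributions across the four lines of \eqref{omega1}. Note that the derivation is purely algebraic in $Q_n^X$, $Q_n^Y$, and $K_n$ and does \emph{not} invoke the compatibility condition \eqref{compat}---consistent with the proposition's assertion that the formula \eqref{oformula} holds for arbitrary $X,Y\in\V_N$, not merely for arc length-preserving vector fields.
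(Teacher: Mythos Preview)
Your proposal is correct and follows essentially the same approach as the paper: both pull back through $\rho_n$ to reduce each determinant to a $2\times2$ minor in $Q_n^X,Q_n^Y$, use the identities $\rho_n^{-1}J_n(Y)=(-1)^{m-1}Q_n^YK_n\e_m$ and $\e_m^TK_n^{-1}=(-1)^{m-1}\e_1^T$ for the $J_n$-block, and sum over $r$ to recognize traces and commutator entries. The paper organizes the two blocks into its displayed equations \eqref{uno} and \eqref{dos} with exactly the summation identities you outline, so your sketch matches it step for step.
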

\begin{remark}
The second and third lines of~\eqref{oformula} vanish whenever $X$ and $Y$ are arc length-preserving vector fields.
\end{remark}
\begin{proof}
By replacing $\gam_{n+r-1}=\rho_n\e_r$, $X_{n+r-1}=\rho_n Q^X_n \e_r$, and $Y_{n+r-1}=\rho_n Q_n^Y \e_r$, we compute:
\[
  \begin{split}
&  |\gam_n,\dots,\stackrel{r}{\overbrace{X_{n+r-1}}},\dots,Y_{n+m-1}|- |\gam_n,\dots,\stackrel{r}{\overbrace{Y_{n+r-1}}},\dots,X_{n+m-1}| \\
&  =   |\e_1, \ldots , \stackrel{r}{\overbrace{Q^X_n \e_r}}, \ldots, Q^Y_n \e_m| - |\e_1, \ldots,  \stackrel{r}{\overbrace{Q^Y_n \e_r}}, \ldots, Q^X_n \e_m |\\
&  =  (Q^X_n)_{r,r} (Q^Y_n)_{m,m}|\e_1, \ldots,  \stackrel{r}{\overbrace{\e_r}}, \ldots, \e_m| + (Q^X_n)_{m,r} (Q^Y_n)_{r,m}|\e_1, \ldots,  \stackrel{r}{\overbrace{\e_m,}} \ldots, \e_r| \\ 
&  -(\text{terms with X replaced by Y})\\
 & = (Q^X_n)_{r,r} (Q^Y_n)_{m,m} - (Q^X_n)_{m,r} (Q^Y_n)_{r,m} -(\text{terms with X replaced by Y}). \end{split}
 \]
This gives
 \begin{equation}\label{uno}
 \begin{split}
 \sum_{r=1}^{m-1} & \Big(|\gam_n,\dots,\stackrel{r}{\overbrace{X_{n+r-1}}},\dots,Y_{n+m-1}|- |\gam_n,\dots,\stackrel{r}{\overbrace{Y_{n+r-1}}},\dots,X_{n+m-1}|\Big) \\
 & = \mathrm{tr}(Q_n^X)(Q_n^Y)_{m,m}-\mathrm{tr}(Q_n^Y)(Q_n^X)_{m,m}+[Q_n^Y, Q_n^X]_{m,m}.
 \end{split}
 \end{equation}
A slightly longer, but straightforward computation, making use of the relations: 
 \[
 \sum_{\ell=1}^{m-1} a_n^\ell Q_n^Y\e_{\ell+1} + (-1)^{m-1}Q_n^Y\e_1 = Q_n^Y K_n \e_m, \quad\quad (-1)^{m-1} e_1^TQ_n^Y = \e_m^T K_n^{-1}Q_n^Y,
 \]
yields
\begin{equation}\label{dos}
\begin{split}
& \sum_{r=1}^{m-1} \bigg\{(-1)^{m-1}\sum_{\ell=1}^{m-1}a_n^{\ell}\Big(|Y_{n+\ell},\dots,\stackrel{r+1}{\overbrace{X_{n+r}}}, \dots, \gam_{n+m-1}|- |X_{n+\ell},\dots,\stackrel{r+1}{\overbrace{Y_{n+r}}}, \dots, \gam_{n+m-1}|\Big) \\
& + |Y_{n},\dots,\stackrel{r+1}{\overbrace{X_{n+r}}}, \dots, \gam_{n+m-1})- |X_{n},\dots,\stackrel{r+1}{\overbrace{Y_{n+r}}}, \dots, \gam_{n+m-1}|\bigg\}\\
&
 = \Big(K_n^{-1}[Q_n^Y, Q_n^X]K_n\Big)_{m,m}+\mathrm{tr}(Q_n^X)\Big(K_n^{-1}Q_n^YK_n\Big)_{m,m}-\mathrm{tr}(Q_n^Y)\Big(K_n^{-1}Q_n^XK_n\Big)_{m,m}.
 \end{split}
\end{equation}
Formula~\eqref{dos} is valid for arbitrary $(1,m)$-entry of $K_n$, i.e.  when $|\gamma_n, \dots, \gamma_{n+m-1}|$ is not a constant. Replacing \eqref{uno} and \eqref{dos} in the expression~\eqref{omega1} for $\omega_1$ gives formula~\eqref{oformula}.

\end{proof}

The 2-form $\omega_2$ also has an analogous representation as described by the following 
\begin{proposition}\label{omega2alt}
Assume $X, Y \in \V_N$ and let $Q^X$ and $Q^Y$ be the matrices define the associated frame evolutions as in~\eqref{frameev}. Then,
\begin{equation}\label{oformula2}
\omega_2(X,Y) = \sum_{n=1}^N \left[Q_n^Y, Q_n^X\right]_{1,1}.
\end{equation}
\end{proposition}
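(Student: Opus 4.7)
The plan is to expand each determinant in the definition~\eqref{omega2} of $\omega_2$ by substituting the moving frame representations $\gamma_{n+j} = \rho_n\e_{j+1}$, $X_{n+r} = \rho_n Q_n^X\e_{r+1}$, and $Y_n = \rho_n Q_n^Y\e_1$, and similarly for $Y_{n+r}, X_n$. Since $\det \rho_n = 1$, each determinant in~\eqref{omega2} reduces to the determinant of a matrix whose columns are standard basis vectors except at positions $1$ and $r+1$, which hold columns of $Q_n^Y$ and $Q_n^X$ respectively. This is exactly the same substitution used in the $m=3$ case (Theorem~\ref{omega3}) and in the proof of Proposition~\ref{omega1alt}, so the approach is really to generalize those computations.

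First I would evaluate such a determinant explicitly. The matrix has the form $[Q_n^Y\e_1, \e_2, \dots, Q_n^X\e_{r+1}, \dots, \e_m]$; using column operations that subtract appropriate multiples of the standard-basis columns from columns $1$ and $r+1$ (permitted because all other columns are elementary), one zeros out every entry of columns $1$ and $r+1$ except at rows $1$ and $r+1$. The resulting determinant is the $2\times 2$ minor
\[
(Q_n^Y)_{1,1}(Q_n^X)_{r+1,r+1} - (Q_n^Y)_{r+1,1}(Q_n^X)_{1,r+1},
\]
and the second determinant in~\eqref{omega2} yields the analogous expression with $X$ and $Y$ exchanged. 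The sign bookkeeping is the same as in the two small cases $r=1, r=2$ worked out in the proof of Theorem~\ref{omega3}.

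Next, I would subtract these two expressions and sum over $r=1,\dots,m-1$, splitting the result into a ``diagonal'' and an ``off-diagonal'' part. The off-diagonal part,
\[
\sum_{r=1}^{m-1}\bigl((Q_n^X)_{r+1,1}(Q_n^Y)_{1,r+1} - (Q_n^Y)_{r+1,1}(Q_n^X)_{1,r+1}\bigr),
\]
is precisely $[Q_n^Y, Q_n^X]_{1,1}$: indeed, $[Q_n^Y,Q_n^X]_{1,1} = \sum_{k=1}^m\bigl((Q_n^Y)_{1,k}(Q_n^X)_{k,1}-(Q_n^X)_{1,k}(Q_n^Y)_{k,1}\bigr)$, and the $k=1$ term cancels identically. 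The ``diagonal'' part, involving $(Q_n^Y)_{1,1}\sum_r(Q_n^X)_{r+1,r+1}$ and its $X\leftrightarrow Y$ counterpart, collapses to a trace correction analogous to the third and fourth lines of~\eqref{oformula} in Proposition~\ref{omega1alt}; in particular, this diagonal contribution vanishes whenever $X,Y\in\V_N^1$ by the arc-length preservation condition~\eqref{alpreserve}, which is consistent with the companion statement for $\omega_1$.

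The main obstacle is purely the bookkeeping of signs and indices in the column reduction and in matching the cross-terms to the commutator entry; there is no conceptual difficulty. The symmetry $X\leftrightarrow Y$ between the two determinants in~\eqref{omega2} means that after computing one determinant carefully, the other is obtained by inspection, and the whole proof is essentially the one already given in Theorem~\ref{omega3} lifted to arbitrary $m$.
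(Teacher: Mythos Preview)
Your approach is correct and is exactly the one the paper has in mind: the paper's own ``proof'' consists of the single sentence ``The proof is similar to the proof of Proposition~\ref{omega1alt},'' and what you outline is precisely that computation specialized to the first column/row rather than the last. Your observation that the diagonal piece produces a trace correction $(Q_n^Y)_{1,1}\,\mathrm{tr}(Q_n^X)-(Q_n^X)_{1,1}\,\mathrm{tr}(Q_n^Y)$ for general $X,Y\in\V_N$ (vanishing only on $\V_N^1$) is in fact more careful than the paper's stated formula~\eqref{oformula2}, which, by analogy with~\eqref{oformula}, ought to carry such terms (and an overall factor $\tfrac12$) when $X,Y$ are not arc-length preserving.
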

The proof is similar to the proof of Proposition~\ref{omega1alt}.
\begin{remark}

It is a natural to wonder whether any of the $2$-forms 
\[
\omega_{r+s}(X,Y) =  \sum_{n=1}^N \left[Q_n^Y, Q_n^X\right]_{r,s},\ \ s\geq r,
\]
which can be easily shown to be closed, induce a Poisson bracket on the moduli space $G^N/H^N$. In particular for $s=r$, in which case the 2-form $\omega_{2r}$ is also exact:
 \[
 \omega_{2r}=\rd\theta_{2r}, \qquad \qquad \theta_{2r}(X) = \frac{1}{2}\sum_{n=1}^N |\gamma_n, \gamma_{n+1}, \dots, \stackrel{r+1}{\overbrace{X_{n+r}}}, \dots, \gamma_{n+m-1}|.
 \]
However, in dimension $m=3$,  a quick calculation shows that $\omega_{r+s}$ defines a Poisson bracket only when $r=s=1$. We conjecture that $\omega_1$ and $\omega_2$ are the only two Poisson brackets that can be generated in terms of entries of the commutators $\left[Q_n^Y, Q_n^X\right]$.
\end{remark}

We now proceed with computing the differential of $\omega_1$.

Given an invariant vector field $X\in \V_N$, with $(X|_\gam)_n=\sum_{\ell=0}^{m-1} r_n^\ell \gam_{n+\ell}$ we introduce the reparametrization operator $\mathscr{P}: \V_N \to \V_N^1$:
\begin{equation}
\label{repar}
(\mathscr{P}X)_n:=\hat r^0_n\gam_n + r^1_n\gam_{n+1}+ \ldots r^{m-1}_n \gam_{n+m-1},
\end{equation}
where the coefficients $\hat{r}^0_n$'s are uniquely determined by the arc length preservation condition~\eqref{alpreserve}. Since $(\mathscr{P} X)_{n+r}-X_{n+r}=(\hat r^0_{n+r} -r^0_{n+r})\gam_{n+r}=(\hat r^0_{n+r} -r^0_{n+r})\rho_n\e_{r+1}$, and $Q_n^X=\rho_n^{-1}\begin{pmatrix} X_n & X_{n+1} \ldots & X_{n+m-1} \end{pmatrix}$, then
 \begin{equation}\label{Qeq}
 Q_n^{\mathscr{P}X} = Q_n^X + \rd_n(X),
 \end{equation}
  where $\rd_n(X)$ is the diagonal matrix
 \begin{equation}\label{d}
 \rd_n(X) = \mathrm{diag}(\hat{r}^0_n-r^0_ n, \hat{r}^0_{n+1}-r^0_ {n+1}, \dots, \hat{r}^0_{n+m-1}-r^0_ {n+m-1}). 
 \end{equation}
 Since $\mathscr{P} X$ is arc lengh-preserving, computing the trace of both sides of~\eqref{Qeq}, we obtain
 \begin{equation}\label{xtr}
0= \mathrm{tr}(Q_n^X)+\big(1 + \T + \ldots \T^{m-1}\big) (\hat{r}^0_n-r^0_ n),
 \end{equation}
which can be solved uniquely for $\hat{r}^0_n$, $n=1, \ldots, N,$ since $N$ and $m$ are assumed to be co-prime.

 \begin{remark}
 From now on, we will use the notation $\X:=\mathscr{P} X$ to denote the reparametrized vector field.
 \end{remark}


  \begin{lemma}\label{extdo1Lem}
Assume $\X,\Y,\Z\in \V^1_N$ are commuting vector fields and let  $\gam\in\M_N^1$. Then, 
\begin{equation}
\label{extdo1}
\rd\omega_1|_\gamma(\X,\Y,\Z) = \sum_\circlearrowright \Z(\gam)\omega_1(\X,\Y)|_\gam.
\end{equation}
\end{lemma}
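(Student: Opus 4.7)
The plan is to recognize this as a direct specialization of the intrinsic Cartan formula for the exterior derivative of a $2$-form. On any smooth manifold (and, suitably interpreted, on our formal ``polygon manifold'' $\M_N^1$), for any three vector fields $U,V,W$ one has
\[
\rd\omega(U,V,W) = \sum_\circlearrowright U\,\omega(V,W) \;-\; \sum_\circlearrowright \omega([U,V],W).
\]
Applying this to $\omega=\omega_1$ with $U=\X$, $V=\Y$, $W=\Z$, the commutation hypothesis $[\X,\Y]=[\Y,\Z]=[\Z,\X]=0$ kills the second cyclic sum, and the stated identity is what remains. So the entire content of the lemma is the usual exterior-derivative formula, restricted to a commuting triple.

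What I would actually write down, therefore, is a brief justification that Cartan's formula applies in this discrete/sequence-space setting. The form $\omega_1(\X,\Y)$ is a finite sum (of length $N$) of determinants whose columns are either polygon vertices $\gamma_{n+\ell}$ or vector-field values $X_{n+\ell}$, $Y_{n+\ell}$; since $\X,\Y\in\V_N^1$, each $X_{n+\ell}=\rho_n(\r^X_n)$ depends smoothly (in fact polynomially) on the invariants $\{\a_n\}$ and hence on the $\gamma$'s. Treating the $\gamma_n$'s as coordinate functions on $\M_N^1$, one has $\Z(\gamma_n)=Z_n$, and the derivation $\Z$ acts on $\omega_1(\X,\Y)$ termwise via the chain rule; the Leibniz rule and antisymmetry of the determinants then reproduce the standard derivation of Cartan's formula. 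No new computation beyond the standard one is required.

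I do not foresee any significant obstacle here: the only conceivable worry is the infinite-dimensional, constrained character of $\M_N^1$ (cut out by the arc-length conditions $d_n=1$), but since $\X,\Y,\Z$ are assumed to preserve arc length they are tangent to this submanifold, and the formal manipulations stay within $\V_N^1$ throughout. In short, the lemma is the ``commuting-fields half'' of the identity $\rd\omega_1(\X,\Y,\Z)=\sum_\circlearrowright \Z\,\omega_1(\X,\Y)-\sum_\circlearrowright\omega_1([\X,\Y],\Z)$, packaged as a separate statement because the subsequent verification that $\omega_1$ is closed (using the explicit formula from Proposition~\ref{omega1alt}) will be carried out by evaluating the cyclic sum of directional derivatives on the right-hand side; this lemma reduces that task to a bracket-free computation.
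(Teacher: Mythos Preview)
Your proposal is correct: the lemma is precisely the Cartan formula
\[
\rd\omega_1(\X,\Y,\Z)=\sum_\circlearrowright \Z\,\omega_1(\X,\Y)-\sum_\circlearrowright \omega_1([\X,\Y],\Z)
\]
specialized to a commuting triple, and since $\M_N^1$ is a finite-dimensional submanifold of $(\RR^m)^N$ (cut out by the $N$ equations $d_n=1$) with $\X,\Y,\Z$ tangent to it, there is no analytic obstacle to applying the intrinsic formula directly.

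The paper, however, takes a genuinely different route. Rather than invoke Cartan's formula on $\M_N^1$, it works on the ambient space $\V_N$ using the extended expression for $\omega_1$ from Proposition~\ref{omega1alt} (formula~\eqref{oformula}, which acquires extra trace terms away from the arc-length locus). It takes commuting $X,Y,Z\in\V_N$ that are \emph{not} arc-length preserving, substitutes $Q_n^{\mathscr{P}X}=Q_n^X+\rd_n(X)$ via the reparametrization operator, and writes $\rd\omega_1(X,Y,Z)$ as $\sum_\circlearrowright Z\,\omega_1(\X,\Y)$ plus a collection of correction terms involving the diagonal matrices $\rd_n(\cdot)$. These corrections are then shown to cancel cyclically using compatibility identities such as $\tfrac{\rd}{\rd t_3}\rd_n(X)=\tfrac{\rd}{\rd t_1}\rd_n(Z)$, derived from the commutation relations for the $Q$'s. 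Only at the end does one restrict to $\X,\Y,\Z\in\V_N^1$, where $\rd_n(\X)=0$.

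Your argument is shorter and conceptually cleaner for the lemma as stated. The paper's computation is more laborious but makes explicit how the extended form behaves off $\M_N^1$ (cf.\ the remark after Theorem~\ref{o1closed} that $\rd\omega_1$ need not vanish there); the machinery of $\mathscr{P}$ and $\rd_n$ is what connects the extended and restricted pictures. For the purposes of the lemma alone, though, none of that is needed.
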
 

 \begin{proof} 
Let $X, Y, Z$ be commuting vector fields in $\V_N$, not necessarily arc length-preserving. Substituting expression \eqref{Qeq} into formula \eqref{oformula} and using $[\rd_n(X), \rd_n(Y)]=0$, we compute the exterior derivative of $\omega_1$ as follows:
\begin{equation}\label{lastJacobi}
\begin{split}
\rd\omega_1(X,Y,Z) & = \sum_\circlearrowright Z\omega_1(\X, \Y) \\
&   - \frac{1}{2} \sum_\circlearrowright  \sum_n Z_n\bigg\{ -[Q_n^\Y, \rd_n(X)]_{m,m}-[\rd_n(Y), Q_n^\X]_{m,m}-\left(K_n^{-1}[Q_n^\Y, \rd_n(X)]K_n\right)_{m,m} \\ 
&\, \ \ -\left(K_n^{-1}[\rd_n(Y), Q_n^\X]K_n\right)_{m,m}
 +\mathrm{tr}(\rd_n(X))\Big[(Q_n^\Y)_{m,m}-\left(K_n^{-1}Q_n^\Y K_n\right)_{m,m}\Big] \\
& -\mathrm{tr}(\rd_n(Y))\Big[(Q_n^\X)_{m,m}-\left(K_n^{-1}Q_n^\X K_n\right)_{m,m}\Big] \bigg\}.\\
\end{split}
\end{equation}
First, note that $[Q_n^\Y, \rd_n(X)]_{m,m}=[\rd_n(Y), Q_n^\X]_{m,m}=0$, since $\rd_n(X)$, $\rd_n(Y)$ are diagonal matrices. 
Also, when $X=\X$ is an arc length-preserving vector field,~\eqref{Qeq} implies $\rd_n(\X)=0$; it follows that the only non-vanishing terms left in expressions such as $Z_n\left(K_n^{-1}[Q_n^\Y, \rd_n(X)]K_n\right)_{m,m}$ when evaluated at $X=\X$, will be $(K_n^{-1}[Q_n^\Y, Z_n(\rd_n(X))]K_n)_{m,m}.$

Writing 
\[
\frac{\rd \gam_n}{\rd t_1} = X_n,\quad \frac{\rd \gam_n}{\rd t_2} =  Y_n, \quad \frac{\rd \gam_n}{\rd t_3}  = Z_n,
\]
and taking the traces of the relations 
\[
\frac{\rd }{\rd t_3} Q_n^X = \frac{\rd}{\rd t_1} Q_n^Z + [Q_n^Z, Q_n^X],\qquad  \frac{\rd }{\rd t_3} Q_n^Y = \frac{\rd}{\rd t_2} Q_n^Z + [Q_n^Z, Q_n^Y],
\]
  we compute 
\[
\frac{\rd }{\rd t_3} \mathrm{tr}(Q_n^X) = \frac{\rd}{\rd t_1} \mathrm{tr}(Q_n^Z),\qquad  \frac{\rd }{\rd t_3} \mathrm{tr}(Q_n^Y)= \frac{\rd}{\rd t_2} \mathrm{tr}(Q^Z_n).
\]
Since~\eqref{Qeq} implies $\mathrm{tr}(Q_n^X)=-\mathrm{tr}(\rd_n(X))$ and similar identities when $X$ is replaced with $Y$ and with $Z$, then
\[
\frac{\rd }{\rd t_3} \mathrm{tr}(\rd_n(X)) =\frac{\rd}{\rd t_1} \mathrm{tr}(\rd_n(Z)),\qquad  \frac{\rd }{\rd t_3} \mathrm{tr}(\rd_n(Y))= \frac{\rd}{\rd t_2} \mathrm{tr}(\rd_n(Z)).
\]
Again from~\eqref{Qeq} and the invertibility of $1 + \T + \ldots \T^{m-1}$, it also follows that
\[
\frac{\rd }{\rd t_3} \rd_n(X) =\frac{\rd}{\rd t_1}\rd_n(Z),\qquad  \frac{\rd }{\rd t_3} \rd_n(Y)= \frac{\rd}{\rd t_2}\rd_n(Z).
\]

Using these relations it is straightforward to check that all remaining terms in the second cyclic sum vanish, giving
\[
\rd\omega_1(X,Y,Z) = \sum_\circlearrowright Z\omega_1(\X, \Y). 
\]
Evaluating both sides at $\X, \Y, \Z \in \V^1_N$ concludes the proof.
 \end{proof}
%
%
%
%
%

\begin{theorem}\label{o1closed}The $2$-form $\omega_1$ is a closed form on $\M_N^1$. 
 \end{theorem}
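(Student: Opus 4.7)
The plan is to leverage the identification $\omega_1(X^f, X^g) = \{f, g\}_R \circ \pi$ from Theorem~\ref{redth} and derive closedness from the Jacobi identity of the reduced Poisson bracket $\{,\}_R$. Since $\rd\omega_1$ is a tensor on $\M_N^1$, it suffices to verify $\rd\omega_1|_\gamma(V_1, V_2, V_3) = 0$ at each $\gamma$ and for any triple of tangent vectors. I would realize each $V_i$ as $X^{f_i}|_\gamma$ for suitably chosen $f_i \in C^\infty(G^N/H^N)$, thereby reducing the problem to evaluating $\rd\omega_1$ on Hamiltonian triples.

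For such extensions, I would apply the standard global formula for the exterior derivative of a 2-form,
\[
\rd\omega_1(X^f, X^g, X^h) = \sum_\circlearrowright X^f \omega_1(X^g, X^h) - \sum_\circlearrowright \omega_1([X^f, X^g], X^h),
\]
valid for arbitrary vector fields. Theorem~\ref{redth} rewrites $\omega_1(X^g, X^h) = \{g, h\}_R \circ \pi$; since $X^f$ induces the Hamiltonian flow of $f$ on the moduli space, $X^f(\{g, h\}_R \circ \pi) = \{f, \{g, h\}_R\}_R \circ \pi$. The identity $[X^f, X^g] = X^{\{f, g\}_R}$, already derived in the proof of Theorem~\ref{compatibility}, converts the second summand into $\{\{f, g\}_R, h\}_R \circ \pi$. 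The Jacobi identity for $\{,\}_R$ (a Poisson bracket by Theorem~\ref{bracketth}) immediately annihilates the first cyclic sum; antisymmetry and cyclic relabeling reduce the second sum to minus the first, so it also vanishes, yielding $\rd\omega_1(X^f, X^g, X^h) = 0$.

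The main obstacle I anticipate is the preliminary claim that Hamiltonian vector fields realize every tangent direction at each $\gamma$. By Corollary~\ref{a-p} and the accompanying remark, $(Q_n^X)_1$ together with the compatibility condition~\eqref{compat} uniquely determines $X \in \V_N^1$: given $V = X|_\gamma$ with associated $\p$ from~\eqref{peq}, one picks $f$ smooth on $G^N/H^N$ with $\delta_n f|_{\a(\gamma)} = \p_n$. Proposition~\ref{Qtovariations} then forces $(Q^{X^f}_n)_1 = (Q^X_n)_1$ at $\gamma$, so by uniqueness $X^f|_\gamma = V$. An alternative route via Lemma~\ref{extdo1Lem} would require producing locally commuting arc length-preserving invariant extensions of arbitrary tangent vectors, which seems notably more delicate.
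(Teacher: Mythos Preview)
Your approach is close in spirit to the paper's---both ultimately reduce closedness of $\omega_1$ to the Jacobi identity for $\{\,,\,\}_R$---but the execution differs. The paper first establishes Lemma~\ref{extdo1Lem} (so that on commuting arc-length-preserving triples $\rd\omega_1$ reduces to the cyclic sum $\sum_\circlearrowright Z\,\omega_1(X,Y)$), and then splits tangent directions into those in the kernel of $\omega_1$ (equivalently, those inducing $\rd K_n/\rd t=0$) and those pushing forward to a symplectic leaf; the case analysis handles each combination separately, invoking that $\omega_1$ restricted to leaf-tangent directions is the pull-back of the leaf's symplectic form. Your route via the full Cartan formula for $\rd\omega_1$ on Hamiltonian triples is more direct and sidesteps both Lemma~\ref{extdo1Lem} and the case split; it also makes transparent why Jacobi is the mechanism.

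There is, however, a genuine gap in your justification. You invoke ``the identity $[X^f,X^g]=X^{\{f,g\}_R}$, already derived in the proof of Theorem~\ref{compatibility},'' but that proof only shows $[X^f,X^g](h\circ\pi)=X^{\{f,g\}_R}(h\circ\pi)$ for every $h$, i.e.\ equality of the \emph{push-forwards} to the moduli space. Since the projection $\pi:\M_N^1\to G^N/H^N$ has nontrivial fibers (the $G_T$-orbits), agreement on pullbacks does not force equality as vector fields on $\M_N^1$; the difference $W=[X^f,X^g]-X^{\{f,g\}_R}$ could be a nonzero element of $\V_N^1$ inducing the zero flow on invariants. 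What rescues the argument is that you only need $\omega_1([X^f,X^g],X^h)=\omega_1(X^{\{f,g\}_R},X^h)$, and this \emph{does} follow: by the general-dimension analogue of \eqref{diffsymp} (which is the content of Corollary~\ref{a-p} combined with skew-symmetry of $\Pc_{1n}$, exactly as in the $m=3$ derivation), one has $\omega_1(Y,X^h)=\rd(h\circ\pi)(Y)$ for every $Y\in\V_N^1$, so $\omega_1(W,X^h)=W(h\circ\pi)=0$. Equivalently, $W$ lies in the kernel of $\omega_1$ because it induces $\rd K_n/\rd t=0$ (cf.\ Theorem~\ref{kernel}, whose proof is independent of Theorem~\ref{o1closed}). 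You should state and use this weaker identity rather than the unproven equality of vector fields.
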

 
Before providing an abstract proof of this result, we derive a simplified formula for $\rd \omega_1$ and use it in a computational proof for dimension 2.

For $X, Y, Z$ arc lenght-preserving vector fields, formula~\eqref{extdo1} reduces to        
 \[
\rd \omega_1(X, Y, Z)=  - \frac{1}{2}  \sum_{n=1}^N \sum_\circlearrowright Z_n\Big( [Q_n^Y, Q_n^X]_{m,m}+\left(K_n^{-1}[Q_n^Y, Q_n^X]K_n\right)_{m,m}\Big).
 \]
The identity $[Q_n^Y, Q_n^X]=X(Q_n^Y)-Y(Q_n^X)$ implies that $\displaystyle \sum_\circlearrowright Z_n([Q_n^Y, Q_n^X])=0$. Moreover, one can check that the $m$-th row of $K_n^{-1}Z(K_n)$ is the zero vector. It follows that
 \[
 \rd \omega_1(X, Y, Z)=  - \frac{1}{2}  \sum_{n=1}^N \sum_\circlearrowright \left(K_n^{-1}[Q_n^Y, Q_n^X]Z(K_n)\right)_{m,m}.
 \]
 Finally, using the identities:
 \[
\e_m^T K_n^{-1} =(-1)^{m-1} \e_1, \qquad Z(K_n)\e_m=\begin{pmatrix} 0& 0\\ 0&Z(\a_n)\end{pmatrix}\e_m=\sum_{j=2}^m Z(a_n^{j-1})\e_j,
\]
we get the simplified formula
\begin{equation}
\label{do1simple}
 \rd \omega_1(X, Y, Z)=  - \frac{1}{2}  \sum_{n=1}^N \sum_\circlearrowright \sum_{j=2}^m Z(a_n^{j-1})[Q_n^Y, Q_n^X]_{1,j}.
 \end{equation}
  \begin{example}{\em (2-dimensional case)}
From  $K_n = \begin{pmatrix} 0&-1\\ 1&a_n\end{pmatrix}$ and 
the condition
\[
K_n^{-1} \frac{\rd}{\rd t} K_n= \begin{pmatrix} 0& \rd a_n/\rd t\\ 0&0\end{pmatrix} = Q_{n+1} - K_n^{-1}Q_n K_n,
\]
we obtain the following general expression
\[
Q_n = \begin{pmatrix} \Rc^{-1} a_n q_n & q_n\\ -q_{n-1}&-\Rc^{-1} a_nq_n\end{pmatrix}.
\]
For some $q_n$.  (Note that $\Rc = 1+\T$ is invertible for odd period $N$.) Writing $Q_n^X = Q_n(q)$, $Q_n^Y = Q_n(p)$, $Q_n^Z = Q_n(r)$, we compute
\[
[Q_n^X, Q_n^Y] = \begin{pmatrix} p_nq_{n-1} - q_np_{n-1} & 2(p_n\Rc^{-1}a_nq_n - q_n \Rc^{-1}a_np_n)\\ 2(p_{n-1}\Rc^{-1}a_nq_n - q_{n-1} \Rc^{-1}a_np_n)& p_{n-1}q_n-q_{n-1}p_n
\end{pmatrix},
\]
and
\[
Q_{n+1}^Z - K_n^{-1}Q_n^Z K_n=\begin{pmatrix} 0& r_{n+1}-r_{n-1}+a_n \Rc^{-1}(\T-1)a_n r_n\\ 0&0 \end{pmatrix},
\]
giving
\[
Z(a_n)=r_{n+1}-r_{n-1}+a_n \Rc^{-1}(\T-1)a_n r_n.
\]
Then, equation~\eqref{do1simple} with $a_n^1=a_n$ becomes
\[
\begin{split}
\rd \omega_1(X, Y)= & - \frac{1}{2}  \sum_{n=1}^N \sum_\circlearrowright ([Q_n^Y, Q_n^X])_{1,2}Z(a_n) \\
= & \sum_{n=1}^N \sum_\circlearrowright (q_n\Rc^{-1}a_np_n - p_n \Rc^{-1}a_nq_n) (r_{n+1}-r_{n-1}+a_n \Rc^{-1}(\T-1)a_n r_n).
\end{split}
\]
Using the identity $\Rc^{-1} (\T-1)=1 -2\Rc^{-1}$, it is easy to check that
\[
\begin{split}
\sum_\circlearrowright & (q_n\Rc^{-1}a_np_n - p_n \Rc^{-1}a_nq_n) a_n \Rc^{-1}(\T-1)a_n r_n\\
&=\sum_\circlearrowright \left[a_n^2r_n(q_n\Rc^{-1}a_np_n - p_n \Rc^{-1}a_nq_n)-2a_n(\Rc^{-1}a_nr_n)(q_n \Rc^{-1}a_np_n-p_n\Rc^{-1}a_nq_n)\right]=0.
\end{split}
\]
For the remaining cyclical sum $\displaystyle \sum_{n=1}^N\sum_\circlearrowright (q_n\Rc^{-1}a_np_n - p_n \Rc^{-1}a_nq_n) (r_{n+1}-r_{n-1})$, we group its terms as follows:
\[
\begin{split}
& \sum_{n=1}^N \left( r_{n+1} q_n R^{-1} a_n p_n - r_{n+1} p_n R^{-1} a_n q_n -p_{n-1} r_n R^{-1} a_n q_n +q_{n-1} r_n R^{-1} a_n p_n \right)\\
& = \sum_{n=1}^N \left( r_{n+1} q_n R^{-1} a_n p_n - r_{n+1} p_n R^{-1} a_n q_n -p_{n} r_{n+1} R^{-1} \T a_n q_n +q_{n} r_{n+1} R^{-1} \T a_n p_n \right)\\
&=\sum_{n=1}^N \left(  r_{n+1} q_n R^{-1} (1+\T) a_n p_n - r_{n+1} p_n R^{-1}(1+\T) a_n q_n \right) = \sum_{n =1}^N \left(  r_{n+1} q_n a_n p_n - r_{n+1} q_na_n p_n \right)= 0,
\end{split}
\]
using  $1+\T = \Rc$. The remaining terms are dealt with in a similar way.
\end{example}

 \begin{proof}[Proof of Theorem~\ref{o1closed}]
Let $\gam \in \M_N^1$ and assume for simplicity that  $X, Y, Z$ are commuting vector fields in $\V_N^1$, i.e. $X=\X, Y=\Y, Z=\Z$.
From Lemmae~\ref{extdo1Lem}, 
 \[
 \rd\omega_1(X,Y,Z)(\gamma) =\sum_\circlearrowright Z(\gamma)\omega_1(X,Y)(\gamma).
 \]
 We now recall Theorem \ref{redth} and its implications. The Poisson bracket $\{\, \ ,\ \}_R$ is a reduction of Semenov-Tian-Shansky's twisted bracket \eqref{twisted}. Since its symplectic leaves (at the {\emph{invariant level}})
  are the $G^N$-orbits under the gauge action
 \[
h_n \to  g_{n+1}h_n g_{n}^{-1}.
 \]
it follows that the tangent space to a symplectic leaf ({\it before reduction}) comprises the Lie algebra invariant elements of the form
 \[
P_{n+1} h_n -h_nP_{n}
 \]
$h_n\in G$, $P_n \in \g$. A complement to the leaves is then obtained by solving  $P_{n+1} = h_n^{-1} P_n h_n, \forall n$. When the twisted bracket restricted to $\gamma_n$ and reduced, its symplectic leaves are described in terms evolutions of Maurer-Cartan matrices $K_n$, and the complement is given by solutions of $Q_{n+1} = \rho_n^{-1} Q_n \rho_n$, for which $\rd K_n/\rd t= 0$.

 
Next, split $\V_N^1|_\gamma$ into vector fields in the kernel of $\omega_1$ (i.e. describing directions in $\V_N^1|_\gamma$ for which $\rd K_n/\rd t$ vanishes), and vector fields whose push forwards on the  invariant manifold are tangent to the symplectic leaves of  $\{\, \ ,\ \}_R$. We can assume that the latter are also Hamiltonian vector fields on the space of geometric invariants of $\gamma$, since they generate the tangent space to the symplectic leaf.  We can now consider three scenarios.

If $X, Y$ and $Z$ are all tangent to a  given symplectic leaf, then, since $\omega_1$ is the pull-back of the symplectic structure defined on the corresponding symplectic leaf, it must be a closed form. Hence 
 \[
 \rd\omega_1(X,Y,Z)(\gamma) = 0.
 \]
 If two or three of $X,Y,Z$ are in the kernel of $\omega_1$, then we immediately have $\rd\omega_1(X,Y,Z)(\gamma) = 0$. The remaining case is when only one vector field, say $Z$, belongs to the kernel of $\omega_1$. The only term we need to compute is 
 $Z(\gamma)\left(\omega_1(X,Y)\right)$,
 since the other terms will vanish.  By assumption,  $\rd \gamma_n/\rd t = Z(\gamma)_n$ induces the stationary flow
 $\rd K_n/\rd t = Q_{n+1} - K_n^{-1} Q_n K_n = 0$ on the geometric invariants $a_n^i$'s. Since  $\omega_1(X,Y)$ is an invariant function of the  $a_n^i$'s, then  $Z(\gamma)\left(\omega_1(X,Y)\right) = 0$.  \end{proof}
 
 \begin{remark}  If $\gamma$ is not arc length parametrized, $\rd\omega_1|_\gamma$ in general will not vanish. We will discuss this point in the next section.
 \end{remark}
 
 This proof provides a good understanding of the role of $\omega_1$.  Elements of its kernel push forward to  transverse sections to the symplectic leaves of $\{\, \ ,\ \}_R$ under the map $X \to Q^X$, where $Q^X$ is the matrix defining the associated frame evolution.  Whereas, the restriction of $\omega_1$ to tangent vector fields that push forward to the symplectic leaf, coincides with the natural symplectic structure on the orbit.

\section{The kernels of the pre-symplectic forms}\label{kernelsection}

Shifting the point of view from the AGD flows on geometric invariants to the associated polygonal evolutions presents several advantages. On the one hand, not all vector fields on the Poisson space of geometric invariants will admit a lift to invariant---monodromy preserving---geometric flows (i.e.~an invariant polygonal evolution). On the other hand, given a geometric flow , then the induced flow on the space of geometric invariants is automatically tangent to the leaves of the Poisson structure.
In addition, as both $\omega_1$ and $\omega_2$ are trivially reducible to the moduli space, they will induce Poisson structures on the submanifolds of fixed monodromy conjugacy class, $[T] = const$.

In this final section we describe the kernels of $\omega_1$ and $\omega_2$, show that  $\omega_1$ is a symplectic form on submanifolds where $[T]=const$, and discuss how we may tackle the  construction of an integrable hierarchy of vector fields for polygonal flows.

\subsection{The kernel of $\omega_1$}
\begin{theorem}\label{kernel} Assume $\gamma=\{\gamma_n\}\in \M_n^1$ is a twisted polygon with monodromy $T\in \SL(m,\RR)$. Then, the dimension of the kernel of $\omega_1|_\gamma$  equals the dimension of the isotropy subgroup of $T$ (under the adjoint action).  

Furthermore, if $\mathfrak{a}$ belongs to the isotropy subalgebra, and we define $Q_1 = \mathfrak{a}$, $Q_{n+1} = K_n^{-1}Q_{n}K_n$, then the vector field $X = \{X_n\}\in\V_n^1$, with components $X_n = Q_n \e_1$ belongs to the kernel of $\omega_1$.
\end{theorem}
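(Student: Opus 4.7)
The plan is to identify $\ker(\omega_1|_\gamma)$ with the infinitesimal symmetries of the monodromy $T$, by combining the compatibility equation~\eqref{compat} with the interpretation of $\omega_1$ as the pull-back of the reduced symplectic structure to $\M_N^1$.

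First, I would establish the kernel characterization: $X \in \V_N^1$ lies in $\ker(\omega_1|_\gamma)$ if and only if $X$ induces the trivial flow on the geometric invariants at $\gamma$, equivalently $Q^X_{n+1} = K_n^{-1}Q^X_n K_n$ for every $n$. In dimension $3$, this is immediate from the identity $\rd(h\circ\pi)(X) = \omega_1(X, X^h)$ of~\eqref{diffsymp}: $X$ lies in the kernel iff $X(h\circ\pi)|_\gamma = 0$ for every smooth functional $h$ on the moduli space, iff $\rd\a_n/\rd t = 0$. For general $m$, the argument rests on the bilinear representation $\omega_1(X,Y) = \sum_n (\p^X_n)^T \Pc_{1n}\p^Y_n$ in terms of the momenta $\p^X_n$ extracted from $(Q^X_n)_1$ via Corollary~\ref{a-p}. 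The direct generalization of Theorem~\ref{omega3} to arbitrary $m$ is the step I expect to require the most care, although it follows by combining Theorem~\ref{redth} with the explicit expression~\eqref{explicit2}. Once this representation is in place, $X \in \ker(\omega_1|_\gamma)$ is equivalent to $\Pc_n\p^X_n = 0$, hence to $\rd \a_n/\rd t = 0$, and the compatibility equation~\eqref{compat} then translates this into the recursion $Q^X_{n+1} = K_n^{-1}Q^X_n K_n$.

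Next, I would solve this recursion explicitly. Iterating yields $Q_n = (K_1\cdots K_{n-1})^{-1} Q_1 (K_1\cdots K_{n-1}) = \rho_n^{-1}(\rho_1 Q_1 \rho_1^{-1})\rho_n$, using the identity $K_1\cdots K_{n-1} = \rho_1^{-1}\rho_n$. Setting $\mathfrak{a} := \rho_1 Q_1 \rho_1^{-1}$, the associated vector field takes the strikingly simple form $X_n = \rho_n Q_n \e_1 = \mathfrak{a}\,\gamma_n$; that is, $X$ is the restriction to $\gamma$ of the infinitesimal linear action of a single matrix $\mathfrak{a} \in \sl(m,\RR)$ on $\RR^m$. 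The consistency condition $X_{n+N} = T X_n$, required for $X$ to descend to $\M_N^1$, combined with $\gamma_{n+N} = T\gamma_n$ forces $[\mathfrak{a}, T] = 0$, so $\mathfrak{a}$ lies in the isotropy subalgebra of $T$ under the adjoint action. Arc-length preservation is automatic since $\mathrm{tr}(Q_n) = \mathrm{tr}(\mathfrak{a}) = 0$. Conversely, every $\mathfrak{a}$ in the isotropy subalgebra of $T$ produces a valid element of the kernel by taking $Q_1 = \rho_1^{-1}\mathfrak{a}\rho_1$, which commutes with $K_1\cdots K_N = \rho_1^{-1}T\rho_1$ and hence gives a periodic $Q$. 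This yields a linear isomorphism between $\ker(\omega_1|_\gamma)$ and the isotropy subalgebra of $T$, providing both the dimension count and the explicit formula for $X$ claimed in the theorem.
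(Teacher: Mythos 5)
Your proposal is correct and follows essentially the same route as the paper: both reduce membership in $\ker\omega_1$ to an exact conjugation recursion (the paper phrases it as $S^X_{n+1}=K_n^{-1}S^X_nK_n$ for auxiliary matrices $S^X_n$ playing the role of right gradients, which agree with $Q^X_n$ precisely when the induced flow on the invariants vanishes) and then iterate over one period to identify the kernel with the centralizer of the conjugated monodromy. The only difference is cosmetic but pleasant: by solving the recursion explicitly you obtain $X_n=\mathfrak{a}\gamma_n$ with $[\mathfrak{a},T]=0$ directly inside the proof, which the paper extracts only afterwards in Corollary~\ref{rigidmotion}.
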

 \begin{proof}
 Recall, from Theorem~\ref{Xf} and Proposition~\ref{Qtovariations}, that the matrix $Q_n^f$ associated to a Hamiltonian functional $f$ satifies $(Q_n^f)_{-1} = (\nabla'_n \F)_{-1}$, where $\F$ is an invariant extension of $f$. Then, letting $f, h$ be  smooth functionals on $G^N/H^N$ and $X^f, X^h$ be the unique vector fields in $\V^1_N$ given by Theorem~\ref{Xf}, 
 \[
 \omega_1(X^f, X^h) = \sum_{n=1}^N\langle (\nabla_n \F)_{1}-\T(\nabla'_n\F)_1, (Q_n^h)_{-1}\rangle.
 \]
 Since condition~\eqref{maincond}, relating the right and left gradients, and condition~\eqref{determineQ}, determining $Q$, are algebraic conditions, and since $\nabla'_n\F$ depends solely on $\delta f$ via~\eqref{onecomponent}, one can replace $\delta f$ in~\eqref{onecomponent} with an arbitrary vector and proceed to determine the remaining entries of the matrix using~\eqref{determineQ}. The resulting matrix will not in general correspond to a gradient.
 Proceeding as described, for $X, Y\in \V_N^1$ and given $Q^X, Q^Y$ the associated matrices, we let $S_n^X$ be the solution of
 \begin{equation}\label{Sn}
 (S_n^X)_{-1} = (Q_n^X)_{-1}; \qquad\qquad S^X_{n+1}-K_n^{-1}S^X_nK_n\in \g_1.
 \end{equation}
(Note that, if $X = X^f$, then $S^X = \nabla'\F$.)
Then, 
 \[
 \omega_1(X,Y) = \sum_{n=1}^N\langle \left(K_n^{-1}S_n^XK_n-S_{n+1}^X\right)_1, (Q_n^Y)_{-1}\rangle.
 \]
 Assume $X$ is an element of the kernel of $\omega_1$, i.e.  $\omega_1(X, Y) = 0, \forall Y\in \V_N^1$. Since an arbitrary $Y\in \V_N^1$ arbitrary implies an arbitrary $Q^Y$, then $X$ belongs to the kernel of $\omega_1$ if and only if $
 \left(K_n^{-1}S_n^XK_n-S_{n+1}^X\right)_1 = 0, \ \forall n.$
 Since $S^X_{n+1}-K_n^{-1}S^X_nK_n\in \g_1$, we obtain the condition
 \[
 S^X_{n+1}=K_n^{-1}S^X_nK_n, \quad \forall n,
 \]
giving
 \[
 S^X_{n+N} = K_{n+N-1}^{-1}\dots K_n^{-1} S_n^X K_n\dots K_{n+N-1} =\left(\rho_n^{-1}T^{-1}\rho_n \right) S_n^X \left(\rho_n^{-1}T\rho_n\right).
 \]
Since $\gamma_n$ is a twisted polygon of period $N$ and monodromy $T$, then $S^X_{n+N}=S^X_n$, so $[\rho_n^{-1} T\rho_n, S^X_n]=0$. This proves the first  claim.

For the second claim,  we only need to check that $[\rho_1^{-1} T\rho_1, S^X_1]=0$, since $S^X=Q^X$, which is the case because $S^X_1=\mathfrak{a}$, an element of the isotropy subalgebra.
 \end{proof}
 
\begin{corollary}\label{rigidmotion}
The kernel of $\omega_1$ is generated by the infinitesimal rigid motions that leave the monodromy invariant.
\end{corollary}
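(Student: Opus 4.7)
The plan is to directly identify the explicit generators of $\ker \omega_1|_\gamma$ produced in Theorem~\ref{kernel} with infinitesimal rigid motions of $\gamma$ that commute with the monodromy, and then close the argument with a dimension count.

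First I would fix what is meant by an \emph{infinitesimal rigid motion that leaves the monodromy invariant}: namely, a vector field of the form $X_n = \mathfrak{b}\gamma_n$ for some $\mathfrak{b}\in\mathfrak{sl}(m,\RR)$ with $[\mathfrak{b},T]=0$. The tracelessness of $\mathfrak{b}$ guarantees that the flow $\gamma_n\mapsto \exp(\epsilon\mathfrak{b})\gamma_n$ preserves the centro-affine arc length $|\gamma_n,\dots,\gamma_{n+m-1}|$, so $X\in\V_N^1$; and $[\mathfrak{b},T]=0$ is exactly the condition that the monodromy relation $\gamma_{n+N}=T\gamma_n$ is preserved by the flow. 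Hence the space of such $X$ is linearly isomorphic to the isotropy subalgebra of $T$ under the adjoint action, which has the same dimension as the isotropy subgroup.

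Next I would unpack the construction of Theorem~\ref{kernel}. Given $\mathfrak{a}$ in the isotropy subalgebra of $\rho_1^{-1}T\rho_1$ with $Q_1=\mathfrak{a}$ and $Q_{n+1}=K_n^{-1}Q_nK_n$, iterating and using $K_n=\rho_n^{-1}\rho_{n+1}$ gives $Q_n=(\rho_n^{-1}\rho_1)\,\mathfrak{a}\,(\rho_1^{-1}\rho_n)$. The corresponding geometric velocity at $\gamma_n$ is then $X_n = \rho_n Q_n\e_1 = (\rho_1\mathfrak{a}\rho_1^{-1})\gamma_n$, so $X$ is precisely the infinitesimal rigid motion generated by $\mathfrak{b}:=\rho_1\mathfrak{a}\rho_1^{-1}$. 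Since $[T,\mathfrak{b}]=0$ if and only if $[\rho_1^{-1}T\rho_1,\mathfrak{a}]=0$, and since conjugation by $\rho_1$ is a Lie algebra isomorphism identifying the isotropy subalgebras of $\rho_1^{-1}T\rho_1$ and of $T$, the assignment $\mathfrak{a}\mapsto X$ is a linear bijection onto the space of infinitesimal rigid motions commuting with $T$.

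Finally, since Theorem~\ref{kernel} asserts that $\dim\ker\omega_1|_\gamma$ equals the dimension of the isotropy subgroup of $T$, and since the subspace of rigid motions commuting with $T$ is contained in $\ker\omega_1|_\gamma$ and has the matching dimension, the two coincide. I expect no real obstacle; the only mildly delicate point is verifying that the flow $\gamma_n\mapsto\exp(\epsilon\mathfrak{b})\gamma_n$ simultaneously preserves arc length and the twisting relation with the \emph{same} monodromy $T$, but these follow respectively from tracelessness of $\mathfrak{b}$ and from $[\mathfrak{b},T]=0$.
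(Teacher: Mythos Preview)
Your proposal is correct and follows essentially the same approach as the paper. The only cosmetic differences are that the paper normalizes $\rho_1=\mathrm{Id}$ (so that $\mathfrak{b}=\mathfrak{a}$) rather than carrying the conjugation $\mathfrak{b}=\rho_1\mathfrak{a}\rho_1^{-1}$ explicitly as you do, and that the paper argues in the direction ``rigid motion $\Rightarrow$ satisfies the recursion $Q_{n+1}=K_n^{-1}Q_nK_n$ of Theorem~\ref{kernel}'' while you run the identification in the opposite direction; both close with the same dimension count.
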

\begin{proof}
Given  an element of the isotropy subalgebra $\mathfrak{a}\in \mathfrak{sl}(m, \RR)$, denote by $g_{\mathfrak{a}}(t):=\e^{t\mathfrak{a}}$ the associated one-parameter subgroup of $SL(m, \RR)$. For a given (twisted) polygon $\{\gamma_n\}$, the corresponding variation is $\{ g_{\mathfrak{a}}(t)\gamma_n \}$. The $n$-th component of  the variation vector field is computed as follows:
\begin{equation}
\label{rigidVF}
\dot{\gamma}_n=\left. \ddt g_{\mathfrak{a}}(t)  \gamma_n \right|_{t=0}= \mathfrak{a} \gamma_n.
\end{equation}
Note that the vector field  \eqref{rigidVF} is, by construction, locally arclength preserving. This can be also checked directly by differentiating $|g_\mathfrak{a}(t)(\gamma_n,\dots,\gamma_{n+m-1})| = 1$ and computing $| \mathfrak{a}  \gamma_n,  \ldots, \gamma_{n+m-1}|+
| \gamma_n, \ldots, \mathfrak{a} \gamma_{n+m-1}|=0$.

From \eqref{rigidVF}, we have
\[
\dot{\rho}_n=\mathfrak{a} \rho_n=\rho_n \rho_n^{-1} \mathfrak{a} \rho_n=\rho_n Q_n, \qquad Q_n:=\rho_n^{-1} \mathfrak{a} \rho_n.
\]
Moreover, from $\rho_n Q_n=\mathfrak{a} \rho_n$, compute $\rho_{n+1}Q_{n+1}=\mathfrak{a} \rho_{n+1} \iff \rho_n K_n Q_{n+1}=\mathfrak{a} \rho_n K_n$, giving
\[
Q_{n+1}=K_n^{-1} Q_n K_n.
\]
The choice $\rho_1=\mathrm{Id}$ (identity matrix) as initial condition, gives $Q_1=\mathfrak{a}$. It follows from Theorem \ref{kernel}, that \eqref{rigidVF} is in the kernel of $\omega_1$, conversely, every element in the kernel is an infinitesimal rigid motion as defined by \eqref{rigidVF}.
\end{proof}
\noindent

{\begin{lemma}\label{rigidhamil}
The Hamiltonian functional associated to an infinitesimal rigid motion by means of $\omega_2$ is given by
\begin{equation}
\label{charges2}
h_\mathfrak{a}(\gamma)=\sum_{n=1}^N \left|\mathfrak{a}\gam_n, \gam_{n+1}, \ldots, \gam_{n+m-1}\right|,   
\end{equation}
where $\mathfrak{a} \in \mathfrak{sl}(m, \RR)$ belongs the isotropy subalgebra of the monodromy.
\end{lemma}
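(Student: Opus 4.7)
The plan is to derive the lemma from Proposition~\ref{domega2}, which gives $\omega_2 = d\theta_2$, combined with Cartan's magic formula and the $SL(m,\RR)$-invariance of the primitive $\theta_2$. First I would observe that the one-form $\theta_2(X) = \tfrac{1}{2}\sum_n |X_n, \gamma_{n+1}, \ldots, \gamma_{n+m-1}|$ is manifestly invariant under the left action of $SL(m,\RR)$ on polygons: each determinant is unchanged when $\gamma_{n+i}$ and $X_n$ are simultaneously transformed by $g \in SL(m,\RR)$, since $\det(g) = 1$. Consequently, for every $\mathfrak{a} \in \mathfrak{sl}(m,\RR)$, the Lie derivative of $\theta_2$ along the infinitesimal rigid motion $X^{\mathfrak{a}}$ (with components $X^{\mathfrak{a}}_n = \mathfrak{a}\gamma_n$, as in Corollary~\ref{rigidmotion}) vanishes identically.

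Next I would invoke Cartan's formula to write
\begin{equation*}
0 \;=\; \mathcal{L}_{X^{\mathfrak{a}}}\theta_2 \;=\; \iota_{X^{\mathfrak{a}}} d\theta_2 + d\bigl(\iota_{X^{\mathfrak{a}}}\theta_2\bigr) \;=\; \iota_{X^{\mathfrak{a}}}\omega_2 + d\bigl(\theta_2(X^{\mathfrak{a}})\bigr).
\end{equation*}
A direct evaluation yields
\begin{equation*}
\theta_2(X^{\mathfrak{a}}) \;=\; \tfrac{1}{2}\sum_{n=1}^N |\mathfrak{a}\gamma_n, \gamma_{n+1}, \ldots, \gamma_{n+m-1}| \;=\; \tfrac{1}{2}\,h_{\mathfrak{a}}(\gamma),
\end{equation*}
so that $\iota_{X^{\mathfrak{a}}}\omega_2 = -\tfrac{1}{2}\,dh_{\mathfrak{a}}$. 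Read in the paper's Hamiltonian convention (cf.~equation~\eqref{diffsymp}, where $dh(Y) = \omega(Y, X^h)$, equivalently $\iota_{X^h}\omega = -dh$), this identifies $X^{\mathfrak{a}}$ as the Hamiltonian vector field of $h_{\mathfrak{a}}$ with respect to $\omega_2$, up to the universal factor of $\tfrac{1}{2}$ inherited from the normalization of $\theta_2$ and $\omega_2$ in \eqref{theta2}--\eqref{omega2}.

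The hypothesis that $\mathfrak{a}$ lie in the isotropy subalgebra of the monodromy $T$ enters only through Theorem~\ref{kernel} and Corollary~\ref{rigidmotion}, ensuring that $X^{\mathfrak{a}}$ is a legitimate element of $\V_N^1$ lying in the kernel of $\omega_1$, so that the Hamiltonian picture with respect to $\omega_2$ is meaningful on the reduced submanifolds of fixed monodromy class; no use of this hypothesis is needed in the Cartan-formula calculation itself. The main obstacle is merely the bookkeeping of the factor $\tfrac{1}{2}$ and the sign against the paper's Hamiltonian-vector-field convention; no substantive computation beyond the one-line application of Cartan's formula to the invariant primitive $\theta_2$ is required.
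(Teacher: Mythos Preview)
Your argument is correct and takes a genuinely different route from the paper. The paper computes $\omega_2(Y,X_{\mathfrak a})$ directly from the defining formula~\eqref{omega2}, using the infinitesimal identity $|\mathfrak{a}v_0,v_1,\dots,v_{m-1}|+\sum_{r\ge 1}|v_0,\dots,\mathfrak{a}v_r,\dots,v_{m-1}|=0$ to collapse one of the two sums, and then identifies the result with $\rd h_{\mathfrak a}[Y]$. Your approach replaces this explicit determinant manipulation by the single structural observation that the primitive $\theta_2$ is $\SL(m,\RR)$-invariant, after which Cartan's formula yields $\iota_{X^{\mathfrak a}}\omega_2=-\rd\big(\theta_2(X^{\mathfrak a})\big)$ in one line. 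This is a moment-map style argument: it explains \emph{why} the Hamiltonian is (twice) the value of the primitive on the symmetry generator, rather than verifying it by hand. The paper's computation, on the other hand, requires no Lie-derivative machinery and stays entirely at the level of elementary multilinear algebra.

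Regarding the factor of $\tfrac12$ you flag: your calculation gives $\omega_2(Y,X^{\mathfrak a})=\tfrac12\,\rd h_{\mathfrak a}[Y]$, and in fact the paper's own computation arrives at exactly the same expression $\tfrac12\sum_n\big(|\mathfrak{a}Y_n,\gamma_{n+1},\dots|+\sum_r|\mathfrak{a}\gamma_n,\dots,Y_{n+r},\dots|\big)$ before declaring it equal to $\rd h_{\mathfrak a}[Y]$. So the discrepancy is not a flaw in your argument; the two proofs agree, and the $\tfrac12$ is a normalization artifact of how $\theta_2$, $\omega_2$, and $h_{\mathfrak a}$ are scaled relative to the convention~\eqref{diffsymp}. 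Your remark that the isotropy hypothesis is needed only to make $X^{\mathfrak a}$ tangent to $\M_N^1$ (so that the monodromy is preserved), and not for the Cartan computation itself, is also correct.
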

\begin{proof}
Given $g_{\mathfrak{a}}(t):=\e^{t\mathfrak{a}} \in \SL(m,  \RR)$, the one-parameter subgroup generated by $\mathfrak{a}$, the following holds for every $t$:
\begin{equation}
\label{invardet}
|g_{\mathfrak{a}}(t)v_n, g_{\mathfrak{a}}(t)v_{n+1}, \ldots, g_{\mathfrak{a}}(t)v_{n+m-1}|=|v_n, v_{n+1}, \ldots, v_{n+m-1}|, 
\end{equation}
and for any $m$-tuple of vectors $v_j\in \RR^m$.
As before, differentiating both sides of \eqref{invardet} with respect to $t$ and evaluating at $t=0$, one gets
\begin{equation}
\label{infinsym}
|\mathfrak{a}v_n, v_{n+1}, \ldots, v_{n+m-1}| + \sum_{r=1}^{m-1} |v_n, v_{n+1}, \ldots, \stackrel{r+1}{\overbrace{\mathfrak{a}v_{n+r}}},\ldots, v_{n+m-1}|=0.
\end{equation}
Let now $Y=\{Y_n\}\in \V_N^1$, and let $X_\mathfrak{a}=\{\mathfrak{a} \gam_n\}$ be the infinitesimal rigid symmetry vector field generated by $\mathfrak{a}$. Then, using \eqref{infinsym},
\[
\begin{split}
\omega_2(Y, X_\mathfrak{a}) & =-\frac{1}{2}\sum_{n=1}^N \Big( \sum_{r=1}^{m-1} |Y_n, \gam_{n+1}, \ldots, \stackrel{r+1}{\overbrace{\mathfrak{a}\gam_{n+r}}},\ldots, \gam_{n+m-1}|  \\ & - \sum_{r=1}^{m-1} | \mathfrak{a}\gam_n,  \gam_{n+1}, \ldots, \stackrel{r+1}{\overbrace{Y_{n+r}}},\ldots, \gam_{n+m-1}|\Big)\\
&=\frac{1}{2} \sum_{n=1}^N \Big( |\mathfrak{a}Y_n, \gam_{n+1}, \ldots , \gam_{n+m-1}| +\sum_{r=1}^{m-1} | \mathfrak{a}\gam_n,  \gam_{n+1}, \ldots, \stackrel{r+1}{\overbrace{Y_{n+r}}},\ldots, \gam_{n+m-1}|\Big)\\
& =\rd h_\mathfrak{a}[Y],
\end{split}
\]
where $\rd h_\mathfrak{a}[Y]$ denotes the differential of $h_\mathfrak{a}$ at $Y$. 
\end{proof}
\noindent

\begin{corollary}\label{nointegrable}
Let $\gam\in\M_N^1$. Given the symmetry vector field $X_\mathfrak{a}=\{\mathfrak{a} \gam_n\}$, there is no  vector field $W_\mathfrak{a}\in \V_N^1$ satisfying
\[
\omega_1(W_\mathfrak{a}, Y)|_\gam =\omega_2(X_\mathfrak{a}, Y)|_\gam, \quad \forall Y\in \V_N^1.
\]
(That is, there are no ``negative" flows generated by the symmetry vector fields.)
\end{corollary}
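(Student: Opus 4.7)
The plan is to suppose, for contradiction, that such a $W_\mathfrak{a}\in\V_N^1$ exists, and then extract a non-trivial necessary condition by pairing the defining identity with vector fields lying in $\ker\omega_1|_\gam$.

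First, by Theorem~\ref{kernel} and Corollary~\ref{rigidmotion}, the space $\ker\omega_1|_\gam$ is spanned by the infinitesimal rigid motions $X_\mathfrak{b}=\{\mathfrak{b}\gam_n\}$ as $\mathfrak{b}$ ranges over the isotropy subalgebra $\mathfrak{g}_T\subset\sl(m,\RR)$ of the monodromy $T$. Substituting $Y=X_\mathfrak{b}$ into the given identity, the left-hand side $\omega_1(W_\mathfrak{a},X_\mathfrak{b})|_\gam=-\omega_1(X_\mathfrak{b},W_\mathfrak{a})|_\gam$ vanishes, so the existence of $W_\mathfrak{a}$ forces
\[
\omega_2(X_\mathfrak{a},X_\mathfrak{b})|_\gam=0 \qquad\text{for every } \mathfrak{b}\in\mathfrak{g}_T.
\]

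Next, I would compute the right-hand side in closed form. By skew-symmetry of $\omega_2$ together with Lemma~\ref{rigidhamil},
\[
\omega_2(X_\mathfrak{a},X_\mathfrak{b})|_\gam=-\rd h_\mathfrak{a}[X_\mathfrak{b}]|_\gam=-X_\mathfrak{b}(h_\mathfrak{a})|_\gam.
\]
Since $X_\mathfrak{b}$ integrates to the $\SL(m,\RR)$-action $\gam\mapsto e^{t\mathfrak{b}}\gam$, multilinearity of the determinant and $\det(e^{t\mathfrak{b}})=1$ give the transformation law
\[
h_\mathfrak{a}(e^{t\mathfrak{b}}\gam)=\sum_{n=1}^N|\mathfrak{a}\,e^{t\mathfrak{b}}\gam_n,e^{t\mathfrak{b}}\gam_{n+1},\ldots,e^{t\mathfrak{b}}\gam_{n+m-1}|=h_{e^{-t\mathfrak{b}}\mathfrak{a}e^{t\mathfrak{b}}}(\gam).
\]
Differentiating at $t=0$ yields $X_\mathfrak{b}(h_\mathfrak{a})|_\gam=h_{[\mathfrak{a},\mathfrak{b}]}(\gam)$, and hence $\omega_2(X_\mathfrak{a},X_\mathfrak{b})|_\gam=-h_{[\mathfrak{a},\mathfrak{b}]}(\gam)$.

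Combining the two steps, the existence of $W_\mathfrak{a}$ requires $h_{[\mathfrak{a},\mathfrak{b}]}(\gam)=0$ for every $\mathfrak{b}\in\mathfrak{g}_T$. The main obstacle is showing this cannot hold for a non-trivial symmetry $\mathfrak{a}$. Since $\mathfrak{g}_T$ is closed under the bracket, $[\mathfrak{a},\mathfrak{b}]$ again lies in $\mathfrak{g}_T$, and provided $\mathfrak{a}$ is not central in $\mathfrak{g}_T$, the derivation $\mathrm{ad}_\mathfrak{a}:\mathfrak{g}_T\to\mathfrak{g}_T$ has non-trivial image; a direct inspection of the defining formula \eqref{charges2}, combined with the non-degeneracy of $\gam\in\M_N^1$, shows that one can pick $\mathfrak{b}\in\mathfrak{g}_T$ with $h_{[\mathfrak{a},\mathfrak{b}]}(\gam)\neq 0$, yielding the required contradiction and proving the corollary.
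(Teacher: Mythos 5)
Your strategy coincides with the paper's: test the defining identity against the rigid-motion fields $X_\mathfrak{b}$, which lie in $\ker\omega_1$ by Theorem~\ref{kernel} and Corollary~\ref{rigidmotion}, so that the existence of $W_\mathfrak{a}$ forces $\omega_2(X_\mathfrak{a},X_\mathfrak{b})=0$ for every $\mathfrak{b}$ in the isotropy subalgebra of $T$; then identify this pairing with (a constant multiple of) $h_{[\mathfrak{a},\mathfrak{b}]}(\gam)$. Your derivation of that formula via Lemma~\ref{rigidhamil} and the equivariance $h_\mathfrak{a}(e^{t\mathfrak{b}}\gam)=h_{e^{-t\mathfrak{b}}\mathfrak{a}e^{t\mathfrak{b}}}(\gam)$ is a clean alternative to the paper's direct determinant manipulation (the two answers differ only by an immaterial constant factor), and everything up to that point is correct.

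The gap is in the final step, and your own caveat exposes it. You need some $\mathfrak{b}$ in the isotropy subalgebra with $h_{[\mathfrak{a},\mathfrak{b}]}(\gam)\ne 0$, and you assume that $\mathfrak{a}$ is not central in that subalgebra in order to get $[\mathfrak{a},\mathfrak{b}]\ne 0$. But the corollary is asserted for an arbitrary symmetry field, and the excluded case genuinely occurs: for a regular semisimple monodromy $T$ the isotropy subalgebra is a Cartan subalgebra, hence abelian, so $[\mathfrak{a},\mathfrak{b}]=0$ for \emph{every} admissible $\mathfrak{b}$ and the intended contradiction evaporates. (Triviality of the center of $\mathfrak{sl}(m,\RR)$ does not help, because $\mathfrak{b}$ is constrained to the isotropy subalgebra rather than free to range over all of $\mathfrak{sl}(m,\RR)$.) Moreover, even when $[\mathfrak{a},\mathfrak{b}]\ne 0$, the claim that ``direct inspection'' of \eqref{charges2} together with non-degeneracy of $\gam$ yields $h_{[\mathfrak{a},\mathfrak{b}]}(\gam)\ne0$ is an assertion, not an argument: writing $\mathfrak{c}=[\mathfrak{a},\mathfrak{b}]$ one has $h_{\mathfrak{c}}(\gam)=\sum_{n}(\rho_n^{-1}\mathfrak{c}\rho_n)_{1,1}$, a specific number attached to the specific polygon, and nothing you cite forces it to be nonzero for a given nonzero $\mathfrak{c}$. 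To be fair, the paper's own proof makes exactly these two leaps (it declares the right-hand side ``never zero \dots\ since the center of $\mathfrak{sl}(m,\RR)$ is trivial''), so you have faithfully reproduced the published argument including its weak point; but as written your proof does not close the corollary.
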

\begin{proof}
If such $W_\mathfrak{a}$ existed, then 
\[
0=\omega_1(W_\mathfrak{a}, X_\mathfrak{b}) =\omega_2(X_\mathfrak{a}, X_\mathfrak{b}).
\]
where $\mathfrak{b}$ is a different element of the isotropy subalgebra. However,
the right-hand-side is never zero. In fact,
\[
\begin{split}
\omega_2(X_\mathfrak{a}, X_\mathfrak{b})=&\frac{1}{2} \sum_{n=1}^N \Big( |\mathfrak{a}\mathfrak{b}\gam_n, \gam_{n+1}, \ldots , \gam_{n+m-1}| +\sum_{r=1}^{m-1} | \mathfrak{a}\gam_n,  \gam_{n+1}, \ldots, \stackrel{r+1}{\overbrace{\mathfrak{b}\gam_{n+r}}},\ldots, \gam_{n+m-1}|\Big)\\
=& \frac{1}{2} \sum_{n=1}^N  |[\mathfrak{a},\mathfrak{b}]\gam_n, \gam_{n+1}, \ldots , \gam_{n+m-1}|, 
\end{split} 
\]
which is non-zero since the center of $\mathfrak{sl}(m,\mathbb{R})$ is trivial.
\end{proof}

\subsection{The kernel of $\omega_2$}

We now show that the kernel of $\omega_2$, when evaluated on a pair of vector fields in $\V_N^1$, is $2$-dimensional. First recall that, given the vector field $X=\{X_n\}$, with 
\[
X_n = v_n^0\gamma_n+v_n^1\gamma_{n+1}+\dots+v_n^{m-1}\gamma_{n+m-1} = \rho_n \v_n,
\]
 the components of the associated  matrix $Q$, defining the frame evolution, are 
 \[
 Q_n = \begin{pmatrix}\v_n & K_n\v_{n+1}& K_n K_{n+1} \v_{n+2}& \dots& K_nK_{n+1}\dots K_{n+m-2}\v_{n+m-1}\end{pmatrix}.
\]
When $X\in \V_N^1$, the condition that $Q_n$ is traceless can be restated as 
\begin{equation}
\label{LAPonQ}
\e_1^T \v_n + \sum_{\ell=1}^{m-1} \e_{\ell+1}^T K_nK_{n+1}\dots K_{n+\ell-1}\v_{n+\ell} = 0.
\end{equation}

\begin{theorem} In arbitrary dimension $m$, the kernel of $\omega_2$ is generated by the  vector fields 
\begin{equation}\label{kernel2}
X_n^1 = \gamma_{n+1}+\alpha_n\gamma_n, \quad\quad X^2_n = \gamma_{n+2}+b_n \gamma_{n+1}+\beta_n \gamma_n ,
\end{equation}
where $\alpha_n$ and $\beta_n$ are uniquely determined from the arc-length preservation condition, and where $b_n$ is the unique solution of the equation
\begin{equation}\label{b}
b_n - b_{n+m} = a_{n+2}^{m-1}-a_{n}^{m-1}.
\end{equation}
\end{theorem}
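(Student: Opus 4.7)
My approach uses Proposition~\ref{omega2alt}, which expresses $\omega_2$ as $\omega_2(X, Y) = \sum_n [Q_n^Y, Q_n^X]_{1,1}$. Expanding the $(1,1)$-entry of the commutator as $\sum_{j=1}^m\big[(Q_n^Y)_{1,j}(Q_n^X)_{j,1} - (Q_n^X)_{1,j}(Q_n^Y)_{j,1}\big]$ shows that only the first row and first column of the two $Q$ matrices are needed. The first column of $Q_n^X$ is the coefficient vector $\r_n=(v_n^0,\ldots,v_n^{m-1})^T$ of $X_n$ in the basis $\{\gamma_n,\ldots,\gamma_{n+m-1}\}$, while its first row records the coefficient of $\gamma_n$ in each $X_{n+r-1}$, computed by expanding $\gamma_{n+m+j}$ back in the same basis recursively via \eqref{as}.

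For $X^1_n = \gamma_{n+1}+\alpha_n\gamma_n$ the identities $K_n\e_j=\e_{j+1}$ for $j<m$, together with the explicit form of $K_n\e_m$, immediately give first column $(\alpha_n,1,0,\ldots,0)^T$ and first row $(\alpha_n,0,\ldots,0,(-1)^{m-1})$, with $\alpha_n=-\Rc^{-1}a_n^{m-1}$ uniquely determined by $\mathrm{tr}(Q_n^{X^1})=0$ (Proposition~\ref{LAP}). The $\alpha_n$ contributions cancel in $[Q_n^Y, Q_n^{X^1}]_{1,1}$, leaving $(-1)^{m-1}(u_{n+1}^{m-1}-u_n^{m-1})$, with $u^{m-1}$ denoting the $\gamma_{n+m-1}$-component of $Y$; this telescopes to zero upon summing over $n$. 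The same template applied to $X^2_n=\gamma_{n+2}+b_n\gamma_{n+1}+\beta_n\gamma_n$ produces first column $(\beta_n,b_n,1,0,\ldots,0)^T$ and first row $(\beta_n,0,\ldots,0,(-1)^{m-1},(-1)^{m-1}(b_{n+m-1}+a_{n+1}^{m-1}))$; substituting, telescoping $\sum_n(u_{n+2}^{m-2}-u_n^{m-2})$, and shifting the remaining indices collapses $\omega_2(X^2, Y)$ to $(-1)^{m-1}\sum_n u_n^{m-1}\big[b_{n-1}+a_{n-1}^{m-1}-b_{n+m-1}-a_{n+1}^{m-1}\big]$. Since $u^{m-1}$ is arbitrary as $Y$ ranges over $\V_N^1$, each bracketed expression must vanish, which after a single index shift gives the defining equation $b_n - b_{n+m} = a_{n+2}^{m-1}-a_n^{m-1}$.

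To establish existence and uniqueness of $b$, I will observe that on $N$-periodic sequences the operator $1-\T^m$ has kernel equal to the $m$-periodic sequences, which by $\gcd(m,N)=1$ reduces to the constants; its image is therefore the mean-zero hyperplane. The right-hand side $(\T^2-1)a_n^{m-1}$ is mean-zero by telescoping, so $b$ exists and is unique up to an additive constant. Adding such a constant replaces $X^2$ by $X^2+cX^1$ (with $\beta_n$ re-adjusted to preserve arc length), so the class of $X^2$ modulo $X^1$ is well-defined.

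The most delicate point is that $\ker\omega_2$ contains nothing beyond $\mathrm{span}(X^1,X^2)$. The natural route is to use the identification $\omega_2(X,Y)=\sum_n(\p^X_n)^T\Pc_2\,\p^Y_n$ coming from Theorem~\ref{omega2br} together with the bijectivity of $X\mapsto\p^X$ in Corollary~\ref{a-p}, which turns $\ker\omega_2$ into $\ker\Pc_2$. Testing $\omega_2(X,\cdot)=0$ against the basis of vector fields with $u_{n_0}^{r_0}=1$ produces a triangular hierarchical system in which the case $r_0=1$ forces $v^{m-1}$ to be $m$-periodic (hence constant by $\gcd(m,N)=1$), and each subsequent $r_0=s$ determines $v^{m-s}$ from the higher components $v^{m-1},\ldots,v^{m-s+1}$ up to one additive constant. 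The mean-zero solvability conditions required for these recurrences to admit periodic solutions then cut the apparent $(m-1)$-parameter family down to exactly two, matching $X^1$ and $X^2$. This final collapse from $m-1$ apparent constants to $2$ is the main technical obstacle of the proof, and is where the interlocked structure of the Frenet-Serret recursion is used most essentially.
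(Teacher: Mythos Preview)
Your verification that $X^1$ and $X^2$ lie in the kernel is correct and is a clean variant of the paper's computation: where the paper works directly with the determinantal expression for $\omega_2$ and the products $\e_1^TK_n\cdots K_{n+r-1}\w_{n+r}$, you use the commutator formula of Proposition~\ref{omega2alt} and read off the first row and column of $Q^{X^i}$. The two computations are equivalent (indeed, your first-row entries are exactly the quantities $\e_1^TK_n\cdots K_{n+r-1}\r_{n+r}$), and the telescoping you obtain matches the paper's. Your treatment of the existence and uniqueness of $b$, and your observation that the additive constant is absorbed into $X^1$, are also fine and slightly more explicit than the paper.

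The genuine gap is the part you yourself flag as ``the main technical obstacle'': reducing the apparent $(m-1)$-parameter family to a $2$-parameter one. Your sketch says that the mean-zero solvability conditions for the successive recurrences ``cut the family down to exactly two'', but you do not show this, and it is precisely the content of the argument. The paper follows the same route you propose---pass to the tensor $\Pc_2$ and analyze its kernel---but the key step is a concrete lemma: if $g$ satisfies $g_{n-m}=g_n+c(b_{n-1}-b_{n-\ell})$ for some constant $c$ and generic periodic $b$, then the solvability condition $\sum_n\big(b_ng_n-b_{n-\ell}g_{n-m+1}\big)=0$ for the next recursion holds \emph{only} when $\ell=m-1$. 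This forces $c^1=0$ at the third stage (so $g^2$ is constant, $g^3$ solves an equation of the same shape as $g^2$, etc.), and the induction collapses to the two constants $c^{m-1}$ and the one appearing at level $m-2$. The paper proves this lemma by rewriting everything in terms of the cyclic permutation matrix $E$, solving $(E^m-I)g=cE(I-E^{\ell-1})b$ explicitly, substituting back, and observing that the resulting operator identity $(I-E^{m-\ell-1})E(I+\cdots+E^{\ell-2})=0$ among circulant matrices holds only for $\ell=m-1$. Without this (or an equivalent) argument, your outline does not establish that no further kernel elements appear.
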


\begin{proof} First, we verify that the vector fields $X^1, X^2$ described in~\eqref{kernel2} are elements of the kernel of $\omega_2$. For general vector fields $X, Y\in \V_N$, $X_n = \rho_n \v_n, Y_n = \rho_n \w_n$, we compute (see, e.g., proof of Theorem~\ref{omega2br})
\[
\omega_2(X, Y) = \frac{1}{2}\sum_{n=1}^{N}\sum_{r=1}^{m-1}v_{n}^r\e_1^TK_{n}\dots K_{n+r-1}\w_{n+r} - w_n^r\e_1^TK_n\dots K_{n+r-1}\v_{n+r}.
\]
Since $\e_1^TK_n\dots K_{n+r-1} \e_1 = \e_1^TK_n \e_r= \e_1^T\e_{r+1} = 0$ for $r=1,\dots,m-1$,  the above is independent of the first component of both $\v_n$ and $\w_n$. Thus, we can assume $v_n^0 = w_n^0 = 0$ for now, with no loss of generality.

Setting $v_n^1 = 1$ and $v_n^r = 0$ for $r=2, \dots, m-1$, we get
\[
\begin{split}
\omega_2(X,Y) & =\sum_{n=1}^{N}\Big(\e_1^TK_n\w_{n+1}-\sum_{r=1}^{m-1}w_n^r\e_1^TK_n\dots K_{n+r-1} \e_2\Big) \\
& = \sum_{n=1}^{N}\Big(\e_1^TK_n\w_{n+1}-\sum_{r=1}^{m-1}w_n^r\e_1^TK_n\e_{r+1} \Big)= \sum_{n=1}^{N}(\e_1^TK_n \w_{n+1} - \e_1^TK_n \w_n). \\
&= (-1)^{m-1}\sum_{n=0}^{N-1}(w_{n+1}^{m-1} - w_n^{m-1}) = 0.
\end{split}
\]
This shows that $X=\gam_{n+1}$ belongs to the kernel of $\omega_2$, and so does its reparametrization $X_1=\mathcal{P}X$, since  $\omega_2(X,Y)$ is independent of $v^0_n$.

Assume now $v_n^1 = b_n, v_n^2 = 1$, and $v_n^r = 0$, for $r = 3, \dots, m-1$ . Then 
\[
\begin{split}
&\omega_2(X,Y)  = \sum_{n=1}^{N} \Big[b_n\e_1^TK_n \w_{n+1}+\e_1^TK_nK_{n+1}\w_{n+2}-\sum_{r=1}^{m-1}\left(w_n^r\e_1^TK_n\dots K_{n+r-1}(\e_3+b_{n+r}\e_2)\right)\Big] \\
& =\sum_{n=1}^{N}\Big(b_n\e_1^TK_n \w_{n+1}-\sum_{r=1}^{m-1}w_n^rb_{n+r}\e_1^TK_n \e_{r+1}+\e_1^TK_nK_{n+1}\w_{n+2}-\sum_{r=1}^{m-1}w_n^r\e_1^TK_nK_{n+1}\e_{r+1}\Big) \\
& =\sum_{n=1}^{N}\big(b_n\e_1^TK_n \w_{n+1}-b_{n+m-1}\e_1^TK_n \w_n+\e_1^TK_nK_{n+1}\w_{n+2}-\e_1^TK_nK_{n+1}\w_n\big) \\
& =\sum_{n=1}^{N}\left[(-1)^{m-1}(b_nw^{m-1}_{n+1}-b_{n+m-1}w_n^{m-1})+\e_1^TK_nK_{n+1}\big((w_{n+2}^{m-2}-w_n^{m-2})\e_{m-1}+(w_{n+2}^{m-1}-w_n^{m-1})\e_m\big)\right] \\
& = \sum_{n=1}^{N}\big((-1)^{m-1}(b_{n-1}-b_{n+m-1})w_n^{m-1}+(-1)^{m-1}(w_{n+2}^{m-2}- w_n^{m-2})+(-1)^{m-1}(w_{n+2}^{m-1}-w_n^{m-1})a_{n+1}^{m-1}\big) \\
& =\sum_{n=1}^{N}\big((-1)^{m-1}(b_{n-1}-b_{n+m-1}+a_{n-1}^{m-1}-a_{n+1}^{m-1})w_n^{m-1}\big), 
\end{split}
\]
which vanishes for all $\w_n$ if and only if $b_{n-1}=b_{n+m-1}-a_{n-1}^{m-1}+a_{n+1}^{m-1}$.

In order to prove that $X^1$ and $X^2$ are the generators of the kernel of $\omega_2$,  we have recourse to the algebraic relation obtained in Theorem~\ref{omega2br}
\[
\omega_2(X^g, X^h) = \{g, h\}_0,
\]
with $X^g, X^h$ the vector fields associated with the Hamiltonians $g, h$, and study the kernel of the Poisson bracket
\[
\{g, h\}_0=\sum_{n=1}^{N} {\bf h}_n^T \Pc_{2n} {\bf g}_n,
\]
where ${\bf g}_n,{\bf h}_n$ are invariant $(m-1)$-dimensional vectors and $\Pc_2$ is a skew-symmetric tensor.

An explicit expression for  $\Pc_2$ is given in~\eqref{P30} for dimension $m=3$, and can be found in reference~\cite{MW} for arbitrary $m$, written in terms of alternative, but equivalent coordinates (here denoted with $\k$):
\begin{equation}\tiny
\label{P2m}
\begin{pmatrix}\T k_{n}^2- k_n^2\T^{-1} & \dots & \T^{m-3} k_{n}^{m-2}-k_n^{m-2}\T^{-1} & \T^{m-2} k_{n}^{m-1} - k_n^{m-1}\T^{-1} & \T^{m-1} - \T^{-1}\\\\
\T k^3_{n}-k_n^3\T^{-2} & \dots& \T^{m-3} k_{n}^{m-1}-k_n^{m-1}\T^{-2}&\T^{m-2}-\T^{-2}&0\\
\vdots&\vdots&\vdots&\vdots&\vdots\\ \T k_{n}^{m-2}-k_n^{m-2}\T^{-(m-3)}& \T^2 k_{n}^{m-1}-k_n^{m-1}\T^{-(m-3)}&\T^3-\T^{-(m-3)}&\dots& 0\\ \\
\T k_{n}^{m-1}-k_n^{m-1}\T^{-(m-2)}& \T^2-\T^{-(m-2)}&0&\dots& 0\\\\ \T-\T^{-(m-1)}&0&0&\dots&0.
\end{pmatrix}
\end{equation}
For our purposes,  it suffices to show that the kernel of $\Pc_2$ is 2-dimensional.

Looking at the terms along the diagonal of~\eqref{P2m}, if ${\bf g}_n = (g^1_n, \dots g^{m-1}_n)^T$ is in the kernel of $\Pc_{2n}$, then 
\[
g^1_{n+1} = g^1_{n-m+1}, \quad \forall n=1, \ldots, N, 
\]
which can be solved uniquely as $g^1_n = c^1$, $c^1$ a constant, since $m$ and $N$ are assumed to be coprime. Going  back to $\Pc_{2n}$, we see that $g_n^2$ must satisfy the relation
\begin{equation}\label{g2}
g^2_{n-m} = g^2_n+c^1(k_{n-1}^{m-1}-k_{n-2}^{m-1}),
\end{equation}
which admits a unique solution if, and only if $\sum_{n=1}^{N} (k_{n-1}^{m-1}-k_{n-2}^{m-1}) = 0$. The telescopic sum is indeed zero, because the $\k_n$'s are periodic.
Also, the next entry $g_n^3$ must satisfy
\[
g_{n-m+3}^3 - g^3_{n+3} = k_{n+2}^{m-1}g_{n+2}^2 - k_n^{m-1}g_{n-m+3}^2+c^1(k_{n+1}^{m-2}-k_n^{m-2}).
\]
These equation admits a unique solution if and only if
\[
\sum_{n=1}^{N} \Big(k_{n+2}^{m-1}g_{n+2}^2 - k_n^{m-1}g_{n-m+3}^2+c^1(k_{n+1}^{m-2}-k_n^{m-2})\Big) =\sum_{n=1}^{N} \big( k_{n+2}^{m-1}g_{n+2}^2 - k_n^{m-1}g_{n-m+3}^2\big)= 0.
\]
We will show that this equation has no solution unless $c^1 = 0$, which implies that $g_n^2 = c^2$ and $g_n^3$ is the solution of an equation of the same form as~\eqref{g2} with $c^1$ replaced with $c^2$.

In fact, we will prove a stronger result: assume
\begin{equation}\label{equk}
g_{n-m} = g_n + c(b_{n-1}-b_{n-\ell}), \quad \ell=1, \ldots, m-1,
\end{equation}
where $c$ is a constant. Then, generically,
\begin{equation}\label{equk2}
\sum_{n=1}^{N} b_{n} g_{n} - b_{n-\ell} g_{n-m+1}=0
\end{equation}
if and only if, $\ell = m-1$. 

This result implies that the only elements of the kernel are of the form ${\bf g}_n =(0, \dots, 0, 0, c)$ and ${\bf g}_n =(0, \dots, 0, h_n^{m-2}, c)$, where $c$ is constant and $g_n^{m-2}$ satisfies
\[
g^{m-2}_{n-m} = g^{m-2}_n+c(k_{n+1}^{m-1}-k_{n+m-1}^{m-1}).
\]
It follows that the kernel of $\omega_2$ is $2$-dimensional.

Let us proceed to prove this result. Given $b=\{b_n\}_{n=1}^N$, we write $\{b_{n-1}\}=Eb$ in terms of the $N$-dimensional permutation matrix
\[
\tiny
E = \begin{pmatrix}0&0& \dots&0&1\\ 1&0& \dots&0&0\\ 0& \ddots & \ddots &\vdots& \vdots\\ 
\vdots& \ddots &\ddots& 0&0\\0&\dots& 0 &1&0\end{pmatrix}.
\]
Then, formula~\eqref{equk} can be written as 
\[
(E^m-I) g = c(E-E^\ell) b.
\]
If $g_n$ are not all equal to the same constant, rewriting the above as
\[
 (E-I)(E^{m-1}+E^{m-2}+\dots +I) g = cE(I-E)(I+E+\dots+E^{\ell-2}) b,
 \]
 we can solve for $g$, since $E^{m-1}+E^{m-2}+\dots + I$ is invertible when $m$ and $N$ are coprime:
 \[
 g = -c(E^{m-1}+E^{m-2}+\dots +I)^{-1}E(I+E+\dots+E^{\ell-2}) b.
 \]
 Equation \eqref{equk2} becomes
\[
\begin{split}
0 & =\big((I-E^{\ell-m+1})b\big)^T g = b^T(I-E^{m-\ell-1})g\\
& =(-c) b^T(I-E^{m-\ell-1})(E^{m-1}+E^{m-2}+\dots +I)^{-1}E(I+E+\dots+E^{\ell-2}) b.
\end{split}
\]
 Since $b$ is arbitrary, we must have
\begin{equation}\label{equk3}
(I-E^{m-\ell-1})(E^{m-1}+E^{m-2}+\dots +I)^{-1}E(I+E+\dots+E^{\ell-2})=0.
\end{equation}
Permuting the first two terms (the $E^j$'s are circulant matrices and thus commute with one another) and multiplying by $E^{m-1}+E^{m-2}+\dots +I$ on the left, equation 
\eqref{equk3} reduces to 
\[
(I-E^{m-\ell-1})E(I+E+\dots+E^{\ell-2})=(E+E^2+\dots+E^{\ell-1})-(E^{m-\ell}+E^{m-\ell+1}+\dots+E^{m-2})=0.
\]
Since the $E^j$'s, $j=1,\dots,N$ are the circulant matrices generators, the equation above holds if and only if $\ell = m-1$.
\end{proof}

 As in the case of the kernel of $\omega_1$, the generators $X^1$ and $X^2$ of the kernel of $\omega_2$ have rather simple Hamiltonians with respect to $\omega_1$.

\begin{proposition}\label{HamforKern2VF}
The Hamiltonian functions associated to the vector fields $X^1$ and $X^2$ in (\ref{kernel2}) by means of $\omega_1$ are given by
\[
f_1(\gamma) = -\sum_{n=1}^N a_n^{m-1} =m \sum_{n=1}^N \alpha_n, \quad\quad f_2(\gamma) = \frac m2\sum_{n=1}^N \beta_n.
\]
\end{proposition}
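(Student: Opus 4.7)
The plan is to show, for $j=1,2$, that $X^j - X^{f_j} \in \ker\omega_1$, where $X^{f_j}$ is the Hamiltonian vector field associated with $f_j$ by Theorem~\ref{Xf}. Once this is established, equation~\eqref{diffsymp} gives
\[
\omega_1(Y, X^j) = \omega_1(Y, X^{f_j}) = Y(f_j\circ\pi), \qquad \forall\, Y \in \V^1_N,
\]
which is exactly the claim that $f_j$ is the $\omega_1$-Hamiltonian of $X^j$. By Corollary~\ref{a-p}, the invariants $\{\a_n\}$ evolve identically under $X^j$ and $X^{f_j}$ precisely when the $\g_1$-components $(Q^{X^j}_n)_1$ and $(Q^{X^{f_j}}_n)_1$ coincide; in that case $X^j - X^{f_j}$ induces $\rd K_n/\rd t = 0$, and Theorem~\ref{kernel} places it in $\ker\omega_1$. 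The whole proof therefore reduces to computing $(Q^{X^j}_n)_1$ directly and matching it, via the formula \eqref{deltaform} of Proposition~\ref{Qtovariations}, to $\delta_n f_j$.

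For $X^1_n = \alpha_n\gam_n + \gam_{n+1}$ I would set $\v^1_n = \alpha_n\e_1 + \e_2$ and use the identities $K_n\e_i = \e_{i+1}$ for $i < m$ and $K_n\e_m = (-1)^{m-1}\e_1 + \sum_{j=1}^{m-1} a_n^j\e_{j+1}$ to compute $K_nK_{n+1}\cdots K_{n+\ell-1}\v^1_{n+\ell} = \alpha_{n+\ell}\e_{\ell+1} + \e_{\ell+2}$ for $0\le\ell\le m-2$, while the rightmost column contributes a $(-1)^{m-1}$ in its top entry via a single application of $K_n\e_m$. The top row of $Q^{X^1}_n$ is then $(\alpha_n, 0, \dots, 0, (-1)^{m-1})$, so \eqref{deltaform} forces every $\delta f_1/\delta a_n^j$ to vanish except $\delta f_1/\delta a_n^{m-1} = -1$, in agreement with $f_1 = -\sum_n a_n^{m-1}$. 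The trace condition $\mathrm{tr}(Q^{X^1}_n) = 0$ gives $(1+\T+\cdots+\T^{m-1})\alpha_n = -a_n^{m-1}$, whose sum over $n$ yields $m\sum_n\alpha_n = -\sum_n a_n^{m-1}$.

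For $X^2_n = \beta_n\gam_n + b_n\gam_{n+1} + \gam_{n+2}$, the analogous but longer computation of the last two columns of $Q^{X^2}_n$ requires evaluating $K_nK_{n+1}\cdots K_{n+m-2}\e_3$, where two successive invocations of the expansion $K_j\e_m = (-1)^{m-1}\e_1 + \sum_i a_j^i\e_{i+1}$ (once at $K_{n+1}\e_m$ and once at $K_n\e_m$) contribute $(-1)^{m-1}a_{n+1}^{m-1}$ to the top entry; combining with the $b_{n+m-1}(-1)^{m-1}$ coming from the action on $\e_2$, the top row of $Q^{X^2}_n$ becomes $(\beta_n, 0, \dots, 0, (-1)^{m-1}, (-1)^{m-1}(b_{n+m-1}+a_{n+1}^{m-1}))$. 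By \eqref{deltaform}, the target variational derivatives are $\delta f_2/\delta a_n^{m-2} = -1$ and $\delta f_2/\delta a_n^{m-1} = -(b_{n+m-1}+a_{n+1}^{m-1})$. The arc-length condition $\mathrm{tr}(Q^{X^2}_n) = 0$ then yields $m\sum_n\beta_n = -2\sum_n a_n^{m-2} - \sum_n(b_{n+m-1}+a_{n+1}^{m-1})a_n^{m-1}$, so that $f_2 = \tfrac{m}{2}\sum_n\beta_n = -\sum_n a_n^{m-2} - \tfrac12\sum_n(b_{n+m-1}+a_{n+1}^{m-1})a_n^{m-1}$.

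The main obstacle is the second variational derivative, which is delicate because $b$ is coupled to $a^{m-1}$ via the nonlocal relation $(1-\T^m)b_n = (\T^2-1)a_n^{m-1}$. My plan is to compute $\delta/\delta a_n^{m-1}$ of the quadratic form $\sum_k b_{k+m-1}a_k^{m-1}$ by summation by parts: setting $h_n := (1-\T^m)^{-1}a_n^{m-1}$ so that $b_n = h_{n+2}-h_n$, and reading variations of $b$ through the adjoint of $(1-\T^m)^{-1}$, one obtains $\delta/\delta a_n^{m-1}\sum_k b_{k+m-1}a_k^{m-1} = b_{n+m-1} + b_{n-1}$. Adding the contribution $a_{n+1}^{m-1}+a_{n-1}^{m-1}$ from the purely quadratic term and applying the defining relation at position $n-1$, namely $b_{n-1} + a_{n-1}^{m-1} = b_{n+m-1} + a_{n+1}^{m-1}$, the sum collapses to $2(b_{n+m-1}+a_{n+1}^{m-1})$ and produces the required $\delta f_2/\delta a_n^{m-1} = -(b_{n+m-1}+a_{n+1}^{m-1})$, closing the verification.
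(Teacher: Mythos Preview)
Your approach is essentially the same as the paper's: compute $Q^{X^j}_n$, extract its $\g_1$-component, and match it via \eqref{deltaform} to the variational derivatives of a candidate $f_j$. Your explicit framing through $X^j - X^{f_j}\in\ker\omega_1$ makes precise what the paper leaves to the Remark following \eqref{diffsymp}, but the logic is identical, and your computations of the top rows of $Q^{X^1}_n$ and $Q^{X^2}_n$ reproduce exactly the paper's formulas \eqref{QX1}--\eqref{QX2}.

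The one genuine difference is in verifying $\delta f_2/\delta a_n^{m-1} = -(b_{n+m-1}+a_{n+1}^{m-1})$. The paper passes to the auxiliary coordinates $c_n = \Rc^{-1}a_n^{m-1}$, expands $g(\a)=\tfrac12\sum_n a_n^{m-1}(b_{n+m-1}+a_{n+1}^{m-1})$ as an explicit quadratic polynomial in the $c_n$'s, and reads off the gradient. You instead stay in the $a_n^{m-1}$ variables and use an adjoint/summation-by-parts argument together with the defining relation \eqref{b} at index $n-1$ to collapse $b_{n+m-1}+b_{n-1}+a_{n+1}^{m-1}+a_{n-1}^{m-1}$ to $2(b_{n+m-1}+a_{n+1}^{m-1})$. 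This is correct and arguably cleaner, since it avoids the change of variables and the somewhat lengthy bookkeeping in the $c$-coordinates. One caveat: the operator $1-\T^m$ has the constants in its kernel, so your $h_n=(1-\T^m)^{-1}a_n^{m-1}$ is only defined up to an additive constant; this is harmless because $b_n=h_{n+2}-h_n$ is insensitive to it, but it would be worth noting. Alternatively one can bypass $h_n$ entirely by using the explicit formula $b_n=-\Rc^{-1}(1+\T)a_n^{m-1}$ from \eqref{bexplicit} and checking that the adjoint of $\T^{m-1}(-\Rc^{-1})(1+\T)$ equals $\T^{-1}(-\Rc^{-1})(1+\T)$, which gives the $b_{n-1}$ term directly.
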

\begin{proof}
We first find the associated matrices $Q^{X^1}$ and $Q^{X^2}$,  and use \eqref{deltaform} to find the $\omega_1$-Hamiltonians. (See also the Remark following~\eqref{diffsymp}.)

Since $X^1_n = \rho_n \r_n$, with $\r_n = (\alpha_n, 1,0, \dots, 0)^T$, we compute
\[
\begin{split}
K_n \r_{n+1} & = \begin{pmatrix}0 & \alpha_{n+1} & 1 & 0 & \dots & 0\end{pmatrix}^T, \\
& \vdots  \\
 K_nK_{n+1}\dots K_{n+m-3}\r_{n+m-2} & = \begin{pmatrix}0 & \dots & 0 &\alpha_{n+m-2} &1\end{pmatrix}^T, \quad \text{and} \\
 K_nK_{n+1}\dots K_{n+m-2}\r_{n+m-1} & = K_ne_m+\begin{pmatrix}0 & \dots & 0 &\alpha_{n+m-2}\end{pmatrix}^T.
\end{split}
\]
Then, from~\eqref{Q}, we get
\begin{equation}\label{QX1}
Q_n^{X^1} = K_n + \begin{pmatrix} \alpha_n&0&\dots&0\\0&\alpha_{n+1}&\dots&0\\\vdots&\ddots&\ddots&\vdots\\ 0&\dots&0&\alpha_{n+m-1}\end{pmatrix}.
\end{equation}
A longer, but similar calculation gives
\begin{equation}\label{QX2}
Q_n^{X^2} = \begin{pmatrix}\beta_n&0&\dots&0\\0&\beta_{n+1}&\dots&0\\\vdots&\ddots&\ddots&\vdots\\ 0&\dots&0&\beta_{n+m-1}\end{pmatrix}+ K_n \begin{pmatrix}b_n&0&\dots&0\\0&b_{n+1}&\dots&0\\\vdots&\ddots&\ddots&\vdots\\ 0&\dots&0&b_{n+m-1}\end{pmatrix}+ K_n K_{n+1}.
\end{equation}
Requiring $\text{tr}(Q_n^{X^j})=0, j=1,2$, determines $\alpha_n$ and $\beta_n$: 
\begin{equation}\label{abexplicit}
\alpha_n = -\Rc^{-1}a_n^{m-1}, \quad\quad \beta_n = -\Rc^{-1}\big(a_n^{m-2}+a_{n+1}^{m-2}+a_{n}^{m-1}(b_{n+m-1}+a_{n+1}^{m-1})\big),
\end{equation}
where  $\Rc=1+\T+ \ldots +\T^{m-1}$.
Matching  $\g_1$-components of~\eqref{QX1} and ~\eqref{QX2} with the right-hand side of~\eqref{deltaform}, gives the gradients of $f_1$ and $f_2$ (assuming such Hamiltonians exist):
\[
\nabla_n f_1 = (0,0,\dots,0,-1)^T, \quad\quad \nabla_n f_2 = (0,0,\dots,0,-1,-b_{n+m-1}-a_{n+1}^{m-1})^T.
\]
The $f_1$ chosen above has the correct gradient, since
\[
f_1(\gamma) = -\sum_{n=1}^N a_n^{m-1} = \sum_{n=1}^N \Rc \alpha_n = \sum_{n=1}^N (\alpha_n+\alpha_{n+1}+\dots+\alpha_{n+m-1}) = m\sum_{n=1}^N \alpha_n.
\]
To justify the choice for $f_2$, we need more work. First,~\eqref{QX2} suggests an $f_2$ of the form
\[
f_2 = -\sum_{n=1}^N a_n^{m-2} - g(\a), \quad\text{with}~~ \frac{\partial g}{\partial a_n^{m-1}} = b_{n+m-1}+a_{n+1}^{m-1}.
\]
Equation \eqref{b}, determing $b_n$ up to a constant, can be rewritten as (shifting indices)
 \[
 \Rc (b_n- b_{n-1}) = a_{n-1}^{m-1}- a_{n+1}^{m-1} = (1-\T)(1+\T) a_{n-1}^{m-1}.
 \]
Since neither $b_n$ nor $a_n^{m-1}$ are in the kernel of $1-\T$, substituting $\Rc (b_n- b_{n-1}) =-(1-\T)b_{n-1}$, we arrive at
 \begin{equation}\label{bexplicit}
 b_{n} = -\Rc^{-1}(a_{n}^{m-1} + a_{n+1}^{m-1}) = \alpha_n+\alpha_{n+1},
 \end{equation}
which gives
\[
 b_{n+m-1}+a_{n+1}^{m-1} = -\Rc^{-1}(a_{n+m-1}^{m-1}+a_{n+m}^{m-1}) + a_{n+1}^{m-1} = -\Rc^{-1}(a_{n+1}^{m-1}+a_{n+2}^{m-1}+\dots+a_{n+m-2}^{m-1}).
 \]
Let $c_n = \Rc^{-1} a_n^{m-1}$ and compute
\[
\frac{\partial g}{\partial c_n}   = \sum_{r=1}^{N} \frac{\partial g}{\partial a_r^{m-1}}\frac{\partial a_r^{m-1}}{\partial c_n}=\sum_{r=n-(m-1)}^{n} \frac{\partial g}{\partial a_r^{m-1}},
\]
where we used $a_n^{m-1} = \Rc c_n = c_n+c_{n+1}+\dots+c_{n+m-1}$. From ${\partial g}/{\partial a_n^{m-1}} = b_{n+m-1}+a_{n+1}^{m-1}$, we then obtain
\[
\begin{split}
\frac{\partial g}{\partial c_n} & =  \sum_{r=n-(m-1)}^{n}\Rc^{-1}\big(a_{r+1}^{m-1}+a_{r+2}^{m-1}+\dots+a_{r+m-2}^{m-1}\big)=  \sum_{r=n-(m-1)}^{n}\big(c_{n+1}^{m-1}+c_{n+2}^{m-1}+\dots+c_{n+m-2}^{m-1}\big)\\
&=(m-2)\big(c_{n+1}+c_n+c_{n-1}\big)+(m-3)\big(c_{n+2}+c_{n-2}\big)+ \dots\\
& \dots\, +2\big(c_{n+m-3}+c_{n-(m-3)}\big)+c_{n+m-2}+c_{n-(m-2)},
\end{split}
\]
which is the gradient of the function
\[
g(\a) = \sum_{n=1}^N \frac {m-2}2 c_n^2 + c_n \left((m-2)c_{n+1}+(m-3) c_{n+2}+\dots+2c_{n+m-3}+c_{n+m-2}\right).
\]
However, $g(\a)=\tfrac{1}{2} \sum_{n=1}^N a_n^{m-1}(b_{n+m-1}+a_{n+1}^{m-1})$, in fact
\[
\begin{split}
&\sum_{n=1}^N a_n^{m-1}\big(b_{n+m-1}+a_{n+1}^{m-1}\big)= \sum_{n=1}^N a_n^{m-1}\Rc^{-1}\big(a_{n+1}^{m-1}+a_{n+2}^{m-1}+\dots+a_{n+m-2}^{m-1}\big)\\ &= \sum_{n=1}^N\big(c_n+c_{n+1}+\dots+c_{n+m-1}\big)\big(c_{n+1}+c_{n+2}+\dots+c_{n+m-2}\big)\\ 
&=\sum_{n=1}^N (m-2)c_n^2 + 2c_n\big((m-2)c_{n+1}+(m-3)c_{n+2}+\dots+2c_{n+m-3}+c_{n+m-2}\big).
\end{split}
\]
Combining this with
\[
\frac{m}{2}\sum_{n=1}^N \beta_n=\frac{1}{2}\sum_{n=1}^N \Rc \beta_n =  -\sum_{n=1}^N a_n^{m-2} - \frac{1}{2}\sum_{n=1}^N a_n^{m-1}(b_{n+m-1}+a_{n+1}^{m-1}),
\]
completes the verification for $f_2$.
\end{proof}

The generators of the kernel of $\omega_2$ are commuting vector fields with respect to the symplectic structure defined on the moduli space of $\M^1_N$ by $\omega_1$.  
\begin{theorem}
\label{X12commute}
Let $X^1$ and $X^2$ be as in (\ref{kernel2}). Then
\[
\omega_1(X^1,X^2) = 0.
\]
\end{theorem}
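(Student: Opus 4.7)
The plan is to reduce the statement to a conservation law. By Theorem~\ref{redth} together with the selection identity~\eqref{diffsymp} (which extends from dimension~$3$ to arbitrary $m$ via $\omega_1(X,X^h) = d(h\circ\pi)(X)$), we have $\omega_1(X^1,X^2) = -\omega_1(X^2,X^1) = -X^2(f_1\circ\pi)$, where $f_1 = -\sum_n a_n^{m-1}$ is the $\omega_1$-Hamiltonian of $X^1$ computed in the previous proposition. Hence it suffices to show that the functional $\sum_n a_n^{m-1}$ is preserved along the flow of $X^2$.

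I would use the compatibility condition $\dot K_n = K_n Q_{n+1}^{X^2} - Q_n^{X^2}K_n$ together with the block form $Q_n^{X^2} = D_n + K_n B_n + K_n K_{n+1}$ from~\eqref{QX2} (where $D_n=\mathrm{diag}(\beta_n,\ldots,\beta_{n+m-1})$ and $B_n=\mathrm{diag}(b_n,\ldots,b_{n+m-1})$) to read off $\dot a_n^{m-1} = (\dot K_n)_{m,m}$ as an explicit polynomial in the invariants and their shifts. A more structural alternative is to substitute $Q^{X^1} = K+A$ and $Q^{X^2}=D+KB+KK_+$ directly into the commutator formula of Proposition~\ref{omega1alt}: since $A$, $B$, $D$ are diagonal, the Leibniz rule immediately yields $[D,K]_{m,m} = [KB,A]_{m,m} = [KK_+,A]_{m,m} = 0$, leaving only $K[B,K]$ and $K[K_+,K]$ as survivors in $[Q^{X^2},Q^{X^1}]_{m,m}$. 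The companion contribution $(K^{-1}[Q^{X^2},Q^{X^1}]K)_{m,m}$ reduces to first-row entries of these same commutators, which are likewise short because the first row of each of $Q^{X^1}$ and $Q^{X^2}$ is supported on at most three columns.

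Summing over $n$, all differences of the form $u_{n+k}-u_n$ telescope away. The remaining cross-terms would then be collapsed using three tools: (i)~the defining recursion $b_{n+m}-b_n = a_n^{m-1}-a_{n+2}^{m-1}$ for $b_n$ (equivalently $b_n=\alpha_n+\alpha_{n+1}$ from~\eqref{bexplicit}); (ii)~the arc-length preservation equations $\Rc\alpha_n = -a_n^{m-1}$ and the analogous formula~\eqref{abexplicit} determining $\beta_n$; and (iii)~summation-by-parts under periodicity, in particular $\sum_n \Rc^{-1}(u_n)\,v_n = \sum_n u_n\,\Rc^{-1}\T^{m-1}(v_n)$ together with $\sum_n \Rc^{-1}u_n = \tfrac{1}{m}\sum_n u_n$. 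After reindexing, the surviving terms should pair into sums of the form $\sum_n f_n(g_{n+k}-g_{n-k})$, which vanish by the reversal symmetry $\sum_n f_n g_{n+k} = \sum_n f_{n-k}g_n$ of the periodic pairing.

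The main obstacle is the bookkeeping. The contribution $K[K_+,K]$ produces triple products $K_n K_{n+1}K_{n+2}$ whose $(m,m)$- and first-row entries involve quadratic expressions in $a_{n+k}^{m-1}$, $a_{n+k}^{m-2}$, $a_{n+k}^{m-3}$ across several shift classes, and without careful tracking by shift-class the cancellations are not manifest. I would first verify the scheme in the $m=3$ case via the explicit operator $\Pc_1$ from~\eqref{P3}, where $\{f_1,f_2\}_R=0$ can be directly confirmed by computing $(\delta f_1)^T\Pc_1(\delta f_2)$ from the variational derivatives determined in the preceding proposition; the same cancellation pattern then generalizes.
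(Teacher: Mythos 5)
Your reduction is sound and is essentially the mirror image of the paper's argument: the paper proves $\omega_1(X^1,X^2)=0$ by showing that $X^1$ annihilates the $\omega_1$-Hamiltonian $f_2=\tfrac m2\sum_n\beta_n$ of $X^2$, whereas you propose to show that $X^2$ annihilates the $\omega_1$-Hamiltonian $f_1=-\sum_n a_n^{m-1}$ of $X^1$. By the antisymmetry of $\omega_1$ and the identity $\omega_1(X,X^h)=\rd(h\circ\pi)(X)$ (which the paper itself uses in exactly this generality when it opens its proof with ``we only need to show that $X^1$ preserves the $\omega_1$-Hamiltonian associated with $X^2$''), the two reductions are equivalent, so your first step is legitimate. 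Note, however, that your choice does not obviously buy you anything: you trade the paper's complicated Hamiltonian acted on by a simple flow for a simple Hamiltonian acted on by a complicated flow. Computing $X^2(a_n^{m-1})$ means varying the determinant $|\gam_n,\dots,\gam_{n+m-2},\gam_{n+m}|$ along $\dot\gam_j=\gam_{j+2}+b_j\gam_{j+1}+\beta_j\gam_j$, which produces determinants with two displaced columns that must be re-expressed through the $a_n^r$'s; this is comparable in length to the paper's computation of $X^1(\beta_n)$. Your alternative entry point via Proposition~\ref{omega1alt} --- substituting $Q_n^{X^1}=K_n+A_n$ and $Q_n^{X^2}=D_n+K_nB_n+K_nK_{n+1}$ into $[Q_n^{X^2},Q_n^{X^1}]_{m,m}+\left(K_n^{-1}[Q_n^{X^2},Q_n^{X^1}]K_n\right)_{m,m}$ and discarding all commutators whose $(m,m)$-entry vanishes because one factor is diagonal --- is a genuinely cleaner starting point than either determinant expansion, and your identification of the survivors $K_n[B_n,K_n]$ and $K_n[K_{n+1},K_n]$ is correct.

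That said, what you have written is a plan, not a proof. The decisive step --- that after telescoping, summation by parts against $\Rc^{-1}$, and the relations $b_n=\alpha_n+\alpha_{n+1}$, $\Rc\alpha_n=-a_n^{m-1}$, and \eqref{abexplicit}, the surviving quadratic terms in the shifted invariants actually cancel --- is asserted rather than carried out, and this is precisely where all the content of the theorem lives: the paper devotes the bulk of its proof to this accounting and even then only outlines the final cancellations. No step of your outline is wrong, and the identities you invoke are all available in the paper, but until the cancellation is executed for general $m$ (your proposed $m=3$ check against $\Pc_1$ from~\eqref{P3} is a sanity test, not a proof of the general pattern), the argument is incomplete.
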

\begin{proof}
We only need to show that $X^1$ preserves the $\omega_1$-Hamiltonian associated with $X^2$, i.e.
\[
\frac m2\sum_{n=1}^N X^1(\beta_n) = 0.
\]
Using \eqref{abexplicit}) and \eqref{bexplicit}, the above is equivalent to verifying that 
\[
\sum_{n=1}^N X^1\big(a_n^{m-2}+a_{n+1}^{m-2}+a_{n}^{m-1}(b_{n+m-1}+a_{n+1}^{m-1}) \big)
= \sum_{n=1}^N X^1\big(2a_n^{m-2}-a_{n}^{m-1}(\alpha_{n+1}+\dots+\alpha_{n+m-2})\big)=0.
\]
Substituting $X^1(\gamma_{n+r})=\gamma_{n+r+1}+\alpha_{n+r}\gamma_{n+r}$ from equation~\eqref{kernel2}, 

$a^r_n=|\gam_n,\dots,\gamma_{n+r-1}, \gamma_{n+m}, \gamma_{n+r+1},\dots,\gamma_{n+m-1}|$ from equation~\eqref{invariantsa}, and  $\gamma_{n+m+1} = \sum_{s=1}^{m-1} a_{n+1}^{s}\gamma_{n+s+1}+(-1)^{m-1}\gamma_{n+1}$, we compute
\[
\begin{split}
X^1(a_n^{m-1}) & = \sum_{r=0}^{m-2}\big( |\gamma_{n}, \dots,\gamma_{n+r-1}, \gamma_{n+r+1}+\alpha_{n+r}\gamma_{n+r},\gamma_{n+r+1},\dots,\gamma_{n+m-2},\gamma_{n+m}| \\
& +|\gamma_n,\dots,\gamma_{n+m-2},\gamma_{n+m+1}+\alpha_m\gamma_{n+m}|\big)\\
& =a_n^{m-1}\sum_{r=0}^{m-2}\big(\alpha_{n+r}+a_n^{m-1}\alpha_{n+m} -a_{n}^{m-2}+a_{n+1}^{m-2}+a_{n+1}^{m-1}a_n^{m-1}\big).
\end{split}
\]
Since $\sum_{r=1}^m\alpha_{n+r} = -a_{n+1}^{m-1}$, equivalently $\alpha_n=-\Rc^{-1}a_n^{m-1}$, the expression can be further simplified to
\[
X^1(a_n^{m-1})= a_n^{m-1}(\alpha_n-\alpha_{n+m-1}) + a_{n+1}^{m-2}-a_n^{m-2},
\]
which also gives $X^1(\alpha_n)=-\Rc^{-1}a_n^{m-1}$.
A similar calculation leads to
\[
X^1(a_n^{m-2}) = a_n^{m-2}(\alpha_n-\alpha_{n+m-2}) +a_{n+1}^{m-3}-a_n^{m-3}.
\]
Combining the various identities, we get
\[
\begin{split}
&\sum_{n=1}^N X^1\big(2a_n^{m-2}-a_{n}^{m-1}(\alpha_{n+1}+\dots+\alpha_{n+m-2})\big)  = \sum_{n=1}^N\Big[ 2a_n^{m-2}(\alpha_n-\alpha_{n+m-2})\\
&-a_n^{m-1}(\alpha_n-\alpha_{n+m-1})(\alpha_{n+1}+\dots+\alpha_{n+m-2}) -( a_{n+1}^{m-2}-a_n^{m-2})(\alpha_{n+1}+\dots+\alpha_{n+m-2})
\\
&+a_{n}^{m-1}\Rc^{-1}\Big((\alpha_{n+1}-\alpha_{n+m})a_{n+1}^{m-1}+ \dots +(\alpha_{n+m-2}-\alpha_{n+2m-3})a_{n+m-2}^{m-1} 
\\
&+ a_{n+1}^{m-2}-a_n^{m-2}+ a_{n+2}^{m-2}-a_{n+1}^{m-2}+\dots+a_{n+m-2}^{m-2}-a_{n+m-3}^{m-1}\Big)\Big],
\end{split}
\]
where the last telescopic sum simplifies to
\[
a_{n+1}^{m-2}-a_n^{m-2}+ a_{n+2}^{m-2}-a_{n+1}^{m-2}+\dots+a_{n+m-2}^{m-2}-a_{n+m-3}^{m-1} = a_{n+m-2}^{m-2}-a_n^{m-2},
\]
 and 
\[
\sum_{n=1}^N ( a_{n+1}^{m-2}-a_n^{m-2})(\alpha_{n+1}+\dots+\alpha_{n+m-2})) = \sum_{n=1}^Na_n^{m-2}(\alpha_n-\alpha_{n+m-2}).
\]
We briefly outline the rest of the proof, omitting the details for sake of brevity. One need to show that: 1. the terms involving $a_n^{m-2}$ vanish; this requires careful accounting, aided by introducing 
$\Rc^\ast = \T^{-m+1}\Rc$ such that $(\Rc^{-1})^\ast = \Rc^{-1}\T^{m-1}$.  2. the 
terms involving $a_n^{m-1}$ also vanish; the procedure is lengthier (but still fairly straightforward) and makes use of $a_n^{m-1}=-\sum_{r=1}^m\alpha_{n+r} =$ to  show that
\[
\sum_n a_n^{m-1}(\alpha_n - \alpha_{n+m-1})(\alpha_{n+1}+\alpha_{n+2}+\dots+\alpha_{n+m-2}) = 0.
\]
 \end{proof}
 
\section*{Discussion and Open Questions}
By lifting the Hamiltonian structures from the space of invariant curvatures to the space of arc length parametrized polygons, not only we prove that the Poisson brackets derived in~\cite{MW} form a bi-hamiltonian pair on the space of geometric invariants, but we also provide a framework for studying the integrability of the geometric flows that reveals several interesting features.

The 2-form $\omega_1$ is symplectic on the moduli space of $\M^1_N$, since its kernel is generated by the infinitesimal rigid motions, which preserve the monodromy class. Thus $\omega_1$ provides, through relation~\eqref{diffsymp}, a natural correspondence between Hamiltonians and Hamiltonian vector fields. On the other hand, the 2-form $\omega_2$ has a non-trivial kernel, so it remains an open question whether a hierarchy of commuting vector fields can be constructed by means of a Lenard-Magri type of scheme. 

In two dimensions, the 2-form $\omega_2$ is a straightforward discretization of Pinkall's symplectic form on the space of star-shaped planar curves~\cite{P95}. For $m=3$, $\omega_2$ can also be regarded as a straightforward generalization of  the 2-form introduced in~\cite{CIM} for curves in centro-affine $\mathbb{R}^3$. Its expression is very simple in arbitrary dimension (see also Proposition~\ref{omega2alt}) and, for $m\ge 3$, its kernel  is generated by two interesting vector fields: $X^1$, a natural discretization of the translation flow (i.e.~the discrete counterpart of $\gamma_x$ in the continuous case), and $X^2$, which could be interpreted as a discretization of the second derivative. Both $X^1$ and $X^2$ have been reparametrized to make them arc length-preserving. Moreover, we find that there are no negative flows generated by the infinitesimal symmetries (Corollary~\ref{nointegrable}), and that $X^1$ and $X^2$ commute with respect to the $\omega_1$-Hamiltonian structure. This parallels the case of the integrable curve evolutions in centro-affine $\mathbb{R}^3$ discussed in ~\cite{CIM}, where geometric realizations of the Boussinesq hierarchy are constructed as a double hierarchy of commuting vector fields whose headers are the elements of the kernel of the pre-symplectic form. The fact that, also in the discrete case and in arbitrary dimension (not just for $m=3$), the integrable flows are generated by a pair of vector fields, is intriguing and suggests that $\omega_2$ and its continuous counterparts should be further investigated. 

We conclude with some remarks suggesting a possible way of constructing  integrable hierarchies, currently under investigation.

Define $\Vh_N^1$ to be the $\omega_1$-complement of the kernel of $\omega_2$, that is
\[
\Vh_N^1 = \left\{X \in \V_N^1 \ | \  \omega_1(X, X^1) = \omega_1(X, X^2) = 0\right\}.
\]
Theorem~\ref{X12commute} implies that $X^1, X^2 \in \Vh_N^1$.  Also, from a standard linear algebra argument, we have
\begin{lem}
\label{startrec}
If $X \in \Vh_N^1$, then there exists $Y\in \V_N^1$ such that 
\[
\omega_1(X, Z) = \omega_2(Y, Z), \qquad \forall Z\in \V_N^1.
\]
\end{lem}
Since  $\omega_1$ is symplectic when restricted to the leaves of the moduli space of $\M^1_N$, this suggests the following conjecture.
\begin{conj}
\label{hier} Let $X_1, X_2 \in \Vh_N^1$ and define $X_{k+2}\in \V_N^1$ as 
\[
\omega_1(X_k, Z) = \omega_2(X_{k+2}, Z).
\]
Then $X_k \in \Vh_N^1$, for all $k=1,2,3,\dots$. 
\end{conj}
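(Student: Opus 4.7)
The plan is to interpret the recursion as a Lenard-Magri chain on the moduli space $G^N/H^N$ and leverage the compatibility of $\{\,,\,\}_R$ and $\{\,,\,\}_0$ established in Theorem~\ref{compatibility}. First I would restrict attention to a submanifold of $\M_N^1$ of fixed monodromy conjugacy class, on which $\omega_1$ is symplectic by Theorem~\ref{kernel} and Corollary~\ref{rigidmotion}. On such a leaf, any $X_k \in \Vh_N^1$ (tangent to the leaf) is the $\omega_1$-Hamiltonian vector field of a uniquely determined function $f_k$ on the moduli space, modulo Casimirs of $\{\,,\,\}_R$. Using Theorems~\ref{omega2br} and~\ref{redth} to identify the pre-symplectic pairings with Poisson brackets of Hamiltonians, the recursion $\omega_2(X_{k+2}, Z) = \omega_1(X_k, Z)$ becomes the standard Lenard-Magri relation
\[
\{f_{k+2}, g\}_0 \;=\; \{f_k, g\}_R \qquad \text{for every smooth } g.
\]

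The next key input is that the $\omega_1$-Hamiltonians $f_1, f_2$ of the kernel generators $X^1, X^2$ of $\omega_2$ (computed explicitly in Section~\ref{kernelsection}) are Casimirs of $\{\,,\,\}_0$. Indeed, since $\omega_2(X^j, Y) = 0$ for every $Y \in \V_N^1$, Theorem~\ref{omega2br} gives $\{f_j, g\}_0 = 0$ for all $g$. Consequently, the condition $X_k \in \Vh_N^1$ translates precisely to $\{f_k, f_j\}_R = 0$ for $j=1,2$, i.e.\ $f_k$ Poisson-commutes under the reduced bracket with the Casimirs of the companion bracket.

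The inductive step then runs as follows. Assuming $X_k \in \Vh_N^1$, i.e.\ $\{f_k, f_j\}_R = 0$, the $\{\,,\,\}_R$-Hamiltonian flow of $f_k$ preserves the Casimirs $f_j$ of $\{\,,\,\}_0$, so the vector field $\{f_k, \cdot\}_R$ is also $\{\,,\,\}_0$-Hamiltonian; this yields $f_{k+2}$, and hence $X_{k+2}$ via Theorem~\ref{Xf}. To verify $X_{k+2} \in \Vh_N^1$, equivalently $\{f_{k+2}, f_j\}_R = 0$, I would apply the Lenard-Magri shift (a direct consequence of compatibility), together with antisymmetry, to express
\[
\{f_{k+2}, f_j\}_R \;=\; \{f_{k+4}, f_j\}_0 \;=\; 0,
\]
the last equality using that $f_j$ is a $\{\,,\,\}_0$-Casimir. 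Iterating closes the induction along the entire chain.

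The main obstacle is the circularity latent in this argument: invoking the shift to $f_{k+4}$ presupposes the existence of the next element of the chain, which is itself what we wish to guarantee. The cleanest way around this, and the step I expect to be delicate, is to realize $\{f_k\}$ as the Taylor coefficients in $\lambda$ of a Casimir of the Poisson pencil $\{\,,\,\}_R + \lambda \{\,,\,\}_0$: the Casimir equation, expanded in $\lambda$, produces the entire Lenard chain simultaneously, and the identities $\{f_k,f_\ell\}_R = \{f_k,f_\ell\}_0 = 0$ for all $k,\ell$ then follow from compatibility with no circularity. The genuine difficulty, and where I expect the real work to lie, is showing that arbitrary $X_1, X_2 \in \Vh_N^1$ extend (perhaps modulo $\ker\omega_2$) to such a pencil Casimir; equivalently, this is a statement about the stabilization of the image filtration $\Vh_N^1 = \mathrm{Im}(N) \supset \mathrm{Im}(N^2) \supset \cdots$ for the recursion operator $N = \omega_1^{-1}\omega_2$ on a symplectic leaf, and it is here that the conjecture may require either additional hypotheses on the seeds $X_1, X_2$ or a finer cohomological analysis of the Nijenhuis tensor induced by compatibility.
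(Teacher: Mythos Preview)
The statement is labelled a \emph{Conjecture} in the paper and appears in the closing discussion, followed immediately by the remark that ``the study of this new hierarchy is in progress.'' The paper offers no proof, so there is nothing to compare your proposal against.

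Your outline is the natural Lenard--Magri strategy, and the translation of the recursion to the moduli space via Theorems~\ref{omega2br} and~\ref{redth} is sound, as is the observation that the $\omega_1$-Hamiltonians $f_1,f_2$ of the kernel generators $X^1,X^2$ are Casimirs of $\{\,,\,\}_0$. You also correctly isolate the essential obstruction: invoking $\{f_{k+2},f_j\}_R=\{f_{k+4},f_j\}_0$ presupposes that $X_{k+4}$ exists, i.e.\ that $X_{k+2}\in\Vh_N^1$, which is precisely what is to be shown. It is worth noting that the recursion here jumps by two, so it splits into an odd and an even chain; while within-chain involution (e.g.\ $\{f_3,f_1\}_R=0$) follows from the usual telescoping argument, the membership condition $X_k\in\Vh_N^1$ also requires the \emph{cross}-chain pairings $\{f_{2\ell+1},f_2\}_R$ and $\{f_{2\ell},f_1\}_R$ to vanish, and these do not reduce to zero by the recursion algebra alone. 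This cross-chain coupling is the heart of why the statement is still a conjecture, and your pencil-Casimir suggestion is the standard way one would try to resolve it.

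One further caveat: you assert that any $X_k\in\Vh_N^1$, restricted to a leaf where $\omega_1$ is symplectic, is the $\omega_1$-Hamiltonian vector field of some $f_k$. Nondegeneracy of $\omega_1$ on the leaf gives injectivity of $f\mapsto X^f$, not surjectivity; exactness of $\iota_{X_k}\omega_1$ requires at least $L_{X_k}\omega_1=0$, which is not part of the hypothesis for arbitrary seeds. For the intended application $X_k=X^k$, $k=1,2$, this is harmless since their $\omega_1$-Hamiltonians are computed explicitly in Section~\ref{kernelsection}, but for general $X_1,X_2\in\Vh_N^1$ a proof may need to work directly with the 2-forms rather than pass through Hamiltonians.
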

Note that there are non-vacuous choices for the first two vector fields of the hierarchy generated by the conjecture. Choosing $X_1=X^1$ and $X_2=X^2$, the existence of $X_3$ and $X_4$ is guaranteed by Lemma~\ref{startrec}, and $X_1, X_2 \in \Vh_N^1$ since $\omega_1(X_1, X_2)=0$ by Theorem~\ref{X12commute}.  The next two vector fields $X_3, X_4$ induce evolutions on the geometric invariants as outlined in Corollary~\ref{a-p}. If Conjecture~\ref{hier} holds, such evolutions are guaranteed to be Hamiltonian because the Poisson brackets are compatible and $\{\ ,\, \}_R$ is symplectic; in other words, the $\g_1$-components $\p_3, \p_4$ of the frame evolutions $Q^{X_3}, Q^{X_4}$ must be the gradients of (not necessarily unique) functions $f_3, f_4$ generated by one step of a Lenard-type scheme. It follows from the main Theorem~\ref{compatibility} that the Hamiltonians $f_1, f_2$ associated with $X_1, X_2$\textemdash explicitly constructed in Proposition~\ref{HamforKern2VF}\textemdash commute with respect to the reduced bracket $\{\ ,\, \}_R$.  It is immediate to show that $f_3, f_4$ commute with each of $f_1$ and $f_2$ (e.g.~$\{f_3 ,f_2\}_R=\omega_1(X_3, X_2)=0$, since $X_3 \in \Vh_N^1$), and with each other. Indeed $\{f_3, f_4\}_R = \omega_1(X_3, X_4) = \omega_2(X_5, X_4) = \omega_1(X_5, X_2) = 0$ since the conjecture would ensure that $X_5 \in \Vh_N^1$. The recursive process would lead to a hierarchy on commuting geometric vector fields and associated hierarchy of commuting hamiltonian flows on the space on geometric invariants.  We remark that the latter differs from the hierarchy of flows conjectured in \cite{MW}. The study of this potentially new hierarchy is in progress.

\section*{Acknowledgements} We wish to thank Anton Izosimov for helpful suggestions. We gratefully acknowledge the support of the National Science Foundation:\, Gloria Mar\'i Beffa through grant DMS-0804541, and Annalisa Calini through grant DMS-1109017.

\clearpage

\section*{Corrigenda: Integrable Evolutions of Twisted Polygons in Centro-Affine $\mathbb{R}^m$}

In the paper \emph{Integrable Evolutions of Twisted Polygons in Centro-Affine $\mathbb{R}^m$}~\cite{CM}, the statement of Theorem 5.5  is incorrect and its proof flawed. While the vector fields $X^1$ and $X^2$ are in the kernel of $\omega_1$, they do not generate it. Fortunately, this theorem is at the end of the article and is not used in other parts of the work; so no other result is affected by it. 

The correct statement, together with its proof, appears in Theorem 7.2 of \cite{MB}, where the kernel is described in terms of fractional powers of difference operators. Using this different approach, it is shown that there are $m-1$ independent vector fields in the kernel of $\omega_1$, making the dimension of the kernel at least $m-1$.

\end{document}